\newtheorem{theorem}{Theorem}
\newtheorem{proposition}{Proposition}
\newtheorem{lemma}{Lemma}
\newtheorem{definition}{Definition}
\newtheorem{assumption}{Assumption}
\newtheorem{remark}{Remark}
\newcommand{\Var}{\mathrm{Var}}
\newcommand{\Hc}{\mathcal{H}}
\newcommand{\Pc}{\mathcal{P}}
\newcommand{\Tc}{\mathcal{T}}
\newcommand{\Ec}{\mathcal{E}}
\newcommand{\Ic}{\mathcal{I}}
\newcommand{\Ex}{\mathbb{E}}
\newcommand{\Rb}{\mathbb{R}}
\newcommand{\Nb}{\mathbb{N}}
\newcommand{\Pb}{\mathbb{P}}
\newcommand{\bU}{\boldsymbol{U}}
\newcommand{\bV}{\boldsymbol{V}}
\newcommand{\bX}{\boldsymbol{X}}
\newcommand{\bZ}{\boldsymbol{Z}}
\newcommand{\bY}{\boldsymbol{Y}}
\newcommand{\bE}{\boldsymbol{E}}
\newcommand{\bM}{\boldsymbol{M}}
\newcommand{\bA}{\boldsymbol{A}}
\newcommand{\bB}{\boldsymbol{B}}
\newcommand{\bI}{\boldsymbol{I}}
\newcommand{\bS}{\boldsymbol{S}}
\newcommand{\hbeta}{\widehat{\beta}}
\newcommand{\hw}{\widehat{w}}
\newcommand{\htheta}{\widehat{\theta}}
\newcommand{\hs}{\widehat{s}}
\newcommand{\hv}{\widehat{v}}
\newcommand{\hu}{\widehat{u}}
\newcommand{\hV}{\widehat{V}}
\newcommand{\hU}{\widehat{U}}
\newcommand{\cI}{\mathcal{I}}
\newcommand{\cE}{\mathcal{E}}
\newcommand{\bhU}{\widehat{\bU}}
\newcommand{\bhS}{\widehat{\bS}}
\newcommand{\bhV}{\widehat{\bV}}
\newcommand{\pre}{\text{pre}}
\newcommand{\epre}{\emph{pre}}
\newcommand{\post}{\text{post}}
\newcommand{\Dc}{\mathcal{D}}
\newcommand{\Wc}{\mathcal{W}}
\newcommand\independent{\protect\mathpalette{\protect\independenT}{\perp}}
\def\independenT#1#2{\mathrel{\rlap{$#1#2$}\mkern2mu{#1#2}}}
\newcommand{\espan}{\emph{span}}
\newcommand{\txtop}{\text{op}}
\newcommand{\eop}{\emph{op}}
\newcommand{\tw}{\widetilde{w}}
\newcommand{\numdonors}{N_d} %N^{(d)}
\newcommand{\hsigma}{\widehat{\sigma}}
\newcommand{\cLF}{\mathcal{LF}} 
\newcommand{\Ynpre}{Y_{\pre, n}}
\newcommand{\Yipre}{Y_{\pre, i}}
\newcommand{\bYpred}{\bY_{\pre,  \Ic^{(d)}}}
\newcommand{\bMpred}{\bM_{\pre,  \Ic^{(d)}}}
\newcommand{\bYpostd}{\bY_{\post,  \Ic^{(d)}}}
\newcommand{\bYepred}{\bY_{\epre,  \Ic^{(d)}}}
\newcommand{\bMepred}{\bM_{\epre,  \Ic^{(d)}}}
\newcommand{\bYpredrpre}{\bY^{r_{\pre}}_{\pre, \Ic^{(d)}}}
\newcommand{\bYepredrpre}{\bY^{r_{\epre}}_{\epre, \Ic^{(d)}}}
\newcommand{\SI}{\textsf{SI}}
\newcommand{\SC}{\textsf{SC}} 
\newcommand{\PCR}{\texttt{PCR}}
\newcommand{\xmark}{\text{\ding{55}}} 
\newcommand{\cmark}{\text{\ding{51}}}
\newcommand{\Ypostd}{Y_{\post, \cI^{(d)}}}
\newcommand{\bPhi}{\boldsymbol{\Phi}}
\title{\scshape Synthetic Interventions}
\author[1]{Anish Agarwal}
\author[2]{Devavrat Shah}
\author[3]{Dennis Shen\textsuperscript{*}}
\affil[1]{\small Department of Industrial Engineering \& Operations Research, Columbia University}
\affil[2]{\small Department of Electrical Engineering \& Computer Science, MIT}
\affil[3]{\small Department of Data Sciences \& Operations, USC}
\date{\today}
\begin{document}

\maketitle
%****************
% ABSTRACT
%****************

The synthetic controls (\SC) methodology is a prominent tool for policy evaluation in panel data applications. 
Researchers commonly justify the \SC~framework with a low-rank matrix factor model that assumes the potential outcomes are described by low-dimensional unit and time specific latent factors. 
In the recent work of \cite{abadie_survey}, one of the pioneering authors of the \SC~method posed the question of how the \SC~framework can be extended to multiple treatments. 
This article offers one resolution to this open question that we call synthetic interventions (\SI). 
Fundamental to the \SI~framework is a low-rank tensor factor model, which extends the matrix factor model by including a latent factorization over treatments. 
Under this model, we propose a generalization of the standard \SC-based estimators. 
We prove the consistency for one instantiation of our approach and  provide conditions under which it is asymptotically normal. 
Moreover, we conduct a representative simulation to study its prediction performance and  revisit the canonical \SC~case study of \cite{abadie2} on the impact of anti-tobacco legislations by exploring related questions not previously investigated.

% keywords
\vspace{12pt}
\noindent
{\small 
\textbf{\textit{Keywords:}}
synthetic controls, tensor completion, tensor factor model, principal component regression, covariate shift
}

\renewcommand{\thefootnote}{\fnsymbol{footnote}}
\footnotetext[1]{We sincerely thank Alberto Abadie, Abdullah Alomar, Peter Bickel, Victor Chernozhukov, Romain Cosson, Peng Ding, Esther Duflo, Avi Feller, Guido Imbens, Anna Mikusheva, Jasjeet Sekhon, Rahul Singh, and Bin Yu for invaluable feedback, guidance, and discussion. We also thank members within the MIT Economics department and the Laboratory for Information and Decisions Systems (LIDS) for useful discussions.}
\renewcommand{\thefootnote}{\arabic{footnote}}

\newpage 
\tableofcontents

\newpage
%
% INTRODUCTION
\section{Introduction}\label{sec:intro}
In 1988, California passed the first large-scale anti-tobacco program in the United States known as Proposition 99.
Early reports of the success of Proposition 99 sparked a wave of anti-tobacco measures across the country: from 1989--2000, four states (Arizona, Massachusetts, Oregon, and Florida) introduced similar anti-tobacco  programs and seven states (Alaska, Hawaii, Maryland, Michigan, New Jersey, New York, Washington) raised their state cigarette taxes by 50 cents or more. 
The remaining 38 states (e.g., Kansas or Virginia) kept their statewide status quos. 
To isolate the impact of these anti-tobacco measures, researchers sought an answer to the following question: 
%{\em what would have happened if a...} 
%
\begin{center}
{\bf Q1}: {\em what would have happened if a ``treated'' state that either imposed an anti-tobacco program or raised taxes had kept their statewide status quo?} 
\end{center} 

The seminal papers of \cite{abadie1, abadie2} introduced the synthetic controls (\SC) framework that provides an elegant approach to tackle this challenge.
In the case of California, the \SC~method constructs a synthetic California from a weighted composition of ``control'' states that did not impose any anti-tobacco measures to estimate California's counterfactual cigarette sales (outcome variable of interest) in the absence of Proposition 99.
The weights are chosen such that the trajectory of cigarette sales of the synthetic California mirrors that of the observed California prior to the passage of Proposition 99 (``pre-treatment'' period). 
The difference between the observed California with Proposition 99 and the synthetic California without Proposition 99 during the ``post-treatment'' period of 1989--2000 sheds insight into the causal effect of the initiative.
Following this procedure, \cite{abadie2} found that tobacco consumption fell markedly in California under Proposition 99 compared to its synthetic counterpart.  
Given its immense popularity, the \SC~framework is often regarded as one of the most important innovations in the policy evaluation literature \cite{athey}.

However, even though the \SC~framework offers a resolution to the question above, several related questions remain: {\em what would have happened if a...}
\begin{center}
	{\bf Q2}: {\em treated state that imposed an anti-tobacco program had raised taxes?}
	\\ 
	{\bf Q3}: {\em treated state that raised taxes had imposed an anti-tobacco program?}
	\\ 
	{\bf Q4}: {\em control state that kept their status quo had raised taxes?}
	\\ 
	{\bf Q5}: {\em control state that kept their status quo had imposed an anti-tobacco program?}
\end{center} 
The \SC~framework, in its current form, is not equipped to explore these set of questions. 
To see this, consider how \SC~would address Q2. 
Following Q1, the \SC~method constructs a synthetic California from a weighted composition of {\em treated} states that raised taxes during the post-treatment period, states such as Kansas and Virginia. 
These weights, as in Q1, would be learned during the pre-treatment period whence all outcomes are observed under {\em control}.
In contrast to Q1, however, these weights would then be applied on the {\em treated} outcomes associated with those states that raised taxes during the post-treatment period. 
Similar approaches can be taken to answer Q3-Q5, e.g., the solution to Q4 would mimic that of Q2 with the control states in place of the anti-tobacco program states. 

The juxtaposition of Q1 and Q2-Q5 reveals their fundamental difference:
whereas the \SC~solution to Q1 applies the weights learnt from the pre-treatment outcomes observed under {\em control} to post-treatment outcomes also observed under {\em control}, the \SC~solutions to Q2-Q5 apply the weights learnt from the pre-treatment outcomes observed under {\em control} to post-treatment outcomes observed under {\em treatment}. 
Restated differently, Q1 inherits ``data symmetry'' between the pre- and post-treatment periods while Q2-Q5 faces ``data asymmetry'' between the pre- and post-treatment periods. 
The stark contrast in data usage between Q1 versus Q2-Q5 begs the question: can the \SC~framework be extended to the setting of multiple treatments such that the same data used to answer questions a la Q1 can also be used to answer questions a la Q2-Q5? 
More formally, can a set of weights learned from pre-treatment outcomes observed under one setting (e.g., control) be applied toward post-treatment outcomes observed under another setting (e.g., treatment)? 
This is posed as an open question by one of the pioneering authors of the \SC~framework \cite{abadie_survey}. 

%In \cite{abadie_survey}, one of the founding fathers of the \SC~framework posed this as an open question. 
%Notably, this is an open question posed by one of the founding fathers of the \SC~framework in \cite{abadie_survey}. 

% CONTRIBUTIONS
\subsection{Contributions}
In this article, we propose the synthetic interventions (\SI) framework, which provides one resolution to the question asked by \cite{abadie_survey}. 
Building on the canonical matrix factor model that is widely used to justify the \SC~framework, we consider a {\em tensor} factor model, which serves as the bedrock for the \SI~framework. 

Intuitively, the matrix factor model postulates that the potential cigarette sales of California under their statewide status quo (i.e., absence of any anti-tobacco measure) in the year 1993, for instance, can be characterized by latent factors that describe California and the year 1993. 
Notably, only unit (e.g., California) and time (e.g., year 1993) specific latent factors are required since the \SC~framework is solely concerned with outcomes observed under a single setting, typically control. 
The tensor factor model, on the other hand, postulates that the same potential outcome is characterized by not only the latent factors of California and the year 1993, but also the latent factor associated with the status quo (control). 
More generally, the tensor factor model posits that each potential outcome is described by three latent factors tailored to the unit, time period, and treatment of interest, where control can be viewed as the ``default'' treatment. 
With the added assumption of separate latent factors for treatments, the \SI~framework can justifiably exploit the same data pattern considered in the \SC~framework yet provide answers to far more questions.  

Under the tensor factor model, we establish the identification of several causal estimands related to questions such as Q2-Q5. 
Moreover, we propose a generalization of the standard \SC-based estimators to answer said questions of interest. 
We analyze one instantiation of this methodology and prove its consistency. 
In Appendix~\ref{sec:proof.normality}, we provide conditions under which our estimation strategy is asymptotically normal and show that a minor modification to the instantiation analyzed in the main body satisfies those conditions and thus enables inference. 
Notably, our results remain valid in the presence of hidden confounders, provided these confounders can be explained by the latent factors associated with our tensor formulation. 
We corroborate our theoretical results with simulations and revisit the popular California Proposition 99 study of \cite{abadie2} by examining the other treated states; 
in the latter, we find that (i) anti-tobacco programs and raised taxes had similar effects and (ii) both measures led to a consequential decrease in tobacco consumption relative to the status quo, which reaffirms the conclusion drawn in \cite{abadie2} for California.

% ORGANIZATION 
\subsection{Organization of Paper}\label{sec:organization} 
Section~\ref{sec:setup} presents the \SI~framework and describes the causal estimand. 
Section~\ref{sec:algo} proposes an extension of the \SC-based estimators for our setting. 
Section~\ref{sec:formal} analyzes one instantiation of our estimator and proves its consistency. 
Section~\ref{sec:simulations} reinforces our theoretical results with a simulation study. 
Section~\ref{sec:empirics} revisits the classical California Proposition 99 study of \cite{abadie2}. 
Section~\ref{sec:conclusion} summarizes this article. 
Additional discussions and proofs are relegated to the appendix.

\subsection{Notation}\label{sec:notation}
For any positive integer $a$, let $[a] = \{1, \dots, a\}$ and $[a]_0 = \{0, \dots, a-1\}$. 
For a vector $v \in \Rb^a$, let $\|v\|_p$ denote its 
$\ell_p$-norm. 
We define the inner product between vectors $v, x \in \Rb^a$ as 
$\langle v, x \rangle = v^\top x = \sum_{\ell=1}^a v_\ell x_\ell$. 
For a matrix $\bA \in \Rb^{a \times b}$, we denote its Frobenius norm as $\|\bA\|_F$. 
Let $\| \cdot \|_{\psi_2}$ denote the Orlicz norm. 
Let $O_p$ and $o_p$ denote the probabilistic versions of the deterministic big-$O$ and little-$o$ notations. 
Denote $\bI$ as the identity matrix and $(\cdot)^{\dagger}$ as the Moore-Penrose pseudoinverse.

\section{Synthetic Interventions Framework}\label{sec:setup}
We consider the canonical data format within the \SC~literature known as {\em panel data}, a collection of observations on multiple units over multiple time periods. 
Formally, we observe $N$ units over $T$ time periods, indexed by $i \in [N]$ and $t \in [T]$, respectively. 
As an example, the tobacco case study tracks the cigarette sales of $N = 50$ states over $T = 31$ years. 
Following the potential outcomes framework of \cite{neyman, rubin}, we philosophize that in each time period $t$, each unit $i$ is characterized by $D$ potential outcomes denoted as $Y_{ti}^{(d)} \in \Rb$ for $d \in [D]_0$; 
consistent with standard notation, we reserve $d = 0$ for control. 
In the context of the tobacco study, the potential outcomes framework asserts that three versions of each state simultaneously exist each year: one that keeps the statewide status quo ($d=0$), one that imposes an anti-tobacco program ($d=1$), and one that raises taxes ($d=2$). 
Of course, each state can only adopt one policy in any year and hence, researchers can never observe all possible potential outcomes---this is the fundamental challenge of causal inference.

% TENSOR 
\subsection{Counterfactual Estimation as Tensor Completion} \label{sec:ci.tc}
%
%As previewed in Section~\ref{sec:intro}, 
The synthetic interventions (\SI) framework views the problem of counterfactual estimation through the lens of an order-three tensor. 
%We present the synthetic interventions (\SI) framework, which views the problem of counterfactual estimation through the lens of an order-three tensor. 
%
Specifically, we encode the potential outcomes into a $T \times N \times D$ tensor, $\bY^* = [Y_{ti}^{(d)}]$, whose dimensions correspond to time, units, and treatments. 
% SEE FIGURE
%
Accordingly, the $(t, i, d)$-th entry of $\bY^*$ is the potential outcome of unit $i$ at time $t$ under treatment $d$. 
If $\bY^*$ is known to the researcher, then any causal quantity of interest can be calculated by accessing the relevant entries of $\bY^*$. 
The challenge, however, is that the researcher can only observe a sparse subset of $\bY^*$. 

In the classical \SC~setup, all $N$ units are observed under control for the first $T_0$ time periods. 
We refer to this duration as the pre-intervention or pre-treatment period. 
For the remaining $T_1 = T - T_0$ time periods, known as the post-intervention or post-treatment period, each unit is assigned to one of the $D$ treatments. 
Let $\Ic^{(d)} \subset [N]$ denote the set of units that are assigned to treatment $d$ during the post-treatment period with $N_d = |\Ic^{(d)}|$. 
In the tobacco study, $N_0 = 38$ states kept their statewide status quo, $N_1=5$ states imposed a tobacco control program, and $N_2=7$ states raised taxes during the post-treatment period. 

We denote $Y_{tid} \in \{\Rb \cup \star\}$, where $\star$ represents a missing entry, as the recorded outcome for unit $i$ at time $t$ under treatment $d$, and store these outcomes into a $T \times N \times D$ tensor $\bY = [Y_{tid}]$. 
The following assumption formalizes our observation pattern in $\bY$ and establishes its connection with $\bY^*$. 

% ASSUMPTION: SUTVA
\begin{assumption}[Stable Unit Treatment Value Assumption]\label{assumption:sutva}
We observe 
\begin{itemize}
	\item[(i)] Pre-treatment period: $Y_{ti0} = Y_{ti}^{(0)}$ for $t \le T_0$ and $i \in [N]$. 
	
	\item[(ii)] Post-treatment period: $Y_{tid} = Y_{ti}^{(d)}$ for $t > T_0$, $i \in \Ic^{(d)}$, and $d \in [D]_0$. 
\end{itemize} 
For all other entries of $\bY$, we have $Y_{tnd} = \star$.
\end{assumption} 
In view of Assumption~\ref{assumption:sutva}, estimating counterfactual potential outcomes is equivalent to imputing missing entries in $\bY$; in the machine learning community, the latter problem is known as {\em tensor completion}. 
For a visualization of $\bY^*$ and $\bY$, see Figure~\ref{fig:tensor}. 
We remark that Assumption~\ref{assumption:sutva} implicitly rules out spillover (network) effects between units \cite{imbens_rubin_2015}. 

\begin{figure}[t]
	\centering 
	\begin{subfigure}[b]{0.33\textwidth}
		\centering 
		\includegraphics[width=\linewidth]{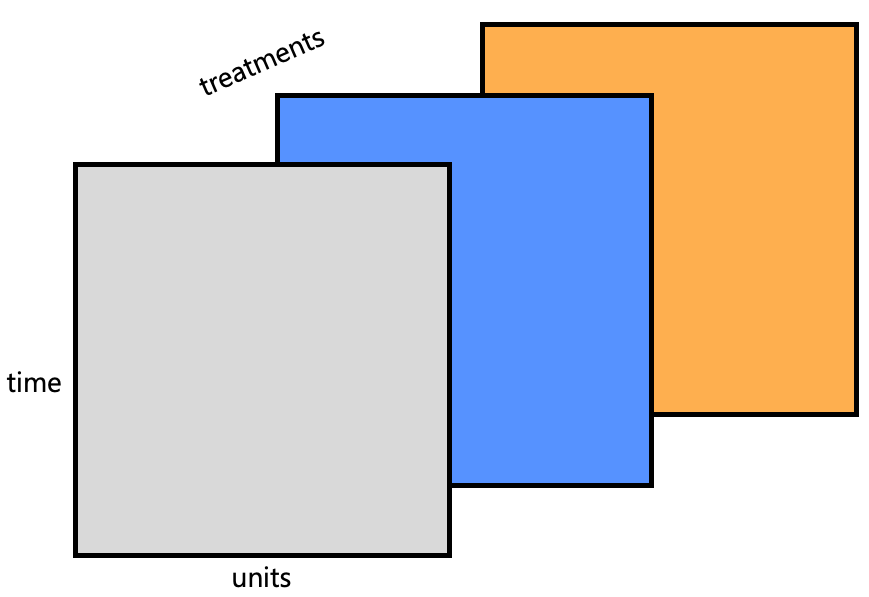}
		\caption{Potential outcomes tensor, $\bY^*$.} 
		\label{fig:tensor.ideal} 
	\end{subfigure} 
	\qquad \qquad
	\begin{subfigure}[b]{0.33\textwidth}
		\centering 
		\includegraphics[width=\linewidth]{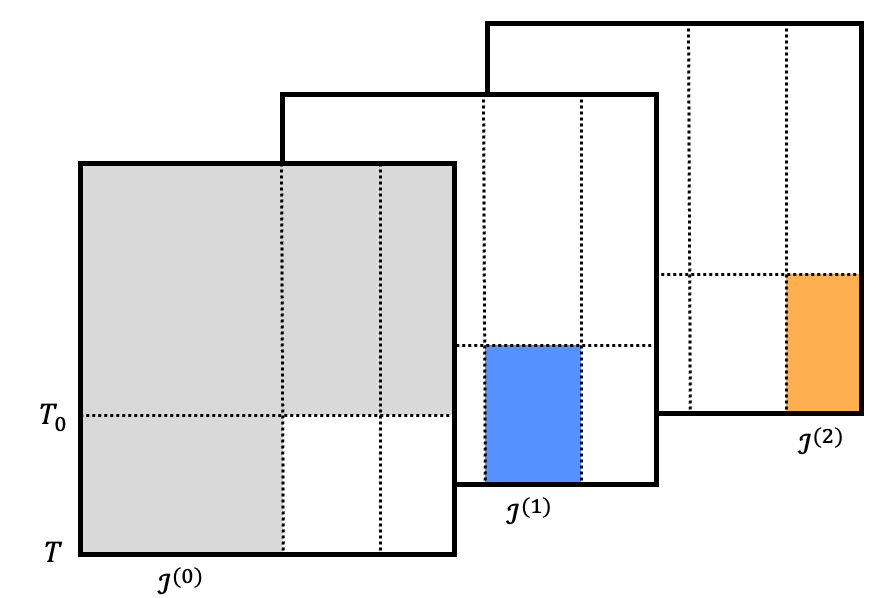}
		\caption{Observed tensor, $\bY$.} 
		\label{fig:tensor.obs} 
	\end{subfigure} 
	\caption{Figure~\ref{fig:tensor.ideal} visualizes the potential outcomes tensor, $\bY^*$, while Figure~\ref{fig:tensor.obs} visualizes the observed tensor, $\bY$, as per Assumption~\ref{assumption:sutva} and tailored to our tobacco study. 
	The colored blocks indicate observed entries with the color indexing the treatment: status quo in gray, anti-tobacco programs in blue, and raised taxes in orange. 
	The white blocks indicate missing entries.}
	\label{fig:tensor} 
\end{figure}

% Tensor factor model
\subsection{Tensor Factor Model}\label{ssec:tfm}
To enable faithful recovery of the missing entries in $\bY$, we impose structure on $\bY^*$. %which correspond to the counterfactual potential outcomes, 
We motivate our model by first revisiting the dominant idea behind the \SC~framework---the low-rank matrix factor model, also known as the interactive fixed effects model \cite{bai03}. 
Recall that the \SC~framework is solely focused on a single treatment setting, most commonly control. 
Restated through our tensor formulation, the \SC~framework aims to recover the single tensor slice (matrix) corresponding to control. 
The potential outcomes under control are posited to obey the following: for all $t \in [T]$ and $i \in [N]$, we have 
\begin{align} \label{eq:matrix.fm} 
	Y_{ti}^{(0)} = \langle u_t, v_i \rangle + \varepsilon_{ti} = \sum_{\ell = 1}^r u_{t\ell} v_{i\ell} + \varepsilon_{ti}, 
\end{align} 
where (i) $u_t \in \Rb^r$ and $v_t \in \Rb^r$ are latent factors specific to time $t$ and unit $i$, respectively, with $r$ sufficiently small relative to $N_0$ and $T_0$, and (ii) $\varepsilon_{ti} \in \Rb$ represents the (mean-zero) idiosyncratic shock. 
The key structure in \eqref{eq:matrix.fm} is that the latent unit factors are {\em invariant} across time. 
As such, the relationship between units in the pre-treatment period continues to hold in the post-treatment period. 
In turn, researchers can learn a set of weights from pre-treatment outcomes observed under one setting (e.g., status quo) and apply them toward post-treatment outcomes observed under the {\em same} setting (e.g., status quo) in order to answer questions a la Q1 of Section~\ref{sec:intro}. 
The \SC~framework is thus justified under \eqref{eq:matrix.fm} \cite{abadie_survey}. 

The tensor factor model builds upon \eqref{eq:matrix.fm} by introducing a latent factorization across treatments: for all $t \in [T]$, $i \in [N]$, and $d \in [D]_0$, we have 
\begin{align} \label{eq:tensor.fm}
	Y_{ti}^{(d)} = \sum_{\ell = 1}^r u_{t\ell} v_{i\ell} \lambda_{d\ell} + \varepsilon_{ti}^{(d)}, 
\end{align} 
where (i) $\lambda_d \in \Rb^r$ is a latent factor specific to treatment $d$ with $r$ sufficiently small relative to $N_d$ and $T_0$, and (ii) $\varepsilon_{ti}^{(d)}$ is the (mean-zero) idiosyncratic shock. 
Under the tensor factor model, the latent unit factors are invariant across both time {\em and treatments}. 
As a result, the relationship between units in the pre-treatment period under one treatment continues to hold in the post-treatment period under {\em another treatment}. 
Translated to practice, researchers can learn a set of weights from pre-treatment outcomes observed under one setting (e.g., status quo) and apply them toward post-treatment outcomes observed under {\em another setting} (e.g., raised taxes) to answer questions a la Q2-Q5 of Section~\ref{sec:intro}. 
The \SI~framework, therefore, is justified under \eqref{eq:tensor.fm}. 
We highlight that the low-rank tensor factor model stands as the de-facto assumption in the tensor completion literature \cite{tensor_missing, gandy, anandkumar2014tensor, barak2016noisy}. 

With this concept in mind, we present our primary structural assumption on $\bY^*$. 

% ASSUMPTION: TFM
\begin{assumption}[Tensor factor model] \label{assumption:form} 
Each entry in $\bY^*$ satisfies the following: for all $(t,i,d)$, we have 
\begin{align} \label{eq:form.0} 
Y_{ti}^{(d)} = \langle u^{(d)}_t,  v_i \rangle + \varepsilon_{ti}^{(d)} = \sum_{\ell=1}^r u^{(d)}_{t\ell} v_{i\ell} + \varepsilon_{ti}^{(d)}, 
\end{align} 
where $u^{(d)}_t \in \Rb^r$ is a latent factor specific to $(t, d)$, $v_i \in \Rb^r$ is a latent factor specific to $i$, and $\varepsilon_{ti}^{(d)} \in \Rb$ is the (mean-zero) idiosyncratic shock. 
\end{assumption} 

Note that \eqref{eq:form.0} is implied by \eqref{eq:tensor.fm} since we can express $u_{t\ell}^{(d)} = u_{t\ell} \lambda_{d\ell}$, where $u_{t\ell}$ and $\lambda_{d\ell}$ are defined as in \eqref{eq:tensor.fm}. 
Since the invariance of the unit factors is the critical characteristic needed to validate the \SI~framework, we operate with the weaker condition stated in \eqref{eq:form.0} as opposed to \eqref{eq:tensor.fm}. 
We acknowledge that \eqref{eq:form.0} adds to the traditional matrix factor model assumption within the \SC~framework by imposing that each slice of $\bY^*$ satisfies its own matrix factor model and that the slices share an invariant set of latent unit factors. %are connected through an invariant set of latent unit factors. 

% Identification
\subsection{Causal Estimand} 
Anchoring on the tensor factor model stated in Assumption~\ref{assumption:form}, we are ready to define our causal estimand.   
Formally, for a given unit $i$ and treatment $d$, we are interested in estimating  
\begin{align} \label{eq:causal.est}
	\theta_i^{(d)} = \frac{1}{T_1} \sum_{t > T_0} \Ex \left[ Y_{ti}^{(d)} | \{u_t^{(d)}, v_i: t > T_0\} \right].
\end{align} 
In words, \eqref{eq:causal.est} represents the average expected potential outcome for unit $i$ under treatment $d$ during the post-treatment period. 
Since \eqref{eq:causal.est} conditions on the latent factors, its expectation is taken over the measurement shocks. 
Returning to Q2 in Section~\ref{sec:intro} with California as the unit of interest, \eqref{eq:causal.est} translates as California's average expected cigarette sales from 1989--2000 had the state raised taxes instead of imposing an anti-tobacco program. 

We take this opportunity to underscore once more the contrast between the \SC~and \SI~objectives: 
whereas the \SC~framework aims to recover $\theta_i^{(0)}$, the \SI~framework aims to recover $\theta_i^{(d)}$ for any treatment $d$. 
Put another way, the \SC~objective is to impute the first tensor slice in Figure~\ref{fig:tensor.obs} while the \SI~framework looks to impute the remaining slices during the post-treatment period.  
Next, we discuss how our causal estimand can be identified. 

% additional assumptions 
\subsubsection{Additional Assumptions for Identification}
Let $\Dc = \{ (t, i, d) : ~Y_{tid} \neq \star \}$ denote the treatment assignments. 
Ideally, $\Dc$ is independent of $\bY^*$ as in the case of randomized experiments, the gold standard mechanism for learning causal effects. 
With observational data, however, $\Dc$ is often dependent on $\bY^*$---this is known as {\em confounding}. 
For example, a state policy-maker may favor an anti-tobacco program as they expect their statewide tobacco consumption to fall more rapidly under the program than under the continued status quo or raised taxes. 

We consider a setting where the confounders can be latent but their impact on $\Dc$ and $\bY^*$ is mediated through the latent factors, $\cLF = \{ u^{(d)}_t, v_i: t \in [T], i \in [N], d \in [D]_0 \}$. 
%
%That is, $\Dc$ and $\bY^*$ are conditionally independent given $\cLF$. 

% ASSUMPTION: SELECTION ON LATENT FACTORS
\begin{assumption} [Selection on latent factors] \label{assumption:conditional_mean_zero}
For all $(t,i,d)$, we have $\Ex[\varepsilon^{(d)}_{ti}| \cLF] = 0$ and $\varepsilon^{(d)}_{ti} \independent \Dc ~|~ \cLF$. 
\end{assumption} 
Strictly speaking, we only require $\Ex[\varepsilon^{(d)}_{tn}| \cLF, \Dc]  = 0$, which is known as conditional mean independence. 
We state it as $\varepsilon^{(d)}_{tn} \independent \Dc ~|~ \cLF$ to increase the interpretability of our conditional exogeneity assumption.
Collectively, Assumptions \ref{assumption:form} and \ref{assumption:conditional_mean_zero} imply that 
$\bY^* \independent \Dc ~|~ \cLF$. 
Hence, Assumption~\ref{assumption:conditional_mean_zero} can be interpreted as {\em selection on latent factors}, which is analogous to the classical {\em selection on observables} assumption. 
That is, Assumption~\ref{assumption:conditional_mean_zero} states that the latent factors can act as our latent confounders. 
Similar conditional independence assumptions have been utilized in \cite{athey1}, \cite{kallus2018causal}, and \cite{asc}.

To state our next assumption, let  $\cE = \{\cLF, \Dc\}$. % refer to the collection of latent factors and treatment assignments. 

% ASSUMPTION: UNITS 
\begin{assumption} [Latent unit factors] \label{assumption:linear}
Given $(i,d)$, we have $v_i \in \espan(\{v_j: j \in \Ic^{(d)}\})$, conditional on $\cE$. 
Specifically, there exists weights $w^{(i,d)} \in \Rb^{\numdonors}$ such that $v_i = \sum_{j \in \Ic^{(d)}} w^{(i,d)}_j  v_j$. 
\end{assumption} 
Assumption~\ref{assumption:linear} states that, within the space of latent unit factors, unit $i$ can be reconstructed as a weighted average of units that adopt treatment $d$ during the post-treatment period. 
The critical feature to observe is that $w^{(i,d)}$ is time- and treatment-invariant. 
%
%That is, $w^{(i,d)}$ does not change across time or treatments. 
%
Returning to our tobacco study, if Assumption~\ref{assumption:linear} holds for California under raised taxes ($d=2$), then California's latent factor can be expressed as a weighted sum of latent factors for the seven states that raised taxes from 1989--2000. 
If Assumption~\ref{assumption:linear} were to also hold between California under the status quo ($d=0$), then California's latent factor can equivalently be written as another weighted sum of latent factors associated with the 38 states that kept their statewide status quo from 1989--2000. 
Due to its invariance property, the weights in both settings remain static regardless of the time period or treatment. 

When the tensor factor model stated in Assumption~\ref{assumption:form} admits a low-dimensional structure, i.e., $r$ as defined in \eqref{eq:form.0} is sufficiently small (formalized in Theorem~\ref{thm:consistency}), Assumption~\ref{assumption:linear} effectively follows as an immediate consequence. 
To see this, let $\bV_{\Ic^{(d)}} = [v_j: j \in \Ic^{(d)}] \in \Rb^{N_d \times r}$. 
Note that if $\text{rank}(\bV_{\Ic^{(d)}}) = r \ll N_d$, then Assumption~\ref{assumption:linear} becomes a direct byproduct.  
In other words, a low-dimensional tensor factor model implies that the latent unit factors are linearly dependent, which is the structure exploited in Assumption~\ref{assumption:linear}. 
From a practical standpoint, Assumption~\ref{assumption:linear} requires that sufficiently many units are assigned to treatment $d$ and that their latent factors are collectively ``diverse'' enough so that their span includes the latent factor of unit $i$. 
In this view, Assumption~\ref{assumption:linear} is analogous to the classical {\em common support} assumption. 
We finish by remarking that low-dimensional factor models have been shown to be a ubiquitous phenomenon in practice and are justified under many natural data generating processes \cite{Chatterjee15, Udell2017NiceLV, udell2018big, xu2017rates}. 

%{\color{red} Consequence of tensor factor model with $r$ small} 

% IDENTIFICATION 
\subsubsection{Identification Result} 
With our assumptions stated, we establish our identification result for \eqref{eq:causal.est}. 

% THEOREM: IDENTIFICATION 
\begin{theorem}\label{thm:identification}
Given $(i,d)$, let Assumptions \ref{assumption:sutva} to \ref{assumption:linear} hold.
Then we have 
\begin{align}
\Ex [ Y^{(d)}_{ti} \big| u^{(d)}_t, v_i ] &= \sum_{j \in \Ic^{(d)}} w_j^{(i,d)} \cdot \Ex\left[ Y_{tjd} | \cE \right] \quad \text{for all}~ t \in [T], \label{eq:identification_strong}
\\ \theta^{(d)}_i &= \frac{1}{T_1} \sum_{t > T_0} \sum_{j \in \Ic^{(d)}} w_j^{(i,d)} \cdot \Ex\left[ Y_{tjd} | \cE \right]. \label{eq:identification}
\end{align}
%s
\end{theorem}
Theorem~\ref{thm:identification} involves two claims: \eqref{eq:identification} states that our causal estimand in \eqref{eq:causal.est}, which is defined as an average over all $t > T_0$, can be written as a function of estimable quantities;  \eqref{eq:identification_strong} is a stronger version of \eqref{eq:identification} that holds for each time $t \in [T]$. 
As a consequence of Theorem~\ref{thm:identification}, for a given $(i, d)$ pair, we look to estimate  $w^{(i,d)}$, which we reemphasize is invariant across time and treatments. 
Section~\ref{sec:algo} proposes an estimator to achieve this objective. 

% REMARKS
\subsection{Remarks on Assumptions} 
%
% TFM
\begin{remark} [On Assumption~\ref{assumption:form}] \label{remark:nonlinear_lvms}
{\em 
In Appendix~\ref{sec:nonlinear_lvm}, we show that sufficiently continuous non-linear latent variable models are arbitrarily well-approximated by linear factor models of the form \eqref{eq:form.0} in Assumption \ref{assumption:form} as $N$ and $T$ grow. 
}
\end{remark} 

% selection on latent factors
\begin{remark} [On Assumption~\ref{assumption:conditional_mean_zero}]
{\em 
As with any assumption on confounding, it is difficult---if not impossible---to verify Assumption~\ref{assumption:conditional_mean_zero} for observational data. 
Rationalizing the validity of Assumption~\ref{assumption:conditional_mean_zero}, therefore, requires subject matter knowledge on the application of interest. 
}
\end{remark}

% unit span inclusion
\begin{remark} [On Assumption~\ref{assumption:linear}]
{\em 
Assumption~\ref{assumption:linear} cannot be fully verified since the unit factors are latent. 
With that said, we can examine its validity through the pre-treatment fit, as prescribed in \cite{abadie1, abadie2, abadie_survey}. 
See \cite{asc, SDID} for a bias-correction variant of the standard \SC-based estimators when the pre-treatment fit is poor.  
}
\end{remark}

%%
%Hence, a treated state such as California admits a synthetic counterpart that can be constructed from a weighted composition of control states, whereby the weights can be learned from control outcomes observed during 
%
%hence, the weights that describe a synthetic CA based on a composition of control states
%
%hence, a synthetic CA's counterfactual outcomes during the post-treatment period can be estimated by a weighted composition of outcomes associated with control states
%
%A treated state, such as California, admits a synthetic counterpart that can be constructed from a weighted composition of control states, whereby the weights are learned from 
%
%%Observe that \eqref{eq:matrix.fm}, also known as the interactive fixed effects model \cite{bai03}, is solely described by latent time and unit factors. 
%%
%%Because the latent unit factors are invariant across time, researchers can learn a set of weights from control outcomes observed during the pre-treatment period and subsequently apply them toward control outcomes observed in the post-treatment period. 
%%%
%%Thus, the \SC~method is justified. 
%%Crucially, the latent unit factors are invariant across time. 
%%%
%%This allows researchers to apply a set of weights learned from control outcomes observed during the pre-treatment period to control outcomes observed during the post-treatment period, thus justifying the \SC~method. 

\bigskip 
\section{Synthetic Interventions Estimator}\label{sec:algo}
Motivated by Theorem~\ref{thm:identification}, we propose the synthetic interventions (\SI) estimator. 
%, which generalizes the standard \SC-based estimators. 
%
Without loss of generality, we focus on estimating $\theta^{(d)}_i$ for a given $(i,d)$ pair. 
%
%{\color{red} why is it a generalization? same procedure -- but extends to estimating not only $hat(ex)0$ (and, consequently, $theta_i(0)$ but for any $d$) reemphasize that this is not a PCR paper}

%Section~\ref{ssec:method} formally describes the \SI~estimator.
%%
%Section~\ref{ssec:challenge} highlights the added challenge of going beyond estimating $\theta^{(0)}_n$ to estimating $\theta_n^{(d)}$ for $d > 0$. 
%%
%Section~\ref{ssec:pcr} describes and motivates a particular variant of the \SI~estimator, which our theoretical analysis is focused on.

% DESCRIPTION
\subsection{Description of Estimator}\label{ssec:method}
To facilitate the description of the \SI~estimator, we introduce some useful notation. 
Let $\Yipre = [Y_{ti0} : t \le T_0] \in \Rb^{T_0}$ collect the pre-treatment outcomes for unit $i$.
Let $\bYpred = [ Y_{tj0}  : t \le T_0, \ j \in \Ic^{(d)} ] \in \Rb^{T_0 \times \numdonors}$ and $\bYpostd = [ Y_{tjd} : t > T_0, \ j \in \Ic^{(d)} ] \in \Rb^{T_1 \times  \numdonors}$ 
collect the pre- and post-treatment outcomes, respectively, for the units in $\Ic^{(d)}$.
Note that $\bYpred$ is observed under the ``default'' treatment (control) while $\bYpostd$ is observed under the treatment $d$ of interest.
The \SI~estimator is implemented via the following two steps: 
\begin{enumerate}
    % linear model 
    \item[(i)] {\em Model identification}: given a constraint set $\Wc$, define 
    \begin{align} \label{eq:si.linear_model}
    	\hw^{(i,d)} &\in \underset{w \in \Wc}{\text{argmin}} ~ \| \Yipre - \bYpred w \|_2^2. 
    \end{align}
    
    % point estimation 
    \item[(ii)] {\em Point estimation}: let $\widehat{\Ex}[Y^{(d)}_{ti}] = \sum_{j \in \Ic^{(d)}} \hw^{(i,d)}_j \cdot Y_{tjd}$ for each $t > T_0$. Then, we have 
    \begin{align}  
        \htheta^{(d)}_i &= \frac{1}{T_1} \sum_{t > T_0} \widehat{\Ex}[Y^{(d)}_{ti}]. \label{eq:si.2} 
    \end{align}
    
\end{enumerate}

For an illustration of the \SI~estimator, consider Q2 of Section~\ref{sec:intro} with California and raised taxes as unit $i$ and treatment $d$, respectively. 
In \eqref{eq:si.linear_model}, the \SI~estimator outputs a set of weights that minimizes the pre-treatment fit between California and a weighted average of the seven states that raised taxes (e.g., New Jersey) subject to the constraint set $\Wc$. 
The weighted combination of these taxed states acts as our synthetic California. 
Accordingly, in \eqref{eq:si.2}, the \SI~estimator proceeds to re-weight the post-treatment outcomes of those seven states as per the learnt weights to predict California's counterfactual trajectory of cigarette sales from 1989--2000 had its policy-makers opted to raise taxes. 

We reemphasize that the weights are constructed from pre-treatment outcomes observed under control and applied toward post-treatment outcomes observed under treatment $d$, which can be the same or different from control. 
In this view, the \SI~estimator is a generalization of the standard \SC-based estimators that operate only on control outcomes. 

% Numerous Variations
\subsection{Different Formulations of the \SI~Estimator}
There are numerous formulations of the \SC~estimator, each characterized by a particular choice of $\Wc$ in \eqref{eq:si.linear_model}. 
%ways to define the constraint set $\Wc$ in \eqref{eq:si.linear_model}. 
%
The pioneering works of \cite{abadie1, abadie2} restrict the weights to lie within the simplex, i.e., $\Wc = \{w \in \Rb^{\numdonors}: w_j \ge 0, \sum_{j \in \Ic^{(d)}} w_j = 1\}$. 
Attractive aspects of the simplex formulation include interpretability, sparsity, and transparency \cite{abadie_survey}. 
Other common approaches include the lasso \cite{LiBell17, arco, chernozhukov2020practical, sc_lasso1}, ridge regression \cite{asc}, elastic-net \cite{imbens16}, principal component regression \cite{rsc, mrsc}, and unconstrained linear regression \cite{hcw, li2020}. 
Given its connection with the \SC~estimator, the \SI~estimator also allows for the same flexibility in defining $\Wc$. 

% PCR
\subsection{A Variation Based on Principal Component Regression} \label{ssec:pcr}
In the remainder of this article, we consider the principal component regression (\PCR) variant of the \SI~estimator, which we refer as \SI-\PCR. 
To describe this variation, we denote the singular value decomposition of $\bYpred$ as $\bYpred =  \sum_{\ell \ge 1} \hs_{\ell} \hu_{\ell} \hv^\top_{\ell}$, 
where $\hs_\ell \in \Rb$ are the singular values arranged in decreasing order, and $\hu_\ell \in \Rb^{T_0}, \hv_\ell \in \Rb^{\numdonors}$ are the corresponding left and right singular vectors, respectively. 
Letting $\bhV_\pre = [\hv_1, \dots, \hv_k] \in \Rb^{\numdonors \times k}$ collect the top $k$ right singular vectors of $\bYpred$, we define $\Wc = \{w \in \Rb^{\numdonors}: (\bI - \bhV_\pre \bhV_\pre^\top) w = 0\}$.  
This yields a closed-form expression for the linear model: 
\begin{align} \label{eq:pcr}
	\hw^{(i,d)} &= \left( \sum_{\ell=1}^k (1/\hs_\ell) \hv_\ell \hu^\top_\ell \right) \Yipre. 
\end{align} 
In words, \SI-\PCR~first pre-processes $\bYpred$ by obtaining its rank-$k$ approximation and then regresses $\Yipre$ on the resulting matrix. 
%
%The hyper-parameter $k$ defines the number of retained singular values. 

\paragraph{Benefits.}
\PCR~yields several benefits.
To begin, \PCR~tackles overfitting by regularizing the weights through a spectral sparsity constraint on $\bYpred$ that sets all of its singular values smaller than $\hs_k$ to zero. 
Further, Assumptions~\ref{assumption:form} and \ref{assumption:conditional_mean_zero} imply that we would ideally regress $\Yipre$ on $\Ex[\bYpred | \Ec]$.
%
%The model for noise in the potential outcomes we introduce in Section \ref{sec:formal} implies that we would ideally like to perform our regression using $\Ex[\bYpred]$.
%
In reality, however, we only observe its noisy instantiation, $\bYpred$; this is known as {\em error-in-variables} regression. 
When $\Ex[\bYpred | \Ec]$ has low-dimensional structure, \cite{agarwal2020robustness} and \cite{pcr_aos} show that the first step of \PCR~is an effective approach to de-noise $\bYpred$. 
For this reason, \PCR~is particularly suited under our model. 

% Choosing k 
\paragraph{Choosing $k$.} 
Popular data-driven approaches to choose $k$ include cross-validation and ``universal'' thresholding schemes that preserve singular values above a precomputed value \cite{Gavish_2014, usvt}. 
A human-in-the-loop approach is to choose $k$ as the ``elbow'' point (see Figure~\ref{fig:elbow}) that partitions the singular values of $\bYpred$ into those of large and small magnitudes. 
%
%As such, if this approximate low-rank spectral profile is not in the data, then using PCR, and more generally using SI, is likely not appropriate. 
%
%footnote
Note that if $\bE_{\pre, \Ic^{(d)}} = \bYpred - \Ex[\bYpred | \Ec]$ has independent sub-Gaussian rows, then the singular values of $\bE_{\pre, \Ic^{(d)}}$ scale as $O_p(\sqrt{T_0} + \sqrt{N_d})$ \cite{vershynin2018high}. 
If the entries of $\Ex[\bYpred | \Ec]$ are $\Theta(1)$ and its nonzero singular values are of the same magnitude, then they will scale as  $\Theta(\sqrt{T_0 N_d / r_\pre} )$, where $r_\pre = \rank(\Ex[\bYpred | \Ec])$. 
Such a model leads to the aforementioned elbow point.
Thus, one feasibility check for the suitability of \PCR~is to empirically examine if the smallest retained singular value $\hs_k$ scales quicker than $\sqrt{T_0} + \sqrt{N_d}$. 

%% feasibility
%\subsubsection{A Feasibility Check for Model Transferring.} \label{ssec:feasiblity} 
%%
%We adopt a feasibility check proposed in \cite{pcr_aos} that examines when the \PCR~weights, learned on pre-treatment control outcomes, can be applied toward post-treatment outcomes observed under treatment $d$. 
%%
%Let $\bhV_\pre \in \Rb^{N_d \times k}$ and $\bhV_\post \in \Rb^{N_d \times k'}$ denote the top $k$ and $k'$ right singular vectors of $\bYpred$ and $\bYpostd$, respectively.
%%
%Let $\htau = \|(\bI - \bhV_\pre \bhV^\top_\pre) \bhV_\post \|_F^2$ denote the distance between $\bhV_\pre$ and $\bhV_\post$. 
%%
%By construction, $\htau \le k'$. 
%%
%Therefore, for a pre-determined $\alpha \in (0,1)$, if $\htau > \alpha \cdot k'$, then \PCR~is ill-suited for the problem at hand since more than $\alpha$ fraction of the spectral energy of $\bhV_\post$ lies outside the span of $\bhV_\pre$. 
%% 
%See \cite{pcr_aos} and Section \ref{sec:formal_results_discussion} for further discussions.

%% REMARK: not PCR paper
%\begin{remark} [An important clarification] \label{remark:clarification} 
%%
%The primary contribution of this section is the \SI~estimator, which---like the \SC~estimator---admits numerous formulations through the constraint set $\Wc$. 
%%
%We choose to focus on one particular instantiation of $\Wc$ known as \PCR. 
%%
%\end{remark}

%
\begin{figure}[]
 \centering
 \includegraphics[width=0.45\linewidth]
 {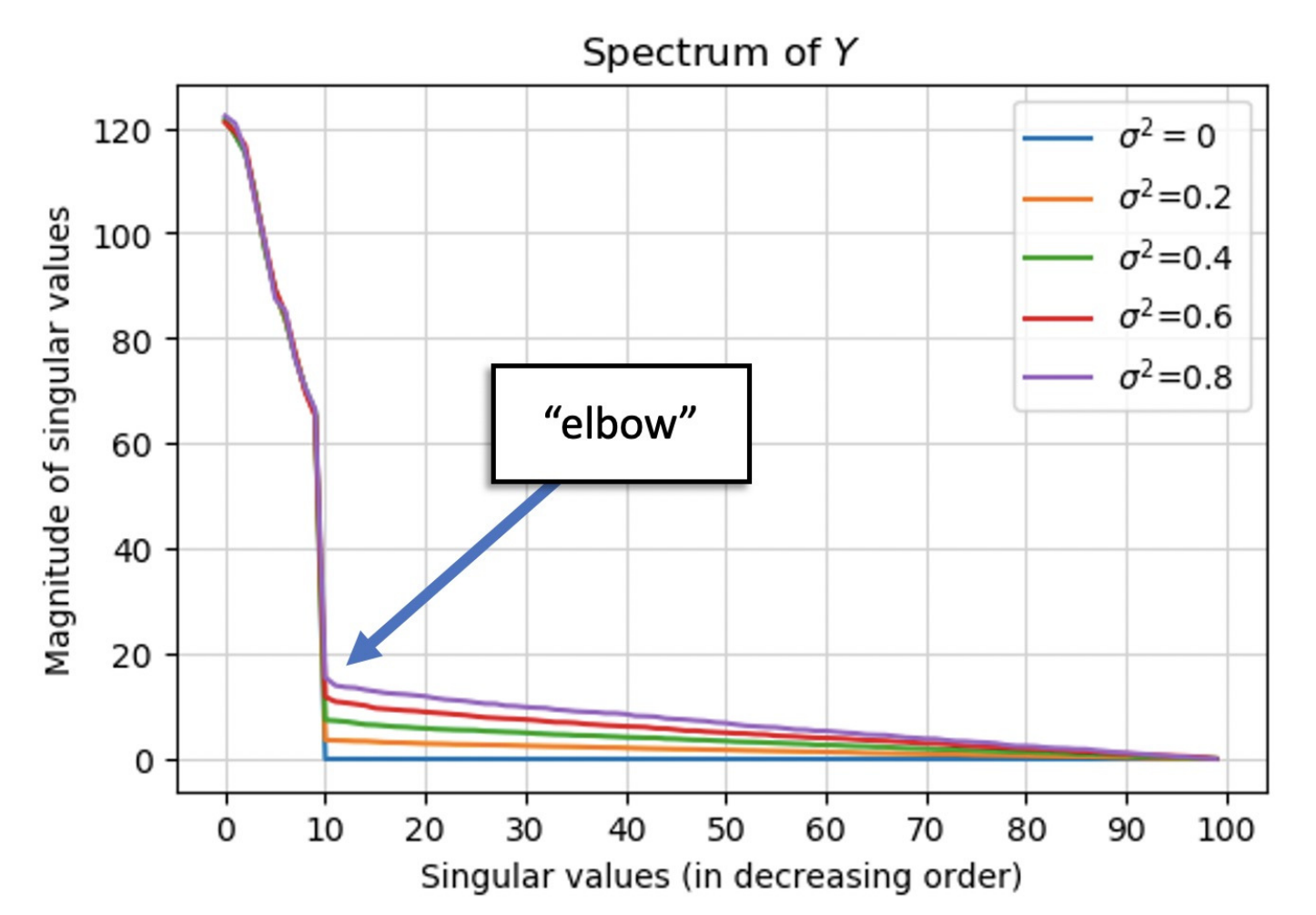}
 %{images/elbow.jpg}
 \caption{
 Simulation displays the spectrum of $\bY = \Ex[\bY] + \bE \in \Rb^{100 \times 100}$. 
 Here, $\Ex[\bY] = \bU \bV^\top$, where the entries of $\bU, \bV \in \Rb^{100 \times 10}$ are sampled independently from $\mathcal{N}(0,1)$;  
 the entries of $\bE$ are sampled independently from $\mathcal{N}(0, \sigma^2)$ with $\sigma^2 \in \{0, 0.2, \dots, 0.8\}$. 
 Across varying levels of $\sigma^2$, there is a steep drop-off in magnitude of the singular values---this marks the ``elbow'' point. 
 The top singular values of $\bY$ correspond closely with that of $\Ex[\bY]$ ($\sigma^2=0$), and the remaining singular values are induced by $\bE$.
 Thus, $\rank(\bY) \approx \rank(\Ex[\bY]) = 10$.}
 \label{fig:elbow}
 %\vspace{-5pt} 
\end{figure}

\section{Formal Results}\label{sec:formal}
In this section, we analyze the \SI-\PCR~estimator described in Section~\ref{ssec:pcr}. 

% additional assumptions 
\subsection{Additional Assumptions for Consistency} 
We state additional assumptions required to establish the consistency of the \SI-\PCR~estimator. 
Please refer to Sections~\ref{sec:setup} and \ref{sec:algo} for a refresher of our earlier assumptions and notation. 

% sub-gaussian 
\begin{assumption} [Sub-Gaussian noise] \label{assumption:noise} 
For all $(t,i,d)$,
$\varepsilon^{(d)}_{ti}$ are independent sub-Gaussian random variables 
with $\Var(\varepsilon^{(d)}_{ti} | \cLF ) = \sigma^2$ and $\|\varepsilon^{(d)}_{ti} | \cLF \|_{\psi_2} \le C\sigma$ for some constant $C > 0$.
\end{assumption}

% bounded
\begin{assumption} [Boundededness] \label{assumption:boundedness}
For all $(t,i,d)$, we have $\Ex[Y_{ti}^{(d)} | \cLF ] \in [-1,1]$.
\end{assumption}

% spectra 
\begin{assumption} [Well-balanced spectra] \label{assumption:spectra} 
Given $d$, we have $\kappa^{-1} \ge c$  and $\| \Ex[\bYepred | \cE] \|_F^2 \ge c' \numdonors T_0$, where $\kappa$ is the condition number of  $\Ex[\bYepred | \cE]$ and and constants $c, c' > 0$. 
\end{assumption}

\begin{assumption} [Latent time-treatment factors] \label{assumption:subspace} 
Given $d$, we have $u_t^{(d)} \in \espan(\{u^{(0)}_\tau: \tau \le T_0\})$ for every $t > T_0$, conditional on $\cE$. 
\end{assumption} 
We defer our discussions of Assumptions~\ref{assumption:noise}--\ref{assumption:subspace} to Section~\ref{sec:formal_results_discussion}. 

% RESULTS
\subsection{Consistency Result} 
We now formally state our result characterizing the prediction accuracy of \SI-\PCR. 
For ease of notation, we absorb dependencies on $\sigma$ into the constant within $O_p(\cdot)$. 

% theorem 
\begin{theorem} \label{thm:consistency}
Given $(i,d)$, let Assumptions \ref{assumption:sutva} to \ref{assumption:subspace} hold. 
Consider the \SI-\PCR~estimator given by \eqref{eq:pcr} and suppose $k = r_{\epre} = \emph{rank}(\Ex[ \bYepred | \cE])$. 
Then conditioned on $\cE$, we have  
\begin{align}
	\htheta_i^{(d)} - \theta_i^{(d)}
	= O_p \left(\sqrt{\log(T_0 \numdonors)} \left[ \frac{r^{3/4}_\epre}{T_0^{1/4}} + r^{2}_\epre  \max\left\{\frac{\sqrt{\numdonors}}{T^{3/2}_0}, \frac{1}{\sqrt{T_0}}, \frac{1}{\sqrt{\numdonors}}\right\} \right] \right). 
\end{align}
Here, we define $O_p(\cdot)$ with respect to the sequence $\min\{N_d, T_0\}$.
\end{theorem}
Theorem~\ref{thm:consistency} establishes that the \SI-\PCR~estimator yields a consistent estimate of the causal estimand. 
More precisely, for a fixed $r_\pre$, the estimation error decays as $N_d$ and $T_0$ grow, provided $T_0 = \omega(N_d^{1/3})$. 
Notably, the duration of the post-treatment period, $T_1$, can be of fixed length. 
In fact, if $T_1 = 1$, then Theorem~\ref{thm:consistency} establishes {\em point-wise consistency}. 
We retain the flexibility of allowing $T_1 > 1$ since a common goal in practice is to estimate the average counterfactual outcome over the post-treatment period. 

% REMARK: NORMALITY
\begin{remark}[Asymptotic normality] \label{remark:normality} 
{\em
Appendix~\ref{sec:proof.normality} provides conditions that lead to asymptotic normality, which then allows for the construction of valid confidence intervals. 
There, we show that a minor modification to the \SI-\PCR~estimator satisfies these conditions. 
}
\end{remark} 

\subsection{Remarks on Additional Assumptions} \label{sec:formal_results_discussion}

%% selection on latent factors
%\begin{remark} [On Assumption~\ref{assumption:conditional_mean_zero}]
%%
%As with any assumption on confounding, it is difficult (if not impossible) to empirically verify Assumption~\ref{assumption:conditional_mean_zero} for observational settings. 
%%
%Rationalizing the validity of Assumption~\ref{assumption:conditional_mean_zero} requires subject matter knowledge for the application of interest. 
%%
%\end{remark}
%
%% unit span inclusion
%\begin{remark} [On Assumption~\ref{assumption:linear}]
%%
%Assumption~\ref{assumption:linear} cannot be fully verified since the unit factors are latent. 
%%
%With that said, we can examine its validity through the in-sample residuals, as defined in \eqref{eq:hat_sigma}. 
%%
%That is, sufficiently small residuals provide empirical evidence that Assumption~\ref{assumption:linear} may hold. 
%%
%\end{remark} 

% Noise
\begin{remark} [On Assumption~\ref{assumption:noise}]
{\em
While the latent time-treatment factors, $u_t^{(d)}$, can be arbitrarily correlated, Assumption~\ref{assumption:noise} enforces the idiosyncratic shocks to be independent. 
This can be restrictive. 
However, just as \cite{abadie2} used this independence structure to provide the first analysis of the \SC~estimator, we also adopt this independence structure to provide the first analysis of the \SI~estimator. 
A formal analysis of the \SI~estimator under more general noise models is left as important future work. 
}
\end{remark} 

% boundedness
\begin{remark} [On Assumption~\ref{assumption:boundedness}]
{\em 
The precise bound $[-1,1]$ is without loss of generality.
That is, it can be extended to $[a, b]$ for $a, b \in \Rb$ with $a \le b$.
}
\end{remark} 

% spectra
\begin{remark} [On Assumption~\ref{assumption:spectra}]
{\em 
Assumption \ref{assumption:spectra} requires that the nonzero singular values of $\Ex[\bYpred | \Ec]$ are well-balanced, which we can empirically inspect through the spectral profile of $\bYpred$. 
Within the econometrics factor model and matrix completion literatures, Assumption \ref{assumption:spectra} is analogous to incoherence-style conditions \cite[Assumption A]{bai2020matrix} and notions of pervasiveness \cite[Proposition 3.2]{fan2018eigenvector}. 
Assumption \ref{assumption:spectra} has also been shown to hold w.h.p. 
for the canonical probabilistic generating process used to analyze probabilistic principal component 
analysis in \cite{bayesianpca} and \cite{probpca};
here, the observations are a high-dimensional embedding of a low-rank matrix with 
independent sub-Gaussian entries \cite[Proposition 4.2]{agarwal2020robustness}.
We highlight that the assumption of a gap between the top few singular values of observed matrix of interest and the remaining singular values has been widely adopted in the econometrics literature of large dimensional factor analysis dating back to \cite{chamberlainfactor}. 
}
\end{remark} 

% subspace 
\begin{remark} [On Assumption~\ref{assumption:subspace}]
{\em 
Assumption~\ref{assumption:subspace} is analogous to Assumption~\ref{assumption:linear}. 
As such, similar interpretations and implications carry over. 

Another interpretation of Assumption~\ref{assumption:subspace} views $\bYpred$ as the training set and $\bYpostd$ as the test set. 
Intuitively, generalization relies on a notion of ``similarity'' between the train and test sets. 
Standard arguments from the statistical learning literature formalizes this notion through the assumption that the two sets are constructed from i.i.d. draws. 
%Standard generalization arguments from the statistical learning literature assume that the two sets are i.i.d. draws. 
%
Since potential outcomes from different treatments are likely to arise from different distributions, such a postulation is ill-suited for our setting. 
Instead, Assumption~\ref{assumption:subspace} implies that generalization is achievable if the rowspace of $\Ex[\bYpostd | \cE]$ is contained within that of $\Ex[\bYpred| \cE]$ (see Lemma~\ref{lemma:identification.td} of Appendix~\ref{sec:proof.rowspaces}), i.e., each test point is a linear combination of the training set---a purely linear algebraic condition. 

Though $\Ex[\bYpostd| \cE]$ and $\Ex[\bYpred| \cE]$ are not observable, we can still investigate the validity of Assumption~\ref{assumption:spectra} using $\bYpostd$ and $\bYpred$ as proxies. 
To this end, we first recall that Assumption~\ref{assumption:linear} for unit $i$ is empirically tested through its pre-treatment fit, defined as 
\begin{align} \label{eq:pretreatment.fit}
	\rho_i = \frac{ \| (\bI - \bhU_\pre \bhU_\pre^\top) \Yipre \|_2} { \| \Yipre \|_2},
\end{align}
where $\bhU_\pre \in \Rb^{T_0 \times k}$ collects the top $k$ left singular vectors of $\bYpred$; note that the numerator of \eqref{eq:pretreatment.fit} measures the $\ell_2$-size of the in-sample residuals. 
Given the symmetry between Assumptions~\ref{assumption:linear} and \ref{assumption:subspace}, it is natural to define for every $t > T_0$, 
\begin{align}
	\phi_t = \frac{ \| (\bI - \bhV_\pre \bhV_\pre^\top) Y_{t, \Ic^{(d)}} \|_2} { \| Y_{t, \Ic^{(d)}} \|_2}, 
\end{align}
where $\bhV_\pre \in \Rb^{N_d \times k}$ collects the top $k$ right singular vectors of $\bYpred$ and $Y_{t, \Ic^{(d)}} = [Y_{tjd}: j \in \Ic^{(d)}]$. 
Observe that $\rho_i$ and $\phi_t$ largely depend on the ``richness'' of the row and column spaces of $\bYpred$, respectively. 
The row space, generated by $\Ic^{(d)}$, determines the existence of a synthetic unit $i$ formed as a weighted average of $\Ic^{(d)}$; the column space, generated by the pre-treatment outcomes (training set), controls the \SI-\PCR~estimator's ability to generalize to the post-treatment period (test set). 
By construction, $\rho_i$ and $\phi_t$ take values in $[0,1]$ with lower values being more desirable but higher values being more informative: a value close to zero does not guarantee the tested assumption holds but values close to one suggest that the tested assumption is likely violated. 
Thus, $\rho_i$ and $\phi_t$ are useful one-sided tests for the suitability of the \SI~framework toward a study. 

%As a final comment, we take note that 
The concept of generalization under a change in distribution between the training and test sets is known in the machine learning community as {\em covariate shift}.
We hope that those exploring this area of research will find resonance with Assumption~\ref{assumption:subspace} and its implications, which Section~\ref{sec:simulations} explores in greater depth.
}
\end{remark} 

% Feasibility checks
%\subsection{

\section{Simulation Study}\label{sec:simulations}
This section presents a simulation on the \SI-\PCR~estimator that complements the statistical claims of Theorem~\ref{thm:consistency}. 
In particular, we evaluate the {\em point-wise} prediction accuracy of the \SI-\PCR~estimator and study the role of Assumption~\ref{assumption:subspace}. 

% DGP 
\subsection{Data Generating Process} \label{sec:sim.dgp} 
We set $T_1 = 1$ and vary $N_d = T_0 \in \{25, 50, 75, \dots, 200\}$.  
We set $r =15$ and $r_\pre = 10$. 

% latent factors 
\subsubsection{Latent factors} 
We generate the latent factors associated the units in $\Ic^{(d)}$, denoted as $\bV_{\cI^{(d)}} \in \Rb^{N_d \times r}$, by independently sampling its entries from a standard Normal distribution. 
Then, we sample $w^{(i,d)} \in \Rb^\numdonors$ by drawing its entries from a uniform distribution over $[0,1]$ and normalizing it to have unit norm. 
We define the target unit latent factor as $v_i = \bV_{\cI^{(d)}}^\top w^{(i,d)}$. 

Define the latent factors for the pre-treatment outcomes under control as $\bU^{(0)}_\pre = \bA \bB^\top$, where the entries of $\bA \in \Rb^{T_0 \times r_\pre}$ and $\bB \in \Rb^{r \times r_\pre}$ are i.i.d. samples from a standard Normal. 
Note $\rank(\bU^{(0)}_\pre) = r_\pre$. 
Next, we sample $\phi \in \Rb^r$ whose entries are i.i.d. draws from a uniform distribution over $[0, 1]$. 
Consider two post-treatment latent factors under treatment $d$:  
$u_\post^{(d, \cmark)} = (\bU_\pre^{(0)})^\dagger \bU^{(0)}_\pre \phi$ such that Assumption~\ref{assumption:subspace} holds ($\cmark$) with respect to $\bU^{(0)}_\pre$
and $u_\post^{(d, \xmark)} = [ \bI -  (\bU_\pre^{(0)})^\dagger \bU^{(0)}_\pre ] \phi$ such that Assumption~\ref{assumption:subspace} fails ($\xmark$) with respect to $\bU^{(0)}_\pre$. 
Thus, we define two causal estimands,  
$\theta_i^{(d, \cmark)} = \langle u_\post^{(d, \cmark)}, v_i \rangle$
and
$\theta_i^{(d, \xmark)} = \langle u_\post^{(d, \xmark)}, v_i \rangle$.  

% observations
\subsubsection{Observations}
Let
(i) $\Yipre = \bU^{(0)}_\pre v_i + \varepsilon_{\pre, i}$; 
(ii) $\bYpred =\bU^{(0)}_\pre \bV^\top_{\cI^{(d)}} + \bE_{\pre, \Ic^{(d)}}$; 
(iii) $\Ypostd^{(\cmark)} = \bV_{\cI^{(d)}} u_\post^{(d, \cmark)} + \varepsilon_{\post, \Ic^{(d)}}$; 
and
(iv) $\Ypostd^{(\xmark)} = \bV_{\cI^{(d)}} u_\post^{(d, \xmark)} + \varepsilon_{\post, \Ic^{(d)}}$, 
where the entries of $\varepsilon_{\pre, i}$, $\bE_{\pre, \Ic^{(d)}}$, and $\varepsilon_{\post, \Ic^{(d)}}$ are i.i.d. samples from a standard Normal. 
%We define  
%(i) $\Yipre = \bU^{(0)}_\pre \cdot v_i + \varepsilon_{\pre, i}$; 
%(ii) $\bYpred =\bU^{(0)}_\pre \cdot \bV^\top_{\cI^{(d)}} + \bE_{\pre, \Ic^{(d)}}$; 
%(iii) $\Ypostd^{(\cmark)} = \bV_{\cI^{(d)}} \cdot u_\post^{(d, \cmark)} + \varepsilon^{(\cmark)}_{\post, \Ic^{(d)}}$; 
%and
%(iv) $\Ypostd^{(\xmark)} = \bV_{\cI^{(d)}} \cdot u_\post^{(d, \xmark)} + \varepsilon^{(\xmark)}_{\post, \Ic^{(d)}}$, 
%%
%where the entries of $\{\varepsilon_{\pre, i}, \bE_{\pre, \Ic^{(d)}}, \varepsilon^{(\cmark)}_{\post, \Ic^{(d)}}, \varepsilon^{(\xmark)}_{\post, \Ic^{(d)}} \}$ are i.i.d. samples from a standard Normal. 

% Results
\subsection{Simulation Results} 
From  $\Ynpre$ and $\bYpred$, we learn a {\em single} linear model $\hw^{(i,d)}$ via the \SI-\PCR~estimator with the hyper-parameter $k$ chosen via the approach prescribed in \cite{Gavish_2014}. 
We then produce two estimates,
$\htheta_i^{(d, \cmark)} = \langle \Ypostd^{(\cmark)}, \hw^{(i,d)} \rangle$ 
and
$\htheta_i^{(d, \xmark)} = \langle \Ypostd^{(\xmark)}, \hw^{(i,d)} \rangle$. 
Figure~\ref{fig:consistency} visualizes the point-wise prediction errors of $| \htheta_i^{(d, \cmark)} - \theta_i^{(d, \cmark)}|$ and $| \htheta_i^{(d, \xmark)} - \theta_i^{(d, \xmark)}|$ across different dimensions and averaged over $50$ trials, where each trial consists of an independent draw of the latent factors and the observations of size $100$; this amounts to $5000$ total simulation repeats per dimension. 
We observe that the $\cmark$ point-wise errors decay with increasing dimensions, which corroborates with Theorem~\ref{thm:consistency}. 
The $\xmark$ point-wise errors fail to converge, which suggests that Assumption~\ref{assumption:subspace} can be critical for generalization. 

% figure 
\begin{figure}[!t]
	\centering 
		\centering 
		\includegraphics[width=0.4\linewidth]
		{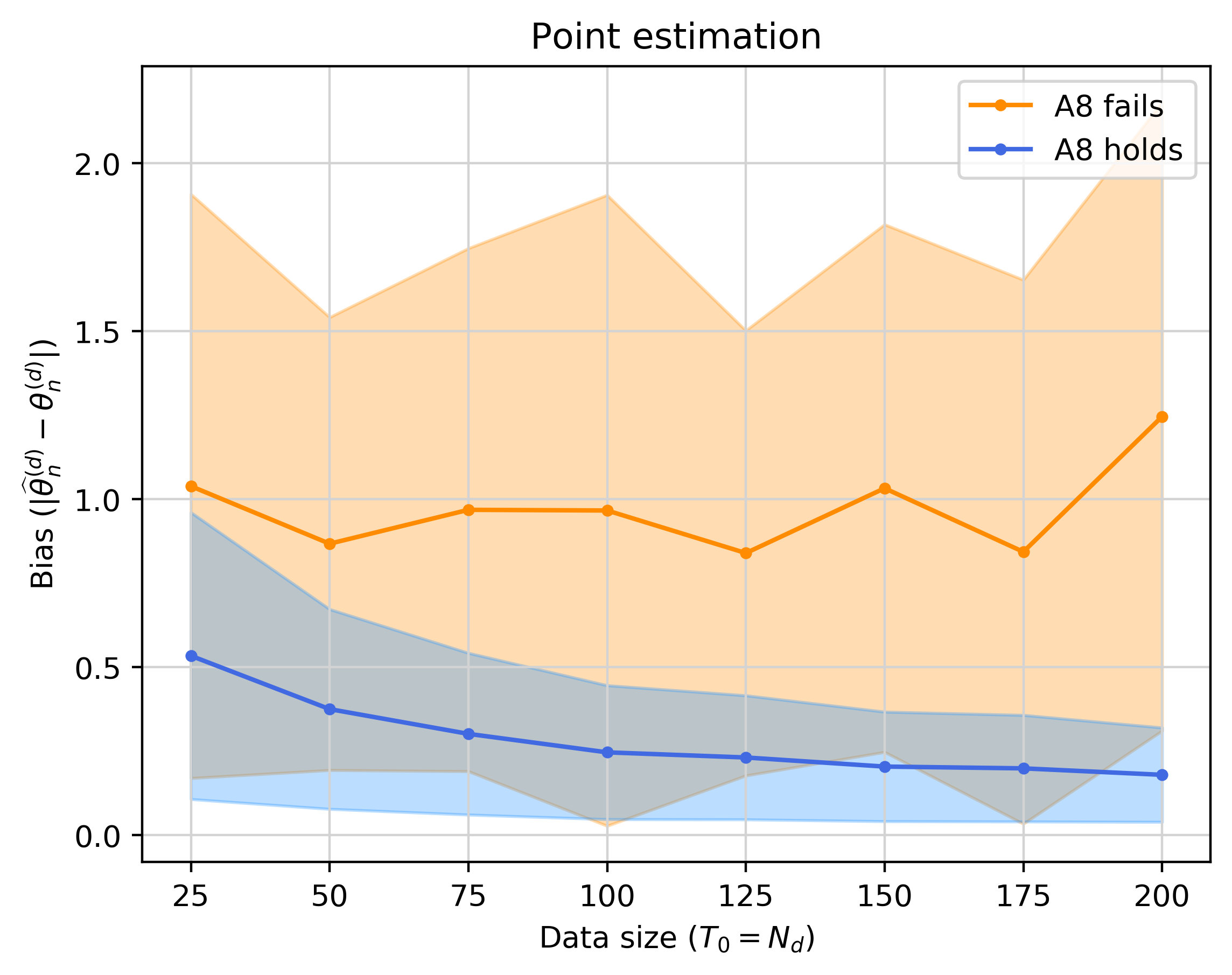}
	\caption{Simulation results displaying the absolute point-wise prediction errors across $T_0 = N_d \in \{25, 50, 75, \dots, 200\}$. Solid lines show the mean over $50$ trials, with shading to show $\pm$ one standard error. Blue and orange lines show the errors when Assumption~\ref{assumption:subspace} holds ($\cmark$) and fails ($\xmark$), respectively. In line with Theorem~\ref{thm:consistency}, the $\cmark$ errors decay as $T_0$ and $N_d$ grow. By contrast, the $\xmark$ errors fail to converge, which provides empirical evidence for the importance of Assumption~\ref{assumption:subspace}.}
	\label{fig:consistency} 
\end{figure}

\section{Empirical Application}\label{sec:empirics} 
We revisit the classical Proposition 99 study of \cite{abadie2} introduced in Section~\ref{sec:intro}. 
Recall that the focus of \cite{abadie2} was to answer Q1. 
By contrast, our attention rests with Q2-Q5 given our interest in the \SI-\PCR~estimator when its linear model is learned on outcomes observed under the status quo and applied on outcomes observed under an anti-tobacco program or raised taxes.
More formally, whereas \cite{abadie2} aimed to estimate $\theta_i^{(0)}$, we set out to estimate $\theta_i^{(d)}$ for $d \in \{1,2\}$. 

%Recall that the focus of \cite{abadie2} was to answer Q1, i.e., to estimate $\theta_i^{(0)}$. 
%%
%By contrast, our attention rests with Q2-Q5 given our interest in the \SI-\PCR~estimator when its linear model is learned on outcomes observed under the status quo and applied on outcomes observed under an anti-tobacco program or raised taxes, i.e., to estimate $\theta_i^{(d)}$ for $d \in \{1,2\}$. 

% SETUP
\subsection{Setup} 
While we would ideally like to respond to Q2-Q5 directly, the answers to these questions are counterfactual, which precludes any evaluation. 
Accordingly, we emulate these questions via a leave-one-out (LOO) analysis. 
Specifically, for every treatment $d$, we iteratively designate a state $i \in \Ic^{(d)}$ as the target unit. 
We withhold its post-treatment outcomes, treating them as the ground truth, and define our causal estimand as $\theta_i^{(d)} = (1/T_1) \sum_{t > T_0} Y_{tid}$, acknowledging that our estimand is imperfect since it inherently incorporates the idiosyncratic shocks. 
We retain access to the target unit's pre-treatment outcomes, $Y_{ti0}$ for $t \le T_0$, as well as the outcomes for the remaining units in $\Ic^{(d)}$ for the entire time horizon, i.e., for all $j \in \Ic^{(d)} \setminus \{i\}$, we observe $Y_{tj0}$ for $t \le T_0$ and $Y_{tjd}$ for $t > T_0$. 
Our objective is to estimate the causal estimand from these observations with the \SI-\PCR~estimator. 

For each treatment $d$, we summarize the performance of the \SI-\PCR~estimator as $\texttt{errors}(d) = [| (\htheta_i^{(d)} - \theta_i^{(d)}) / \theta_i^{(d)}|: i \in \Ic^{(d)}]$, which collects the absolute normalized LOO prediction errors across $\Ic^{(d)}$. 
%where $e_i(d) = | (\htheta_i^{(d)} - \theta_i^{(d)}) / \theta_i^{(d)}|$ 
%%
%\begin{align}
%	e_i(d) = \abs{ \frac{ \htheta_i^{(d)} - \theta_i^{(d)} } {\theta_i^{(d)}} }
%\end{align}
%%
%is the absolute normalized LOO prediction error for unit $i$. 
%
Practically speaking, $\texttt{errors}(d)$ measures the \SI-\PCR~estimator's ability to recreate the post-treatment outcomes under treatment $d$ using a linear model learned from pre-treatment outcomes under control (status quo). 
With this interpretation and given the widespread acceptance of the \SC~framework---particularly, in the context of this dataset---we view $\texttt{errors}(0)$ as the baseline; hence, the difference between $\texttt{errors}(0)$ and $\texttt{errors}(d)$ for $d \in \{1,2\}$ serves as a measure for the validity of the \SI~framework. 
In what follows, the hyper-parameter $k$ of the \SI-\PCR~estimator is chosen as the minimum number of singular values needed to capture $99\%$ of the spectral energy in $\bYpred$.

%For each treatment $d$, we summarize the performance of the \SI-\PCR~estimator through
%%
%\begin{align}
%	\texttt{error}(d) = \frac{1}{N_d} \sum_{i \in \Ic^{(d)}} e_i(d) %\abs{ \frac{ \htheta_i^{(d)} - \theta_i^{(d)} } {\theta_i^{(d)}} },
%	\quad \text{and} \quad
%	\texttt{std}(d) = \left[ \frac{1}{N_d} \sum_{i \in \Ic^{(d)}} \left( e_i(d) - \texttt{error}(d) \right)^2 \right]^{1/2}, 
%\end{align}
%%
%where $e_i(d) = |(\htheta_i^{(d)} - \theta_i^{(d)}) / \theta_i^{(d)}|$ is the absolute normalized LOO prediction error for unit $i$. 
%%
%In words, $\texttt{error}(d)$ and $\texttt{std}(d)$ report on the average and standard deviation of errors, respectively. 
%%In words, $\texttt{error}(d)$ is the average 
%%
%%which translates as the average absolute normalized LOO prediction error across $\Ic^{(d)}$. 
%%
%%we recall that $\htheta_i^{(d)}$ is computed without unit $i$'s post-treatment outcomes. 
%%
%Put differently, $\texttt{error}(d)$ measures the \SI-\PCR~estimator's ability to recreate the post-treatment outcomes under treatment $d$ using a linear model learned from pre-treatment outcomes under control (i.e., status quo data). 
%%
%With this interpretation and given the widespread acceptance of the \SC~framework (particularly, in the context of this dataset), we view $\texttt{error}(0)$ as the baseline and compare $\texttt{error}(d)$ for $d \in \{1,2\}$ in reference to this value, i.e., the difference between $\texttt{error}(0)$ and $\texttt{error}(d)$ for $d \in \{1,2\}$ serves as a measure for the validity of the \SI~framework. 

% RESULTS
\subsection{Empirical Results} 
Table~\ref{table:errors} reports on the average and standard deviation of $\texttt{errors}(d)$ across all treatments $d \in \{0,1,2\}$.
Compared to the baseline, the average error and standard deviation associated with $d=2$ (raised taxes) is lower. 
At the same time, the average error for $d=1$ (anti-tobacco program) matches that of the baseline but its standard deviation is noticeably larger. 
This may be an artifact of there being significantly less states that imposed an anti-tobacco program ($N_1=5$) compared to keeping the status quo ($N_0 = 38$). 
However, if we redefine the treatments to take on two values as opposed to three, where we maintain the status quo as control ($d'=0$) but combine anti-tobacco programs and raised taxes into a single anti-tobacco measure ($d'=1$), then we find the resulting error summary (not reported in Table~\ref{table:errors}) to be $0.077 \pm 0.079$.  
Figure~\ref{fig:empirics} provides a visualization of the estimated trajectory of potential cigarette sales from 1989--2000 under the statewide status quo ($d=0$), an anti-tobacco program ($d=1$), and raised taxes ($d=2$) for three example states. 

% TABLE
\begin{table} [h]
    \centering
    \caption{Summary of the average and standard deviation for $\texttt{errors}(d)$ across all $d \in \{0,1,2\}$.}
    \label{table:errors}
    \begin{adjustbox}{max width=\textwidth}
    \begin{tabular}{l c  }
        \toprule
        Treatment  				&   Prediction Error   \\
        \midrule
        $d=0$: status quo     			&  $0.105 \pm 0.064$   		\\
        $d=1$: anti-tobacco program     &  $0.105 \pm 0.116$        		\\
        $d=2$: raised taxes 			&  $0.070 \pm 0.052$      		\\
        \bottomrule
    \end{tabular}
    \end{adjustbox}
\end{table}

Collectively, our results provide empirical evidence that the \SI~methodology, like the \SC~methodology, is suitable for this case study. 
Moreover, our findings suggest that (i) anti-tobacco programs and raised taxes yielded similar effects with (ii) both leading to a significant decrease in tobacco consumption compared to the status quo, which is consistent with the conclusions drawn in \cite{abadie2} for California. 

% FIGURE
\begin{figure}[t]
	\centering 
	\begin{subfigure}[b]{0.325\textwidth}
		\centering 
		\includegraphics[width=\linewidth]{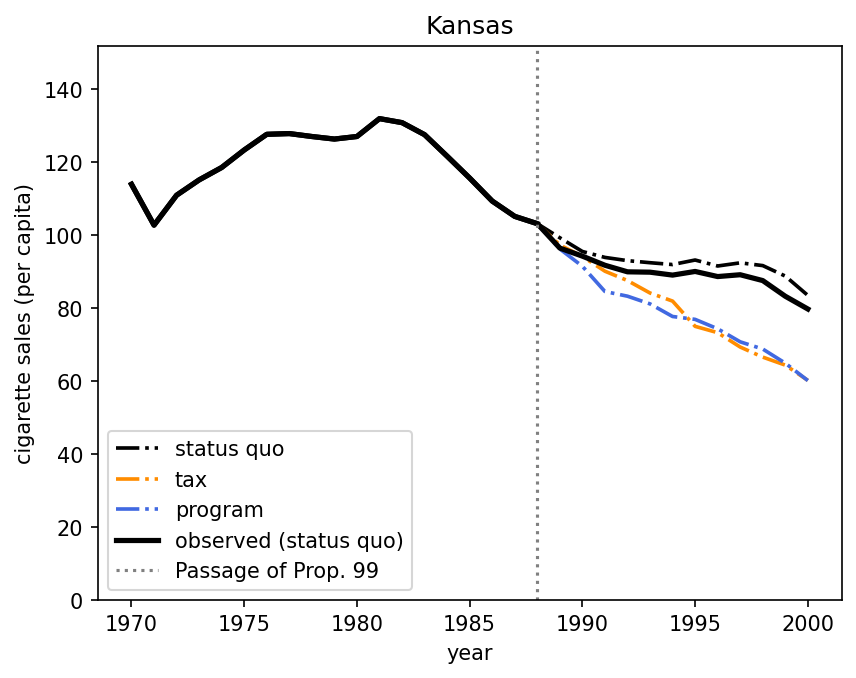}
		\caption{Kansas.} 
		\label{fig:kansas} 
	\end{subfigure} 
	\begin{subfigure}[b]{0.325\textwidth}
		\centering 
		\includegraphics[width=\linewidth]{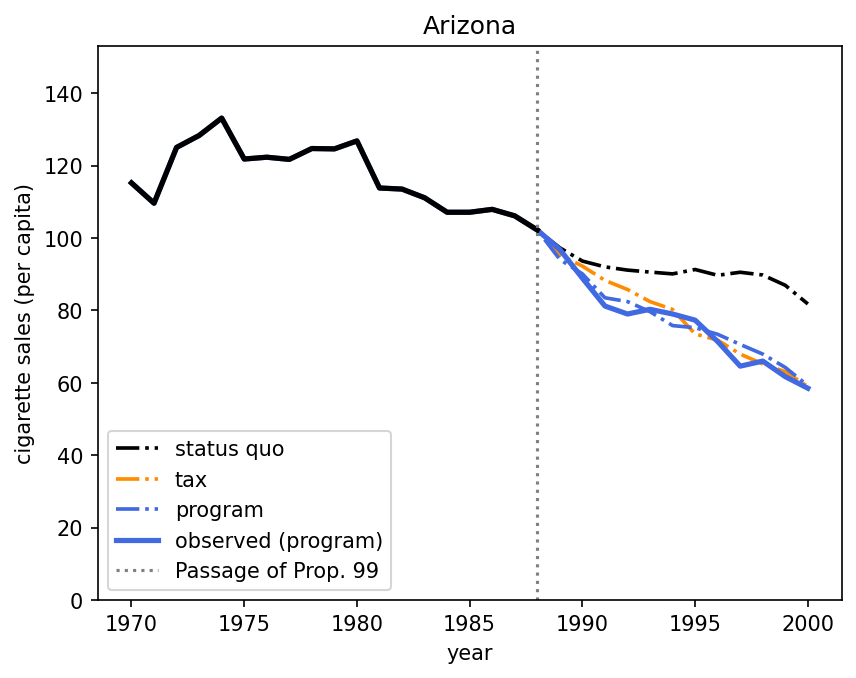}
		\caption{Arizona.} 
		\label{fig:arizona} 
	\end{subfigure} 
	\begin{subfigure}[b]{0.325\textwidth}
		\centering 
		\includegraphics[width=\linewidth]{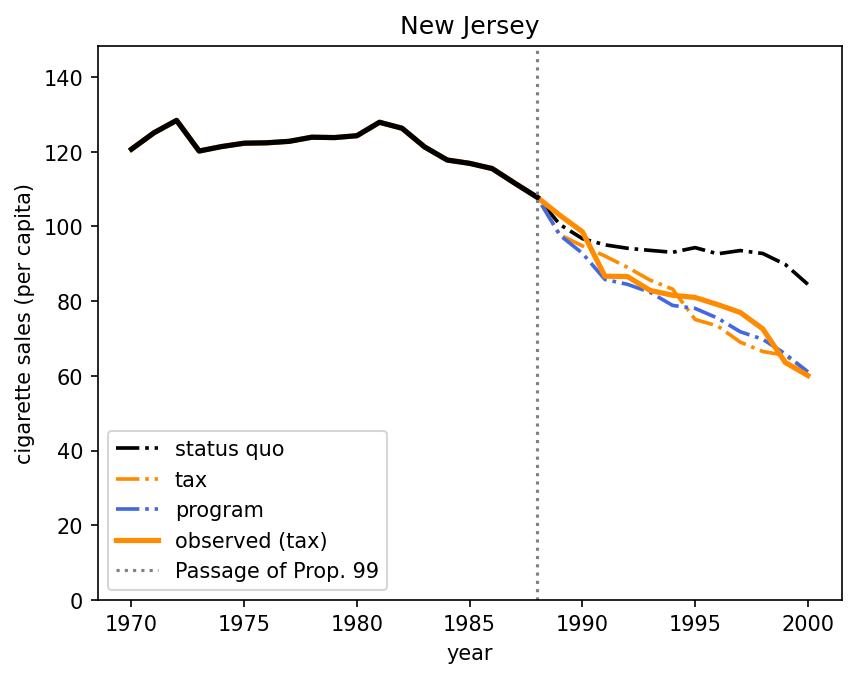}
		\caption{New Jersey.} 
		\label{fig:jersey} 
	\end{subfigure} 
	\caption{The solid and dashed-dotted lines shows the observed and estimated trajectory of cigarette sales, respectively. 
	Treatments are indexed by color: status quo in black, anti-tobacco program in blue, and raised taxes in orange.  
	The vertical dotted gray line represents the year 1988 when California passed Proposition 99; this splits the time horizon into the pre-treatment period (1970--1988) and post-treatment period (1989--2000). 
	Figures~\ref{fig:kansas}, \ref{fig:arizona}, and \ref{fig:jersey} show the results for an example state that in actuality kept the status quo (Kansas), imposed an anti-tobacco program (Arizona), and raised taxes (New Jersey), respectively. 
	}
	\label{fig:empirics} 
\end{figure}

\section{Conclusion} \label{sec:conclusion}
In this article, we reinterpret the problem of counterfactual estimation as one of tensor completion. 
To enable faithful recovery of the underlying tensor, we anchor on the low-rank tensor factor model, which is arguably the de-facto assumption in the tensor completion literature. 
The tensor factor model is a natural generalization of the canonical matrix factor model used to justify the \SC~framework with the added assumption of a latent factorization over treatments and shared latent unit factors across not only time (as in the matrix factor model) but treatments as well. 
Accordingly, we propose the \SI~framework, which rests upon this structural assumption. 
As a consequence, our proposed framework extends the \SC~framework to the setting of multiple treatments and, therefore, provides one resolution to the open question posed in \cite{abadie_survey}. 

We couple our framework with an estimation strategy that generalizes the standard \SC-based estimators, and show that one variation of our approach, \SI-\PCR, is a consistent estimator of our causal estimand of interest. 
Our statistical claims are supplemented by a simulation study that highlights the substantive implications of a linear algebraic condition (Assumption~\ref{assumption:subspace}) for the prediction accuracy of the \SI-\PCR~variant. 
Finally, our empirical study provides evidence that the \SI~methodology, like the \SC~methodology, may be suitable for the classical tobacco study of \cite{abadie2}. 
Complementing the conclusions of \cite{abadie2}, our findings suggest that anti-tobacco programs and raised taxes had similar effects with both leading to a significant reduction in tobacco consumption compared to the status quo. 

We envision several interesting avenues of future research. 
Keeping with our tensor factor model, an analysis of the \SI-\PCR~estimator under less restrictive assumptions on the noise model (Assumption~\ref{assumption:noise}) would offer deeper insights into its predictive behavior for more general settings.
More broadly, formal analyses on different formulations of the \SI~estimator may provide guidance to researchers on the choice of formulation for their study, should the \SI~framework be suitable.
In light of the recent work of \cite{bottmer} that provided the first design-based analysis of the \SC~method, it would also be fascinating to understand the design-based properties of the the \SI~method to study its utility for randomized experiments. 

%In this article, we propose the \SI~framework to address the open problem posed in \cite{abadie_survey} of extending the \SC~framework to the setting of multiple treatments. 
%%
%%By reinterpreting the problem of counterfactual estimation as one of tensor completion, we 
%%
%Critical to the \SI~framework is the tensor factor model, which is a natural generalization of the canonical matrix factor model used to justify the \SC~framework. 
%%
%Motivated by our model, we propose an extension of the \SC~estimator. 
%%
%We prove that one variation of our estimator, based on \PCR, yields a consistent estimate of our causal estimand. 
%%
%Our simulation and empirical studies showcase the prediction accuracy of our estimator. 
%

%It provides a perspective on learning with covariate shifts, an active area of research in econometrics and machine learning, also known as causal transportability. 

\bibliographystyle{alpha}
\bibliography{bibliography.bib}

% APPENDIX
\begin{appendix} 

\clearpage 
\renewcommand{\theequation}{S.\arabic{equation}}
\setcounter{equation}{0}

\bigskip
\begin{center}
{\large\bf SUPPLEMENTARY MATERIAL}
\end{center}
The supplementary material is structured as follows. 
Appendices~\ref{sec:identification}, \ref{sec:helper.lemmas}, and \ref{sec:proof_consistency} prove Theorems~\ref{thm:identification} and \ref{thm:consistency}.  
Appendix~\ref{sec:nonlinear_lvm} expounds on Remark~\ref{remark:nonlinear_lvms} regarding non-linear latent variable models and low-rank factor models. 
Appendix~\ref{sec:covariates} extends the \SI~framework to incorporate auxiliary covariates. 
Appendix~\ref{sec:tensor_framework} continues our discussion on a connection between causal inference and tensor completion. 
Appendix~\ref{sec:proof.normality} elaborates on Remark~\ref{remark:normality} regarding inference and a modified version of the \SI-\PCR~estimator. 

% proofs
%\section{Proof: Identification}
\section{Proof of Theorem \ref{thm:identification}} \label{sec:identification} 
\begin{proof} %{Proof of Theorem~\ref{thm:identification}.} 
We have that 
\begin{align}
\Ex [ Y^{(d)}_{ti} \big| u^{(d)}_t, v_i  ]  
& = \Ex [\langle u^{(d)}_t,  v_i \rangle + \varepsilon^{(d)}_{ti} ~\big|~ u^{(d)}_t, v_i ] &&\because \text{Assumption~\ref{assumption:form}}
\\& = \langle u^{(d)}_t,  v_i \rangle ~\big|~ \{u^{(d)}_t, v_i\} &&\because \text{Assumption~\ref{assumption:conditional_mean_zero}} 
\\&=  \langle u^{(d)}_t,  v_i \rangle ~\big|~ \cE 
\\&= \langle u^{(d)}_t, \sum_{j \in \Ic^{(d)}} w_j^{(i,d)} v_j \rangle ~\big|~ \cE &&\because \text{Assumption~\ref{assumption:linear}} 
\\& = \sum_{j \in \Ic^{(d)}} w_j^{(i,d)} \cdot
\Ex \left[ \langle u^{(d)}_t, v_j \rangle + \varepsilon^{(d)}_{tj}  ~\big|~ \cE \right] &&\because \text{Assumption~\ref{assumption:conditional_mean_zero}} 
\\ & = \sum_{j \in \Ic^{(d)}} w_j^{(i,d)} \cdot \Ex [  Y^{(d)}_{tj} \big| \cE ] &&\because \text{Assumption~\ref{assumption:form}} 
\\& = \sum_{j \in \Ic^{(d)}} w_j^{(i,d)} \cdot \Ex \left[  Y_{tjd} | \cE \right]. &&\because \text{Assumption~\ref{assumption:sutva}} %
\end{align}
%
%The descriptors above denote the assumption used, e.g.,  A1 represents Assumption 1.
%
The third equality follows since $\langle u^{(d)}_t, v_i \rangle$ is deterministic conditioned on $\{u^{(d)}_t, v_i\}$.
Recalling \eqref{eq:causal.est},
%\eqref{eq:target_causal_parameter}, 
one can verify \eqref{eq:identification} using the same the argument used to prove \eqref{eq:identification_strong} but conditioning on the set of latent factors $\{u^{(d)}_t, v_i: t > T_0 \}$ at the start rather than just $\{u^{(d)}_t, v_i\}$. 
\end{proof} 
% HELPER LEMMAS
\section{Helper Lemmas to Prove Theorem~\ref{thm:consistency}} \label{sec:helper.lemmas} 
In this section, we state and prove several helper lemmas to establish Theorem~\ref{thm:consistency}. 
All $O_p(\cdot)$ statements are with respect to the sequence $\min\{T_0, \numdonors\}$.

% SINGULAR VALUES 
\subsection{Perturbation of Singular Values} \label{sec:proof_singular_values} 
In Lemma \ref{lemma:singular_values}, we argue that the singular values of $\bYpred$ and $\Ex[\bYpred | \cE]$ are close. 
Recall that $s_\ell$ and $\hs_\ell$ are the singular values of $\Ex[\bYpred | \cE]$ and $\bYpred$, respectively. 

% our lemma 
\begin{lemma} \label{lemma:singular_values}
Let Assumptions \ref{assumption:sutva},  \ref{assumption:form}, \ref{assumption:conditional_mean_zero},  
\ref{assumption:noise}, \ref{assumption:boundedness}
and \ref{assumption:spectra} hold. 
Conditioned on $\Ec$, we have for any $d$, $\zeta > 0$, and $\ell \le \min\{T_0, \numdonors\}$, 
\begin{align}
	\abs{s_\ell - \hs_\ell} \le C\sigma \left( \sqrt{T_0} + \sqrt{\numdonors} + \zeta \right)
\end{align} 
w.p. at least $1-2\exp(-\zeta^2)$, where $C>0$ is an absolute constant. 
\end{lemma}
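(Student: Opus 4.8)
The plan is to deduce this from Weyl's inequality for singular values together with a high-probability bound on the spectral norm of the noise matrix $\bE_{\epre, \Ic^{(d)}}$. Recall that $\bY_{\pre, \Ic^{(d)}} = \Ex[\bY_{\pre, \Ic^{(d)}} \mid \cE] + \bE_{\pre, \Ic^{(d)}}$ by Assumptions~\ref{assumption:form} and~\ref{assumption:conditional_mean_zero}, where the $(t,j)$ entry of $\bE_{\pre, \Ic^{(d)}}$ is $\varepsilon_{tj}$. By Weyl's inequality (a standard consequence of the Courant--Fischer min-max characterization), for every index $i \le \min\{T_0, N_d\}$ we have $|\hs_i - s_i| \le \|\bE_{\pre, \Ic^{(d)}}\|_{\txtop}$, where $\hs_i$ and $s_i$ are the ordered singular values of the noisy and expected matrices respectively. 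So the entire statement reduces to showing that, conditioned on $\cE$, $\|\bE_{\pre, \Ic^{(d)}}\|_{\txtop} \le C\sigma(\sqrt{T_0} + \sqrt{N_d} + t)$ with probability at least $1 - 2\exp(-t^2)$.

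To bound $\|\bE_{\pre, \Ic^{(d)}}\|_{\txtop}$, I would invoke a standard non-asymptotic random matrix bound for matrices with independent sub-Gaussian entries. By Assumption~\ref{assumption:noise}, conditioned on $\cE$ the entries $\varepsilon_{tj}$ are independent, mean-zero (by Assumption~\ref{assumption:conditional_mean_zero}), and sub-Gaussian with $\|\varepsilon_{tj}\|_{\psi_2} \le C\sigma$. The matrix $\bE_{\pre, \Ic^{(d)}}$ is $T_0 \times N_d$, so Theorem~4.4.5 of \cite{vershynin2018high} (or an $\varepsilon$-net argument combined with Bernstein-type concentration for sub-exponential quadratic forms) gives $\|\bE_{\pre, \Ic^{(d)}}\|_{\txtop} \le C\sigma(\sqrt{T_0} + \sqrt{N_d} + u)$ with probability at least $1 - 2\exp(-u^2)$ for any $u \ge 0$. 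Setting $u = t$ and combining with Weyl's inequality yields the claim. The conditioning on $\cE$ is harmless throughout because, once we condition, $\Ex[\bY_{\pre, \Ic^{(d)}} \mid \cE]$ is a fixed (deterministic) matrix and the only remaining randomness is in the $\varepsilon_{tj}$'s, to which the sub-Gaussian matrix bound applies directly.

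I do not expect a genuine obstacle here; the result is essentially a packaging of two textbook facts. The one point requiring a modicum of care is verifying that the hypotheses of the cited sub-Gaussian matrix concentration bound are met after conditioning on $\cE$ — specifically that independence and the uniform $\psi_2$-norm bound on the entries survive the conditioning, which is exactly what Assumption~\ref{assumption:noise} asserts, and that mean-zero-ness holds, which is Assumption~\ref{assumption:conditional_mean_zero}. A secondary bookkeeping point is to make sure the probability bound is stated for all $t > 0$ uniformly (the cited bound is typically stated for $u \ge 0$, which covers the range claimed) and that the absolute constant $C$ does not depend on $T_0$, $N_d$, or $t$. Once these are checked, the proof is two lines: Weyl plus the matrix Bernstein/net bound.
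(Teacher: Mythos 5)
Your proposal is correct and follows exactly the paper's own argument: Weyl's inequality reduces the claim to bounding $\|\bY_{\pre, \Ic^{(d)}} - \Ex[\bY_{\pre, \Ic^{(d)}}]\|_{\txtop}$, which is then controlled by Theorem~4.4.5 of \cite{vershynin2018high} for matrices with independent mean-zero sub-Gaussian entries under Assumption~\ref{assumption:noise}. Your additional care about the conditioning on $\Ec$ preserving independence, mean-zero-ness, and the $\psi_2$ bound is exactly the right check and is implicit in the paper's proof.
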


\begin{proof} {Proof of Lemma~\ref{lemma:singular_values}.} 
For ease of notation, we suppress the conditioning on $\Ec$ for the remainder of the proof.
To bound the gap between $s_\ell$ and $\hs_\ell$, we recall the following well-known results.  

% weyl's inequality 
\begin{lemma}  [Weyl's inequality] \label{lemma:weyl}
Given $\bA, \bB \in \Rb^{m \times n}$, let $\sigma_\ell$ and $\widehat{\sigma}_\ell$ be the $\ell$-th singular values of $\bA$ and $\bB$, respectively, in decreasing order and repeated by multiplicities. 
Then for all $i \le \min\{m, n\}$, $\abs{ \sigma_\ell - \widehat{\sigma}_\ell} \le \|\bA - \bB \|_\eop.$
\end{lemma}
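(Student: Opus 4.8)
The plan is to prove Weyl's inequality for singular values through the Courant--Fischer (min--max) variational characterization, followed by a reverse-triangle-inequality perturbation estimate and a symmetry argument. First I would recall the standard variational identity: for any matrix $\bM \in \Rb^{m \times n}$ and any index $i \le \min\{m,n\}$, the $i$-th singular value satisfies
\begin{align}
\sigma_i(\bM) = \max_{\substack{V \subseteq \Rb^n \\ \dim V = i}} \ \min_{\substack{x \in V \\ \|x\|_2 = 1}} \|\bM x\|_2,
\end{align}
where $\sigma_i(\bM)$ denotes the $i$-th singular value in decreasing order, so that $\sigma_i = \sigma_i(\bA)$ and $\widehat{\sigma}_i = \sigma_i(\bB)$ in the notation of the lemma. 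This identity is itself a direct consequence of the Courant--Fischer theorem applied to the positive semidefinite matrix $\bM^\top \bM$, whose ordered eigenvalues are $\sigma_i(\bM)^2$; I would cite or state it without reproving, as it is classical.

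The core of the argument is a single one-sided perturbation bound. Let $V^\star$ be an $i$-dimensional subspace attaining the outer maximum for $\bB$, so that $\min_{x \in V^\star,\, \|x\|_2 = 1} \|\bB x\|_2 = \widehat{\sigma}_i$ and hence $\|\bB x\|_2 \ge \widehat{\sigma}_i$ for every unit vector $x \in V^\star$. Because $V^\star$ is a feasible (though not necessarily optimal) choice for $\bA$, the variational identity yields $\sigma_i \ge \min_{x \in V^\star,\, \|x\|_2 = 1} \|\bA x\|_2$. I would then apply the reverse triangle inequality together with the definition of the operator norm: for every unit vector $x$,
\begin{align}
\|\bA x\|_2 \ge \|\bB x\|_2 - \|(\bA - \bB) x\|_2 \ge \|\bB x\|_2 - \|\bA - \bB\|_\eop.
\end{align}
Minimizing over unit vectors $x \in V^\star$ gives $\min_{x \in V^\star,\, \|x\|_2=1} \|\bA x\|_2 \ge \widehat{\sigma}_i - \|\bA - \bB\|_\eop$, and combining with the feasibility bound produces $\sigma_i \ge \widehat{\sigma}_i - \|\bA - \bB\|_\eop$.

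Finally I would symmetrize: running the identical argument with the roles of $\bA$ and $\bB$ exchanged, and using $\|\bB - \bA\|_\eop = \|\bA - \bB\|_\eop$, gives $\widehat{\sigma}_i \ge \sigma_i - \|\bA - \bB\|_\eop$. Together the two one-sided bounds yield $\abs{\sigma_i - \widehat{\sigma}_i} \le \|\bA - \bB\|_\eop$, completing the proof. There is no real analytic obstacle here, as this is a classical result; the only point requiring genuine care is the bookkeeping in the variational step---specifically, selecting the optimal subspace for $\bB$ (rather than $\bA$) so that the feasibility comparison delivers the inequality in the intended direction, and then executing the symmetric swap cleanly. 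An equivalent route, which I would mention as an alternative, is to pass to the Hermitian dilation $\left(\begin{smallmatrix} 0 & \bA \\ \bA^\top & 0 \end{smallmatrix}\right)$, whose nonzero eigenvalues are $\pm \sigma_i(\bA)$, and invoke the eigenvalue form of Weyl's inequality for symmetric matrices; the operator-norm perturbation is preserved under the dilation, so this reduces to the same estimate.
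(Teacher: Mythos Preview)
Your proof is correct. Note, however, that the paper does not actually prove this lemma: it is stated as a ``well-known result'' and used as a black box in the proof of Lemma~\ref{lemma:singular_values}, so there is no proof in the paper to compare against. Your Courant--Fischer argument is one of the standard textbook routes to this classical inequality, and the alternative via the Hermitian dilation that you mention is equally standard; either would be an appropriate supplement if a proof were desired.
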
 

% sub-gaussian matrices 
\begin{lemma} [Sub-Gaussian Matrices: Theorem 4.4.5 of \cite{vershynin2018high}] \label{lemma:subg_matrix}
Let $\bA = [A_{ij}]$ be an $m \times n$ random matrix where the entries $A_{ij}$ are independent, mean zero, sub-Gaussian random variables. 
Then for any $t > 0$, we have $\| \bA \|_{\eop} \le CK \left(\sqrt{m} + \sqrt{n} + t \right)$ w.p. at least $1-2\exp(-t^2)$. 
Here, $K = \max_{i,j} \|A_{ij}\|_{\psi_2}$ and $C>0$ is an absolute constant. 
\end{lemma}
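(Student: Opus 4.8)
The statement is the classical bound on the operator norm of a random matrix with independent sub-Gaussian entries, and the plan is to prove it through the standard $\epsilon$-net (discretization) argument. Recall that $\|\bA\|_\eop = \max_{x \in S^{n-1}, y \in S^{m-1}} \langle \bA x, y\rangle$. First I would fix a mesh parameter, say $\epsilon = 1/4$, and invoke a covering-number bound to produce an $\epsilon$-net $\Nc \subseteq S^{n-1}$ and an $\epsilon$-net $\Mc \subseteq S^{m-1}$ with $|\Nc| \le 9^n$ and $|\Mc| \le 9^m$. A standard discretization lemma then gives $\|\bA\|_\eop \le (1 - 2\epsilon)^{-1} \max_{x \in \Nc, y \in \Mc} \langle \bA x, y\rangle = 2 \max_{x \in \Nc, y \in \Mc} \langle \bA x, y\rangle$, so it suffices to control the bilinear form over the finite net.

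The second step is a pointwise concentration bound. For fixed unit vectors $x, y$, write $\langle \bA x, y \rangle = \sum_{i,j} A_{ij} x_j y_i$, a sum of independent, mean-zero sub-Gaussian terms. Since the sub-Gaussian norm is, up to an absolute constant, additive in quadrature over independent summands, $\|\langle \bA x, y\rangle\|_{\psi_2}^2 \le C' K^2 \sum_{i,j} x_j^2 y_i^2 = C' K^2 \|x\|_2^2 \|y\|_2^2 = C' K^2$. The sub-Gaussian tail bound then yields $\Pb(\langle \bA x, y \rangle \ge s) \le 2\exp(-c s^2 / K^2)$ for an absolute constant $c > 0$ and every $s > 0$.

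Third, I would combine these via a union bound over the product net: $\Pb(\max_{x,y} \langle \bA x, y\rangle \ge s) \le |\Nc|\,|\Mc| \cdot 2\exp(-c s^2 / K^2) \le 2 \cdot 9^{m+n} \exp(-c s^2 / K^2)$. Substituting $s = CK(\sqrt m + \sqrt n + t)$ and using $(\sqrt m + \sqrt n + t)^2 \ge m + n + t^2$, the exponent is at most $-cC^2(m+n+t^2)$; choosing the absolute constant $C$ large enough relative to $c$ and $\ln 9$ forces $9^{m+n}\exp(-cC^2(m+n)) \le 1$ while leaving an $\exp(-t^2)$ factor. Folding the discretization factor of $2$ into $C$ gives $\|\bA\|_\eop \le CK(\sqrt m + \sqrt n + t)$ with probability at least $1 - 2\exp(-t^2)$, as claimed.

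The main obstacle is the calibration in the final step: the net has cardinality exponential in $m+n$, so the sub-Gaussian tail must be strong enough to absorb this entropy. This is exactly what forces the linear-in-$(\sqrt m + \sqrt n)$ scaling of the bound and pins down how large the absolute constant $C$ must be; the rest is bookkeeping of absolute constants across the discretization lemma, the sub-Gaussian additivity estimate, and the tail bound. The discretization lemma and the covering-number estimate for the sphere are routine and can be cited rather than reproved.
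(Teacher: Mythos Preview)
Your argument is correct and is precisely the standard $\epsilon$-net proof of this result. The paper does not supply its own proof of this lemma; it is quoted verbatim as Theorem~4.4.5 of \cite{vershynin2018high} and used as a black box, and your sketch is essentially the proof given in that reference.
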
 

By Lemma \ref{lemma:weyl}, we have for any $\ell \le \min\{T_0, N_d\}$, $\abs{s_\ell - \hs_\ell}  \leq \| \bYpred - \Ex[\bYpred]\|_\txtop$. 
Recalling Assumption \ref{assumption:noise} and applying Lemma \ref{lemma:subg_matrix}, we conclude for any $t>0$ and some absolute constant $C>0$, $\abs{s_\ell - \hs_\ell} \le C\sigma\left(\sqrt{T_0} + \sqrt{N_d} + t \right)$ w.p. at least $1-2\exp(-t^2)$. 
This completes the proof.
\end{proof}

% Subspace inclusion
\subsection{On the Rowspaces of the Pre- and Post-Intervention Expected Outcomes} \label{sec:proof.rowspaces} 
Similar to Assumption~\ref{assumption:linear}, observe that Assumption~\ref{assumption:subspace} implies that there exists a $\beta^{(d)} \in \Rb^{T_0}$ such that 
\begin{align}
	u_t^{(d)} = \sum_{\tau \le T_0} \beta_\tau^{(d)} \cdot u^{(0)}_\tau. \label{eq:subspace.beta} 
\end{align}
With this interpretation, we present a similar result to Theorem~\ref{thm:identification} that relates the rowspaces between the pre- and post-intervention expected outcomes. 
\begin{lemma} \label{lemma:identification.td}
Given $d$, let Assumptions~\ref{assumption:sutva}, \ref{assumption:form}, \ref{assumption:conditional_mean_zero}, and \ref{assumption:subspace} hold. 
Then given $\beta^{(d)}$ as defined in \eqref{eq:subspace.beta}, we have for all $t > T_0$ and $j \in [N]$,  
\begin{align}
	\Ex [ Y_{tjd} | \cE] &= \sum_{\tau \le T_0} \beta_\tau^{(d)} \cdot \Ex[ Y_{\tau j 0}  | \cE]. 
\end{align}
\end{lemma} 
\begin{proof} % {Proof of Lemma~\ref{lemma:identification.td}.} 
We have that
\begin{align}
	\Ex[Y_{tjd} | \cE] &= \Ex[Y^{(d)}_{tj} | \cE] && \because \text{Assumption~\ref{assumption:sutva}} 
	\\ &= \Ex[ \langle u_t^{(d)}, v_j \rangle + \varepsilon_{tj}^{(d)} | \cE] && \because \text{Assumption~\ref{assumption:form}}
	\\ &= \langle u_t^{(d)}, v_j \rangle | \cE && \because \text{Assumption~\ref{assumption:conditional_mean_zero}}  
	\\ &= \langle \sum_{\tau \le T_0} \beta^{(d)}_\tau \cdot u_\tau^{(0)}, v_j \rangle | \cE && \because \text{Assumption~\ref{assumption:subspace}} 
	\\ &= \sum_{\tau \le T_0} \beta^{(d)}_\tau \cdot \langle  u_\tau^{(0)}, v_j \rangle | \cE
	\\ &= \sum_{\tau \le T_0} \beta^{(d)}_\tau \cdot \Ex[ \langle  u_\tau^{(0)}, v_j \rangle + \varepsilon_{\tau j}^{(0)} | \cE ] && \because \text{Assumption~\ref{assumption:conditional_mean_zero}}
	\\ &= \sum_{\tau \le T_0} \beta^{(d)}_\tau \cdot \Ex[Y^{(0)}_{\tau j} | \cE] && \because \text{Assumption~\ref{assumption:form}}
	\\ &= \sum_{\tau \le T_0} \beta^{(d)}_\tau \cdot \Ex[Y_{\tau j 0} | \cE]. && \because \text{Assumption~\ref{assumption:sutva}} 
\end{align} 
This completes the proof. 
\end{proof}

% Rewriting Theorem 1 
\subsection{Rewriting Theorem~\ref{thm:identification} with the Minimum $\ell_2$-Norm Solution} 
Let $\bV_\pre \in \Rb^{\numdonors \times r_{\pre}}$ and $\bV_\post \in \Rb^{\numdonors \times r_{\post}}$ denote the right singular vectors of $\Ex[\bYpred | \cE ]$ and $\Ex[\bYpostd | \cE ]$, respectively. 
The next result rewrites Theorem~\ref{thm:identification} via the following: 
\begin{align} \label{eq:wtilde}
\tw^{(i,d)} = \bV_\pre \bV_\pre^\top w^{(i,d)}. 
\end{align} 
In words, $\tw^{(i,d)}$ is the orthogonal projection of $w^{(i,d)}$ onto the rowspace of $\Ex[ \bYpred | \cE]$ and is the minimum $\ell_2$-norm solution to \eqref{eq:identification_strong} and \eqref{eq:identification} of Theorem \ref{thm:identification}.

\begin{lemma} \label{lemma:theta} 
Given $(i, d)$, 
let the setup of Theorem \ref{thm:identification} and Assumption \ref{assumption:subspace} hold. %
Then, we have 
\begin{align}
\theta^{(d)}_i = \frac{1}{T_1} \sum_{t > T_0} \sum_{j \in \Ic^{(d)}} 
        \tw_j^{(i,d)} \cdot \Ex[Y_{tjd} | \Ec].
\end{align}
\end{lemma}
\begin{proof} %{Proof of Lemma~\ref{lemma:theta}.} 
From Lemma~\ref{lemma:identification.td}, it follows that the rowspan of $\Ex[\bYpostd | \cE]$ lies within that of $\Ex[\bYpred | \cE]$. 
Hence,  
\begin{align}\label{eq:subspace_inclusion_implication}
    \bV_\post &= \bV_\pre \bV_\pre^\top \bV_\post.
\end{align}
Therefore, we have 
\begin{align} \label{eq:theta.0} 
    \Ex[\bYpostd | \cE] \cdot \tw^{(i,d)} 
    &=  \Ex[\bYpostd | \cE]  \bV_\pre \bV_\pre^\top \cdot w^{(i,d)} \nonumber \\
 & = \Ex[\bYpostd | \cE] \cdot w^{(i,d)}, 
\end{align}
where we use \eqref{eq:subspace_inclusion_implication} in the last equality.
Hence, we conclude
\begin{align}
    \theta^{(d)}_i 
    &= \frac{1}{T_1} \sum_{t > T_0} \sum_{j \in \Ic^{(d)}} w^{(i,d)}_j \cdot \Ex[Y_{tjd} | \cE ] 
    = \frac{1}{T_1} \sum_{t > T_0} \sum_{j \in \Ic^{(d)}} \tw^{(i,d)}_j \cdot \Ex[Y_{tjd} | \cE ]. 
\end{align}
The first equality follows from \eqref{eq:identification} in Theorem \ref{thm:identification} and the second equality follows from \eqref{eq:theta.0}.
\end{proof}

% Model Identification
\subsection{Model Identification}
\begin{lemma} [Corollary 4.1 of \cite{pcr_aos}]  \label{lemma:param_est} 
Given $(i,d)$, 
let Assumptions \ref{assumption:sutva} to \ref{assumption:subspace} hold.
Further, suppose $k = r_\epre = \rank(\Ex[\bYepred \, | \, \cE])$, where $k$ is defined as in \eqref{eq:pcr}.
Then conditioned on $\Ec$, we have  
\begin{align} \label{eq:param_est_lem} 
\hw^{(i,d)} - \tw^{(i,d)}  &=
O_p\left(\sqrt{\log(T_0 \numdonors)} \left[ \frac{r^{3/4}_\epre }{T^{1/4}_0 \numdonors^{1/2}} + \frac{r^{3/2}_\epre }{\min\{T_0,\numdonors\}} \right] \right).
\end{align}  
\end{lemma} 

\begin{proof} %{Proof of Lemma~\ref{lemma:param_est}.} 
We first  restate \cite[Corollary 4.1]{pcr_aos}. 
\begin{lemma}[Corollary 4.1 of \cite{pcr_aos}]
Let the setup of Lemma \ref{lemma:param_est} hold. %\ref{lemma:theta} hold.
Then w.p. at least $1 - O(1 / (T_0 \numdonors)^{10})$,
\begin{align}\label{eq:corollary_iew_PCR}
\|\hw^{(i,d)} -  \tw^{(i,d)}\|^2_2 \le C(\sigma) \cdot \log(T_0 \numdonors) \left( \frac{r^{3/2}_\epre }{T^{1/2}_0 \numdonors} + \frac{r^{3}_\epre }{\min\{T_0,\numdonors\}^2}\right),
% C^2(\|\tw^{(i,d)}\|_2)
%
\end{align}
where $C(\sigma)$ is a constant that only depends on $\sigma$.
\end{lemma}
The proof follows by adapting the notation in \cite{pcr_aos} to that used in this article.
In particular, $y = Y_{\pre, i}, \bX = \Ex[\bYpred | \cE], \widetilde{\bZ} = \bYpred, \hbeta = \hw^{(i,d)}, \beta^* = \tw^{(i,d)}, T_0 = n, \numdonors = p, \rho = 1, d = 1$, where $y, \bX, \widetilde{\bZ}, \hbeta, \beta^*, n, p, \rho, d$ are the notations used in 
\cite{pcr_aos};
since we do not consider missing values in this work, $\widetilde{\bZ} = \bZ$, where $\bZ$ is the notation used in \cite{pcr_aos}.
We also use the fact that $\bX \beta^*$ in the notation of \cite{pcr_aos}, i.e., $\Ex[\bYpred | \cE ] \cdot \tw^{(i,d)}$ in our notation, 
equals $\Ex[Y_{\pre, i} | \cE]$.
This follows from
\begin{align}\label{eq:pre_intervention_identification}
    \Ex[Y_{\pre, i} | \cE] = \Ex[\bYpred | \cE] \cdot \tw^{(i,d)},
\end{align}
which follows from \eqref{eq:identification_strong} in Theorem \ref{thm:identification} and \eqref{eq:theta.0}.
We conclude by noting \eqref{eq:corollary_iew_PCR} gives our desired result. 
\end{proof}

\begin{lemma}[Lemma 19 of \cite{synthetic_combinations}]\label{lemma:coeff_bound}
Given $(i,d)$, 
let Assumptions \ref{assumption:sutva} to \ref{assumption:subspace} hold.
Then conditioned on $\cE$, we have that 
$$\| \tw^{(i,d)}  \|_2  \le C \sqrt{\frac{r_\epre}{\numdonors}}$$
for some constant $C > 0$. 
This immediately implies that
$\| \tw^{(i,d)}  \|_1  \le C \sqrt{r_\epre}.$
\end{lemma}

\begin{proof} %{Proof of Lemma~\ref{lemma:coeff_bound}.} 
As in the proof of Lemma \ref{lemma:param_est}, the result immediately holds by matching notation to that of \cite{synthetic_combinations}.
\end{proof}

%\begin{lemma}[Lemma 19 of \cite{synthetic_combinations}]\label{lemma:coeff_bound}
%%
%Given unit $n \in [N]$ and intervention $d \in [D]_0$, 
%let Assumptions \ref{assumption:sutva} to \ref{assumption:subspace} hold.
%%
%Then,
%$$\| \tw^{(i,d)}  \|_2  \le C \sqrt{\frac{r_\epre}{\numdonors}}.$$
%%
%This immediately implies that
%$\| \tw^{(i,d)}  \|_1  \le C \sqrt{r_\epre}.$
%\end{lemma}
%
%\begin{proof}
%As in the proof of Lemma \ref{lemma:param_est}, the result immediately holds by matching notation to that of \cite{synthetic_combinations}.
%\end{proof}

%\section{Proof: Consistency} 
\section{Proof of Theorem \ref{thm:consistency}} \label{sec:proof_consistency} 
Armed with our helper lemmas in Section~\ref{sec:helper.lemmas}, we are ready to formally establish Theorem~\ref{thm:consistency}. 
We adopt the following notation. 
%
%Recall $\Tc_\post = \{T-T_1+1,\dots, T\}$. 
%
For any $t > T_0$,  let $Y_{t, \Ic^{(d)}} = [Y_{tjd}: j \in \Ic^{(d)}] \in \Rb^{\numdonors}$
and 
$\varepsilon_{t, \Ic^{(d)}} = [\varepsilon^{(d)}_{tj}: j \in \Ic^{(d)}] \in \Rb^{\numdonors}$. 
Note that the rows of $\bYpostd$ are formed by $\{Y_{t, \Ic^{(d)}}: t > T_0\}$. 
Additionally, let $\Delta^{(i,d)} = \hw^{(i,d)} - \tw^{(i,d)}$. 
Finally, for any matrix $\bA$ with orthonormal columns, let $\Pc_A = \bA \bA^\top$ denote the projection matrix onto the subspace spanned by the columns of $\bA$. 
We recall that all $O_p(\cdot)$ statements are with respect to the sequence $\min(T_0, \numdonors)$.

\medskip 
\begin{proof} %{Proof of Theorem~\ref{thm:consistency}.} 
For ease of notation, we suppress the conditioning on $\Ec$ throughout the proof.
By \eqref{eq:si.2} and Lemma \ref{lemma:theta} (implied by Theorem \ref{thm:identification}), we have 
\begin{align}
    \htheta^{(d)}_i - \theta^{(d)}_i 
    &= \frac{1}{T_1} \sum_{t > T_0} 
     \left( \langle Y_{t, \Ic^{(d)}}, \hw^{(i,d)} \rangle - \langle \Ex[ Y_{t, \Ic^{(d)}}], \tw^{(i,d)} \rangle \right) 
     \\ &= 
     \frac{1}{T_1} \sum_{t > T_0} \left( \langle \Ex[Y_{t, \Ic^{(d)}}], \Delta^{(i,d)} \rangle + \langle \varepsilon_{t, \Ic^{(d)}}, \tw^{(i,d)} \rangle
    %\\ &\qquad \qquad + 
     + \langle \varepsilon_{t, \Ic^{(d)}},\Delta^{(i,d)} \rangle \right), \label{eq:inf.0} 
\end{align}
where we have used $Y_{t, \Ic^{(d)}} = \Ex[ Y_{t, \Ic^{(d)}}] + \varepsilon_{t, \Ic^{(d)}}$.  
From Lemma~\ref{lemma:identification.td}, it follows that $\bV_\post= \Pc_{V_\pre} \bV_\post$, 
where $\bV_\pre$ and $\bV_\post$ are the right singular vectors of $\Ex[\bYpred]$ and $\Ex[\bYpostd]$, respectively.
Hence, $\Ex[\bYpostd] = \Ex[\bYpostd] \Pc_{V_\pre}$. 
As such, for any $t > T_0$,
\begin{align}
    \langle \Ex[Y_{t, \Ic^{(d)}}], \Delta^{(i,d)} \rangle
    &= \langle \Ex[Y_{t, \Ic^{(d)}}],  \Pc_{V_\pre} \Delta^{(i,d)} \rangle. \label{eq:inf.01}
\end{align}
Plugging \eqref{eq:inf.01} into \eqref{eq:inf.0} yields
\begin{align}
\htheta^{(d)}_i - \theta^{(d)}_i &= \frac{1}{T_1}\sum_{t > T_0} \left(\langle \Ex[Y_{t, \Ic^{(d)}}],  \Pc_{V_\pre} \Delta^{(i,d)} \rangle + \langle \varepsilon_{t, \Ic^{(d)}}, \tw^{(i,d)} \rangle
+  \langle \varepsilon_{t, \Ic^{(d)}}, \Delta^{(i,d)} \rangle \right). \label{eq:inf.1} 
\end{align} 
Below, we bound the three terms on the right-hand side (RHS) of \eqref{eq:inf.1} separately. 

\medskip   
\noindent
{\em Bounding term 1.}  
By Cauchy-Schwartz inequality, observe that
$$
\langle \Ex[Y_{t, \Ic^{(d)}}],  \Pc_{V_\pre} \Delta^{(i,d)} \rangle 
\le 
\| \Ex[Y_{t, \Ic^{(d)}}] \|_2 \cdot \| \Pc_{V_\pre} \Delta^{(i,d)} \|_2. 
$$
Under Assumption \ref{assumption:boundedness}, we have
$\| \Ex[Y_{t, \Ic^{(d)}}] \|_2 \le \sqrt{N_d}$. 
As such, 
\begin{equation}\label{eq:problem_term}
  \frac{1}{T_1} \sum_{t > T_0}  \langle \Ex[Y_{t, \Ic^{(d)}}],  \Pc_{V_\pre} \Delta^{(i,d)} \rangle
    \le \sqrt{N_d} \cdot \| \Pc_{V_\pre} \Delta^{(i,d)} \|_2. 
\end{equation}
Hence, it remains to bound $\| \Pc_{V_\pre} \Delta^{(i,d)} \|_2$. 
Towards this, we state the following lemma.
Its proof can be found in Appendix \ref{sec:consistency.1}. 
\begin{lemma} \label{lemma:pcr_aos2}
Consider the setup of Theorem \ref{thm:consistency}.
Then,
$$
\Pc_{V_\epre} \Delta^{(i,d)} 
= O_p \left( \frac{\sqrt{r_\epre}}{\sqrt{N_d} T_0^{\frac 1 4}} + \frac{r^{3/2}_\epre  \sqrt{\log(T_0 \numdonors)}}{\sqrt{N_d} \cdot \min\{\sqrt{T_0}, \sqrt{N_d} \}}   
+  \frac{r^{2}_\epre \sqrt{\log(T_0 \numdonors)}}{\min\{T_0^{3/2},\numdonors^{3/2}\}}
\right).
$$
%\end{align}
%
\end{lemma} 

Using Lemma \ref{lemma:pcr_aos2}, we obtain  
\begin{align}
& \frac{1}{T_1} \sum_{t > T_0}  \langle \Ex[Y_{t, \Ic^{(d)}}],  \Pc_{V_\pre} \Delta^{(i,d)} \rangle
\\ &= O_p \Bigg( \frac{r^{1/2}_\pre}{T_0^{\frac 1 4}} + \frac{r^{3/2}_\pre  \sqrt{\log(T_0 \numdonors)}}{\min\{\sqrt{T_0}, \sqrt{N_d} \}}  + \frac{r^{2}_\pre \sqrt{\numdonors \log(T_0 \numdonors)}}{\min\{T_0^{3/2},\numdonors^{3/2}\}} \Bigg). \qquad
\label{eq:consistency.1} 
\end{align}
This concludes the analysis for the first term. 

\medskip 
\noindent
{\em Bounding term 2. }
We begin with a simple lemma, which immediately holds by Hoeffding's Lemma (Lemma~\ref{lemma:hoeffding_random}). 
\begin{lemma}\label{lemmma:standard_convergences}
Let $\gamma_t$ be a sequence of independent mean zero sub-Gaussian random variables with variance less than $\sigma^2$. 
Then, for any vector $\boldsymbol{a} = (a_1, \dots, a_T) \in \Rb^T$, we have 
$
\sum^T_{t = 1} a_t \gamma_t = O_p\left( \sigma \| \boldsymbol{a} \|_2 \right).
$
\end{lemma}
%
%Immediately holds by Hoeffding's lemma (Lemma \ref{lemma:hoeffding_random}). 
%
By Assumptions \ref{assumption:form} and \ref{assumption:noise}, as well as Lemma \ref{lemmma:standard_convergences}, we have
\begin{align} 
     \sum_{t > T_0} \langle \varepsilon_{t, \Ic^{(d)}}, \tw^{(i,d)} \rangle 
    &= O_p \left( \sqrt{T_1} \| \tw^{(i,d)} \|_2 \right). 
\end{align}
Here, we have used the fact that the components of $\langle \varepsilon_{t, \Ic^{(d)}},  \tw^{(i,d)} \rangle$ are independent mean-zero sub-Gaussians and that $\varepsilon_{t, \Ic^{(d)}}$ are independent across $t$. 
Then using Lemma \ref{lemma:coeff_bound}, we have that 
\begin{align} \label{eq:consistency.2}
    \frac{1}{T_1}\sum_{t > T_0} \langle \varepsilon_{t, \Ic^{(d)}}, \tw^{(i,d)} \rangle 
    &= O_p \left( \sqrt{\frac{r_\pre}{\numdonors T_1}} \right). 
\end{align}

\medskip 
\noindent
{\em Bounding term 3. }
First, we define the event $\Ec_1$ as 
$$
\Ec_1 = \left\{ \|\Delta^{(i,d)}  \|_2 = O\left(\sqrt{\log(T_0 \numdonors)} \left[ \frac{r^{3/4}_\pre }{T^{1/4}_0 \numdonors^{1/2}} + \frac{r^{3/2}_\pre }{\min\{T_0,\numdonors\}} \right]  \right) \right\}. 
$$
By Lemma \ref{lemma:param_est}, $\Ec_1$ occurs w.h.p. 
Next, we define $\Ec_2$ as 
$$
\Ec_2 = \Bigg\{ 
    \sum_{t > T_0} \langle \varepsilon_{t, \Ic^{(d)}}, \Delta^{(i,d)} \rangle
    =  O\left( \sqrt{T_1 \log(T_0 \numdonors)} \left[ \frac{r^{3/4}_\pre }{T^{1/4}_0 \numdonors^{1/2}} + \frac{r^{3/2}_\pre }{\min\{T_0,\numdonors\}}\right] \right)
\Bigg\}. 
$$
Now, condition on $\Ec_1$. 
Since $\hw^{(i,d)}$ only depends on $\bYpred$ and $Y_{\pre, i}$, it is independent of $\varepsilon_{t, \Ic^{(d)}}$ for all $t > T_0$. 
Hence, Lemmas \ref{lemma:param_est} and \ref{lemmma:standard_convergences} imply $\Ec_2 | \Ec_1$ occurs w.h.p.
Further, we note 
\begin{align}\label{eq:tower_law}
\Pb(\Ec_2) = \Pb(\Ec_2 | \Ec_1) \Pb(\Ec_1) + \Pb(\Ec_2 | \Ec_1^c) \Pb(\Ec_1^c) \ge \Pb(\Ec_2 | \Ec_1) \Pb(\Ec_1). 
\end{align}
Since $\Ec_1$ and $\Ec_2 | \Ec_1$ occur w.h.p, it follows from \eqref{eq:tower_law} that $\Ec_2$ occurs w.h.p.
As a result,  
\begin{align}
    \frac{1}{T_1} \sum_{t > T_0} \langle \varepsilon_{t, \Ic^{(d)}}, \Delta^{(i,d)} \rangle 
    &= O_p \left( \frac{\sqrt{\log(T_0 \numdonors)}}{T^{1/2}_1} \left[ \frac{r^{3/4}_\pre }{T^{1/4}_0 \numdonors^{1/2}} + \frac{r^{3/2}_\pre }{\min\{T_0,\numdonors\}} \right]  \right). \label{eq:consistency.3}
\end{align}

\medskip \noindent 
{\em Collecting terms. }
Plugging \eqref{eq:consistency.1}, \eqref{eq:consistency.2}, \eqref{eq:consistency.3} into \eqref{eq:inf.1}, and simplifying
gives our desired result. 
%
%\begin{align} \label{eq:consistency.4} 
%%
%&\htheta^{(d)}_i - \theta^{(d)}_i
%%
%\\&= O_p \left(\frac{r^{1/2}_\pre}{T_0^{1/4}} + \frac{r^{3/2}_\pre  \sqrt{\log(T_0 \numdonors)}}{\min\{\sqrt{T_0}, \sqrt{N_d} \}}  + \frac{r^{2}_\pre \sqrt{\numdonors \log(T_0 \numdonors)}}{\min\{T_0^{3/2},\numdonors^{3/2}\}} + \sqrt{\frac{r_\pre}{\numdonors T_1}} + \frac{\sqrt{\log(T_0 \numdonors)}}{T^{1/2}_1} \bigg( \frac{r^{3/4}_\pre }{T^{1/4}_0 \numdonors^{1/2}} + \frac{r^{3/2}_\pre }{\min\{T_0,\numdonors\}}\bigg) \right)
%% %
%% \\&= O_p \left(\sqrt{\log(T_0 \numdonors)} \left(\frac{r^{3/4}_\pre}{T_0^{1/4}} + \frac{r^{3/2}_\pre }{\min\{\sqrt{T_0}, \sqrt{N_d} \}} + r^{2}_\pre  \min\bigg\{ \frac{1}{\numdonors}, \frac{\sqrt{\numdonors}}{T^{3/2}_0}, \frac{1}{T_0 \sqrt{T_1}}\bigg\} \right) \right)
%%
%\\&= O_p \left(\sqrt{\log(T_0 \numdonors)} \left(\frac{r^{3/4}_\pre}{T_0^{1/4}} + r^{2}_\pre  \max\bigg\{\frac{\sqrt{\numdonors}}{T^{3/2}_0}, \frac{1}{\sqrt{T_0}}, \frac{1}{\sqrt{\numdonors}}\bigg\} \right) \right)
%%
%\end{align}
%}
%This concludes the proof of Theorem \ref{thm:consistency}. 
%
\end{proof}

%**********************
% Proofs of key lemma 1
%**********************
\subsection{Proof of Lemma \ref{lemma:pcr_aos2}} \label{sec:consistency.1} 
We introduce some helpful notation. 
Let $\bYpredrpre = \sum_{\ell=1}^{r_{\pre}} \hs_\ell \hu_\ell \hv^\top_\ell$ be the rank-$r_\pre$-approximation of $\bYpred$. 
More compactly, $\bYpredrpre = \bhU_\pre \bhS_\pre \bhV^\top_\pre$. 

\medskip 
\begin{proof} %{Proof of Lemma~\ref{lemma:pcr_aos2}.}
To establish Lemma \ref{lemma:pcr_aos2}, consider the following decomposition: 
$$
\Pc_{V_\pre} \Delta^{(i,d)} 
= (\Pc_{V_\pre} - \Pc_{\hV_\pre}) \Delta^{(i,d)}
+ \Pc_{\hV_\pre} \Delta^{(i,d)}. 
$$
We proceed to bound each term separately. 

\vspace{10pt} 
\noindent
{\em Bounding term 1. }
Recall $\| \bA v \|_2 \le \| \bA \|_\txtop \cdot \| v \|_2$ for any $\bA \in \Rb^{a \times b}$ and $v \in \Rb^b$. 
Thus, 
\begin{align}\label{eq:intermediate_Cauchy_Schwarz}
\| (\Pc_{V_\pre} - \Pc_{\hV_\pre}) \Delta^{(i,d)} \|_2 
\le \| \Pc_{V_\pre} - \Pc_{\hV_\pre}\|_\txtop \cdot \| \Delta^{(i,d)} \|_2.  
\end{align}
To control \eqref{eq:intermediate_Cauchy_Schwarz}, we state a helper lemma that bounds the distance between the subspaces spanned by the columns of $\bV_\pre$ and $\bhV_\pre$. 
Its proof is given in Appendix \ref{sec:consistency.2}. 
\begin{lemma} \label{lemma:subspace} 
    Consider the setup of Theorem \ref{thm:consistency}. Then for any $\zeta > 0$, we have w.p. at least $1 - 2\exp(-\zeta^2)$
    \begin{align}
        \| \Pc_{\hV_\epre} - \Pc_{V_\epre} \|_\eop
        \le \frac{C\sigma \left(\sqrt{T_0} + \sqrt{\numdonors} + \zeta \right)}{s_{r_\epre}},    
    \end{align} 
    where $s_{r_\epre}$ is the $r_\epre$-th singular value of $\Ex[\bYepred]$ with $r_\epre = \rank(\Ex[\bYepred])$, 
    and $C>0$ is an absolute constant.
\end{lemma}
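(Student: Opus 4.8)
The plan is to treat $\bY_{\epre, \Ic^{(d)}}$ as a noisy perturbation of its conditionally deterministic mean and invoke a Wedin-type $\sin\Theta$ perturbation bound. Write $\bM := \Ex[\bY_{\epre, \Ic^{(d)}}]$, which, conditioned on $\Ec$, is a fixed matrix of rank exactly $r_\epre$ with smallest nonzero singular value $s_{r_\epre}$ and right singular subspace spanned by the columns of $\bV_\epre$, and write $\bE := \bY_{\epre, \Ic^{(d)}} - \bM = [\varepsilon_{tj} : t \le T_0,\, j \in \Ic^{(d)}]$ for the noise matrix; here $\bhV_\epre$ denotes the top-$r_\epre$ right singular vectors of $\bY_{\epre, \Ic^{(d)}}$ and $\Pc_{V_\epre} = \bV_\epre \bV_\epre^\prime$, $\Pc_{\hV_\epre} = \bhV_\epre \bhV_\epre^\prime$ the associated projectors. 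The two ingredients I will combine are (i) a high-probability operator-norm bound on $\bE$ and (ii) the classical fact that the $r_\epre$-dimensional right singular subspace of $\bM + \bE$ is close to that of $\bM$ whenever $\|\bE\|_\eop$ is small relative to $s_{r_\epre}$.

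For (i): conditioned on $\Ec$, Assumption \ref{assumption:noise} guarantees that the entries of $\bE$ are independent, mean zero, and sub-Gaussian with $\|\varepsilon_{tj}\|_{\psi_2} \le C\sigma$, so Lemma \ref{lemma:subg_matrix} gives, for every $t > 0$, $\|\bE\|_\eop \le C\sigma(\sqrt{T_0} + \sqrt{\numdonors} + t)$ with probability at least $1 - 2\exp(-t^2)$. For (ii): I would state, as a short helper lemma citing \cite{davis1970rotation, Wedin1972PerturbationBI}, Wedin's theorem in the form that, since $\bM$ has rank $r_\epre$ (so its $(r_\epre+1)$-th singular value is $0$), the projector $\Pc_{\hV_\epre}$ satisfies $\|\Pc_{\hV_\epre} - \Pc_{V_\epre}\|_\eop \le \|\bE\|_\eop / (s_{r_\epre} - \hs_{r_\epre+1})$ whenever $s_{r_\epre} > \hs_{r_\epre+1}$, where $\hs_{r_\epre+1}$ is the $(r_\epre+1)$-th singular value of $\bY_{\epre, \Ic^{(d)}}$.

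To discharge the a priori spectral-gap hypothesis I would split into two cases. If $\|\bE\|_\eop \le s_{r_\epre}/2$, then Weyl's inequality (Lemma \ref{lemma:weyl}) gives $\hs_{r_\epre+1} \le \|\bE\|_\eop \le s_{r_\epre}/2$, hence $s_{r_\epre} - \hs_{r_\epre+1} \ge s_{r_\epre}/2$ and Wedin's bound yields $\|\Pc_{\hV_\epre} - \Pc_{V_\epre}\|_\eop \le 2\|\bE\|_\eop / s_{r_\epre}$. If instead $\|\bE\|_\eop > s_{r_\epre}/2$, then because the difference of two orthogonal projectors onto subspaces of the same dimension has operator norm at most $1$, we trivially get $\|\Pc_{\hV_\epre} - \Pc_{V_\epre}\|_\eop \le 1 \le 2\|\bE\|_\eop / s_{r_\epre}$. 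Thus $\|\Pc_{\hV_\epre} - \Pc_{V_\epre}\|_\eop \le 2\|\bE\|_\eop / s_{r_\epre}$ holds deterministically, and substituting the bound from (i) and absorbing the factor $2$ into $C$ gives the claim with probability at least $1 - 2\exp(-t^2)$. The only real subtlety is quoting Wedin's $\sin\Theta$ theorem in exactly the right form — that it controls the right singular subspaces (it bounds left and right principal angles simultaneously) and that the correct denominator is $s_{r_\epre}$ minus the $(r_\epre+1)$-th singular value of the \emph{observed} matrix rather than of $\bM$; the two-case argument is the standard device for removing the gap hypothesis, after which the rest is bookkeeping.
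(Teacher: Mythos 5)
Your proof is correct and follows essentially the same route as the paper: bound $\|\bY_{\epre,\Ic^{(d)}} - \Ex[\bY_{\epre,\Ic^{(d)}}]\|_\txtop$ via the sub-Gaussian operator-norm concentration of Lemma \ref{lemma:subg_matrix}, then convert this into a bound on $\|\Pc_{\hV_\epre} - \Pc_{V_\epre}\|_\txtop$ via Wedin's theorem, using that $\Ex[\bY_{\epre,\Ic^{(d)}}]$ has rank exactly $r_\epre$. The only difference is cosmetic: the paper invokes a variant of Wedin whose denominator is $s_{r_\epre} - s_{r_\epre+1} = s_{r_\epre}$ computed entirely from the population matrix, whereas you quote the classical form with $\hs_{r_\epre+1}$ in the denominator and remove the resulting gap hypothesis by the standard two-case Weyl argument — a slightly more careful (and self-contained) way to reach the same bound.
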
 
Applying Lemma \ref{lemma:subspace} with Assumption \ref{assumption:spectra}, we have 
\begin{align}\label{eq:wedin_balanced_spectra}
    \| \Pc_{\hV_\pre} - \Pc_{V_\pre} \|_\txtop 
    = O_p \left( \frac{\sqrt{r_\pre}}{\min\{\sqrt{T_0}, \sqrt{N_d}\}}\right). 
\end{align}
Substituting \eqref{eq:wedin_balanced_spectra} and the bound in Lemma \ref{lemma:param_est} into \eqref{eq:intermediate_Cauchy_Schwarz}, we obtain
\begin{align}
    (\Pc_{V_\pre} - \Pc_{\hV_\pre}) \Delta^{(i,d)}
    &= O_p\left(\frac{\sqrt{r_\pre \log(T_0 \numdonors)}}{\min\{\sqrt{T_0}, \sqrt{N_d}\}}  \left[ \frac{r^{3/4}_\pre }{T^{1/4}_0 \numdonors^{1/2}} + \frac{r^{3/2}_\pre }{\min\{T_0,\numdonors\}} \right] \right). \label{eq:dd.0} 
\end{align}

\medskip \noindent 
{\em Bounding term 2. } 
Since $\bhV_{\pre}$ is an isometry, it follows that 
\begin{align}\label{eq:thm1.2.1}
\| \Pc_{\hV_{\pre}} \Delta^{(i,d)}  \|_2^2 & 
= \|\bhV_{\pre}^\top \Delta^{(i,d)}  \|_2^2.
\end{align}
We upper bound $\|\bhV_{\pre}^\top \Delta^{(i,d)}  \|_2^2$ as follows: consider
\begin{align}
&\| \bYpredrpre \Delta^{(i,d)} \|_2^2 
= (\bhV_{\pre}^\top \Delta^{(i,d)})^\top \bhS_{r_\pre}^2 (\bhV_{\pre}^\top \Delta^{(i,d)})   
\geq \hs_{r_\pre}^2 \cdot \| \bhV_{\pre}^\top \Delta^{(i,d)} \|_2^2 \label{eq:key_singular_value_trick}. 
\end{align}
Using \eqref{eq:key_singular_value_trick} and \eqref{eq:thm1.2.1} together implies  
\begin{align}\label{eq:intermediate_bound_2_Lem10.1}
   \| \Pc_{\hV_{\pre}} \Delta^{(i,d)}  \|_2^2 \le \frac{\| \bYpredrpre \Delta^{(i,d)} \|_2^2}{\hs_{r_\pre}^2}. 
\end{align}
To bound the numerator in \eqref{eq:intermediate_bound_2_Lem10.1}, note 
\begin{align}
\| \bYpredrpre \Delta^{(i,d)} \|_2^2 
& \leq 2\| \bYpredrpre \hw^{(i,d)} -\Ex[Y_{\pre, i}] \|_2^2 
+ 2 \| \Ex[Y_{\pre, i}] -\bYpredrpre \tw^{(i,d)}\|_2^2
\\ & = 2 \| \bYpredrpre \hw^{(i,d)} -\Ex[Y_{\pre, i}] \|_2^2
+ 2 \| (\Ex[\bYpred] - \bYpredrpre) \tw^{(i,d)}\|_2^2,  \label{eq:intermediate_bound_1_Lem10.1}
\end{align}
where we have used \eqref{eq:pre_intervention_identification}.
To further upper bound the the second term on the RHS above, we use the following inequality: 
for any $\bA \in \Rb^{a \times b}$, $v \in \Rb^{b}$,
\begin{align}\label{eq:2.inf.ineq}
\| \bA v \|_2 & = \| \sum_{j=1}^b \bA_{\cdot j} v_j \|_2  \leq \left(\max_{j \le b} \| \bA_{\cdot j}\|_2 \right) \cdot \left(\sum_{j=1}^b |v_j|\right) 
= \| \bA\|_{2, \infty} \cdot \|v\|_1,
\end{align}
where $\|\bA\|_{2,\infty} = \max_{j} \| \bA_{\cdot j}\|_2$ and $\bA_{\cdot j}$ represents the $j$-th column of $\bA$. 
Thus,
\begin{align}\label{eq:2_infty_matrix}
\| (\Ex[\bYpred] - \bYpredrpre) \tw^{(i,d)}\|_2^2
\le 
\| \Ex[\bYpred] - \bYpredrpre \|_{2,\infty}^2 \cdot \|\tw^{(i,d)}\|_1^2. 
\end{align}
Substituting \eqref{eq:intermediate_bound_1_Lem10.1} into \eqref{eq:intermediate_bound_2_Lem10.1} and subsequently using \eqref{eq:2_infty_matrix} implies
\begin{align}\label{eq:thm1.2.2}
\| \Pc_{\hV_{\pre}} \Delta^{(i,d)}  \|_2^2  
&\leq \frac{2}{\hs_{r_\pre}^2} \Big( \| \bYpredrpre \hw^{(i,d)} -\Ex[Y_{\pre, i}] \|_2^2 
 + \| \Ex[\bYpred] - \bYpredrpre \|_{2,\infty}^2 \cdot \|\tw^{(i,d)}\|_1^2\Big).
\end{align}
Next, we bound $\| \bYpredrpre \hw^{(i,d)} -\Ex[Y_{\pre, i}] \|_2^2$.
To this end, observe that 
\begin{align}
\| \bYpredrpre \hw^{(i,d)} - Y_{\pre, i} \|_2^2 & = \| \bYpredrpre \hw^{(i,d)} - \Ex[Y_{\pre, i}] - \varepsilon_{\pre, i}\|_2^2 
\\& = \| \bYpredrpre \hw^{(i,d)} - \Ex[Y_{\pre, i}] \|_2^2 + \|\varepsilon_{\pre, i}\|_2^2 
 \\ &\quad - 2 \langle \bYpredrpre \hw^{(i,d)} - \Ex[Y_{\pre, i}], \varepsilon_{\pre, i} \rangle. 
\label{eq:thm1.2.3}
\end{align}
By the optimality conditions of $\hw^{(i,d)}$ and  \eqref{eq:pre_intervention_identification}, we have 
\begin{align}
\| \bYpredrpre \hw^{(i,d)} - Y_{\pre, i} \|_2^2 
&\leq \| \bYpredrpre \tw^{(i,d)} - Y_{\pre, i} \|_2^2
\\ &= \|\bYpredrpre \tw^{(i,d)}-  \Ex[Y_{\pre, i}] - \varepsilon_{\pre, i} \|_2^2  
\\ &= \|\bYpredrpre \tw^{(i,d)}-  \Ex[\bYpred] \tw^{(i,d)} - \varepsilon_{\pre, i} \|_2^2  
\\ &= \|( \bYpredrpre - \Ex[\bYpred])\tw^{(i,d)} \|_2^2  + \| \varepsilon_{\pre, i} \|_2^2 
 \\ &\quad - 2 \langle \bYpredrpre \tw^{(i,d)}-  \Ex[Y_{\pre, i}], \varepsilon_{\pre, i}\rangle.
\label{eq:thm1.2.4}
\end{align}
From \eqref{eq:thm1.2.3} and \eqref{eq:thm1.2.4}, we have 
\begin{align}
\| \bYpredrpre \hw^{(i,d)} - \Ex[Y_{\pre, i}] \|_2^2 
&\leq \| ( \bYpredrpre - \Ex[\bYpred])\tw^{(i,d)} \|_2^2 
 + 2 \langle \bYpredrpre \Delta^{(i,d)}, \varepsilon_{\pre, i} \rangle 
\\ &\leq \| \bYpredrpre - \Ex[\bYpred] \|_{2,\infty}^2 \cdot \|\tw^{(i,d)}\|_1^2 
 \\ &\quad + 2 \langle \bYpredrpre \Delta^{(i,d)}, \varepsilon_{\pre, i} \rangle,
\label{eq:thm1.2.5}
\end{align}
where we used \eqref{eq:2.inf.ineq} in the second inequality above.  
Using \eqref{eq:thm1.2.2} and \eqref{eq:thm1.2.5}, we obtain 
\begin{align} \label{eq:thm1.2}
 \| \Pc_{\hV_{\pre}} \Delta^{(i,d)}  \|_2^2 
& \leq \frac{4}{\hs_{r_\pre}^2} \left( \| \bYpredrpre - \Ex[\bYpred] \|_{2,\infty}^2 \cdot \|\tw^{(i,d)}\|_1^2
+\langle \bYpredrpre \Delta^{(i,d)}, \varepsilon_{\pre, i} \rangle\right).
\end{align}
We now state two helper lemmas that will help us conclude the proof. 
The proof of Lemmas \ref{lemma:hsvt} and \ref{lemma:annoying} are given in Appendices \ref{sec:proof_hsvt_lemma} and \ref{ssec:annoying}, respectively. 

\begin{lemma} \label{lemma:hsvt}
Let Assumptions \ref{assumption:sutva}, \ref{assumption:form}, \ref{assumption:conditional_mean_zero}, 
\ref{assumption:noise}, 
\ref{assumption:boundedness}, 
\ref{assumption:spectra} hold. 
Suppose $k = r_\epre$, where $k$ is defined as in \eqref{eq:pcr}. 
Then conditioned on $\cE$, we have 
$$
    \| \bYepredrpre - \Ex[\bYepred] \|_{2,\infty}
    = O_p \left( \frac{\sqrt{r_\epre T_0 \log(T_0 \numdonors)}}{\min\{\sqrt{T_0}, \sqrt{N_d}\}} \right). 
$$
\end{lemma}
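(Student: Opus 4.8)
The plan is to obtain this bound by invoking Lemma~6.2 of \cite{pcr_aos} after translating notation exactly as in the proof of Lemma~\ref{lemma:param_est}: set $\bX = \Ex[\bY_{\pre,\Ic^{(d)}}]$, $\widetilde{\bZ} = \bY_{\pre,\Ic^{(d)}}$, and identify $\bY^{r_\pre}_{\pre,\Ic^{(d)}}$ with the rank-$r_\pre$ hard-singular-value-thresholded matrix there. Thus the task reduces to checking that the hypotheses of that lemma hold under Assumptions~\ref{assumption:sutva}, \ref{assumption:form}, \ref{assumption:conditional_mean_zero}, \ref{assumption:noise}, \ref{assumption:boundedness}, \ref{assumption:spectra}. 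Conditioned on $\Ec$, sub-Gaussianity of the entries of $\bE_{\pre,\Ic^{(d)}} = \bY_{\pre,\Ic^{(d)}} - \Ex[\bY_{\pre,\Ic^{(d)}}]$ is Assumption~\ref{assumption:noise}; the low-rank, bounded-entry structure of $\Ex[\bY_{\pre,\Ic^{(d)}}]$ comes from Assumptions~\ref{assumption:form} and \ref{assumption:boundedness}; and the well-balancedness $s_i^2 = \Theta(T_0\numdonors/r_\pre)$ of Assumption~\ref{assumption:spectra}, together with boundedness, supplies the incoherence condition required by \cite{pcr_aos}: since each column of $\Ex[\bY_{\pre,\Ic^{(d)}}]$ has $\ell_2$-norm at most $\sqrt{T_0}$ while all singular values are $\Theta(\sqrt{T_0\numdonors/r_\pre})$, the rows of $\bV_\pre$ satisfy $\|\bV_\pre^\prime\|_{2,\infty} = O(\sqrt{r_\pre/\numdonors})$ (the norm $\|\cdot\|_{2,\infty}$ is the one defined in \eqref{eq:2.inf.ineq}).

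For completeness I would then outline the argument internal to that lemma. Writing $\bM := \Ex[\bY_{\pre,\Ic^{(d)}}]$, $\bE := \bE_{\pre,\Ic^{(d)}}$, and using $\bY^{r_\pre}_{\pre,\Ic^{(d)}} = \bY_{\pre,\Ic^{(d)}}\Pc_{\hV_\pre}$ together with $\bM = \bM\Pc_{V_\pre}$, one has
\begin{align}
\bY^{r_\pre}_{\pre,\Ic^{(d)}} - \bM = \bM(\Pc_{\hV_\pre} - \Pc_{V_\pre}) + \bE\,\Pc_{V_\pre} + \bE(\Pc_{\hV_\pre} - \Pc_{V_\pre}).
\end{align}
The middle term is noise projected onto the \emph{deterministic} (given $\Ec$) subspace $\tspan(\bV_\pre)$: its $j$-th column equals $\bE(\bV_\pre \bbar v_j)$ where $\bbar v_j$ is the $j$-th row of $\bV_\pre$ with $\|\bbar v_j\|_2 = O(\sqrt{r_\pre/\numdonors})$ by incoherence, so $\|\bV_\pre\bbar v_j\|_2$ is of the same order, and a sub-Gaussian concentration bound for $\|\bE x\|_2$ plus a union bound over $j\in[\numdonors]$ delivers exactly the rate $O_p(\sqrt{T_0\log(T_0\numdonors)}/\min\{\sqrt{T_0},\sqrt{\numdonors}\})$. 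The operator norm $\|\Pc_{\hV_\pre}-\Pc_{V_\pre}\|_\txtop$ is controlled by Lemma~\ref{lemma:subspace} (Wedin) with Assumption~\ref{assumption:spectra}, giving $O_p(1/\min\{\sqrt{T_0},\sqrt{\numdonors}\})$, and $\|\bE\|_\txtop = O_p(\sqrt{T_0}+\sqrt{\numdonors})$ by Lemma~\ref{lemma:singular_values}; combined with a column-wise refinement using incoherence of $\bhV_\pre$, the first and third terms land within the same budget.

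The hard part — and the reason the statement is imported rather than re-derived — is that the first and third terms couple the \emph{estimated} subspace $\bhV_\pre$ to the very noise $\bE$ that defines it, so the crude estimate $\|\bM(\Pc_{\hV_\pre}-\Pc_{V_\pre})\|_{2,\infty} \le \|\bM\|_\txtop\,\|\Pc_{\hV_\pre}-\Pc_{V_\pre}\|_\txtop$ is hopelessly lossy (it scales like $\max\{\sqrt{T_0},\sqrt{\numdonors}\}$, since $\|\bM\|_\txtop = \Theta(\sqrt{T_0\numdonors/r_\pre})$). Controlling these terms at the claimed \emph{column-wise} rate requires a fine-grained $\ell_{2,\infty}$-type perturbation analysis: one must show $\bhV_\pre$ inherits the incoherence of $\bV_\pre$ and decouple it from $\bE$, typically through a leave-one-out construction. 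This is precisely the content of Lemma~6.2 of \cite{pcr_aos}, which I would invoke. (As flagged in the main text, the $\sqrt{\log(T_0\numdonors)}$ factor is an artifact of the union-bound step and can be removed with more care.)
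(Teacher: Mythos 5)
Your proposal matches the paper's proof: the paper likewise just restates Lemma~6.2 of \cite{pcr_aos} under the identical notational dictionary ($\bX = \Ex[\bY_{\pre,\Ic^{(d)}}]$, $\widetilde{\bZ}^{r} = \bY^{r_\pre}_{\pre,\Ic^{(d)}}$) and then simplifies the resulting bound using the well-balanced spectra condition $s_{r_\pre}^2 = \Theta(T_0\numdonors/r_\pre)$ of Assumption~\ref{assumption:spectra}. Your additional sketch of the internal decomposition and the leave-one-out difficulty goes beyond what the paper writes but is consistent with it, so the approach is essentially the same.
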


\begin{lemma}  \label{lemma:annoying}
Let Assumptions \ref{assumption:sutva} to \ref{assumption:spectra} hold. 
Then, given $\bYepredrpre$ and conditioned on $\cE$, the following holds with respect to the randomness in $\varepsilon_{\epre, i}$: 
$$
     \langle \bYepredrpre \Delta^{(i,d)}, \varepsilon_{\epre, i} \rangle 
     = O_p \left( r_\epre + \sqrt{T_0} +  \| \bYepredrpre - \Ex[\bYepred] \|_{2,\infty} \cdot \| \tw^{(i,d)}\|_1 \right). 
$$
\end{lemma}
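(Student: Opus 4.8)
\textbf{Proof proposal for Lemma \ref{lemma:annoying}.}

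The plan is to decompose the inner product $\langle \bY^{r_\epre}_{\epre, \Ic^{(d)}} (\hw^{(n,d)} - \tw^{(n,d)}), \varepsilon_{\epre, n}\rangle$ into a piece that we can control via the low-rank structure and a piece that we can control via sub-Gaussian concentration. First I would write $\bY^{r_\epre}_{\epre, \Ic^{(d)}} = \Ex[\bY_{\epre, \Ic^{(d)}}] + (\bY^{r_\epre}_{\epre, \Ic^{(d)}} - \Ex[\bY_{\epre, \Ic^{(d)}}])$ and split the inner product accordingly into
\begin{align}
\langle \Ex[\bY_{\epre, \Ic^{(d)}}] (\hw^{(n,d)} - \tw^{(n,d)}), \varepsilon_{\epre, n}\rangle
+ \langle (\bY^{r_\epre}_{\epre, \Ic^{(d)}} - \Ex[\bY_{\epre, \Ic^{(d)}}]) (\hw^{(n,d)} - \tw^{(n,d)}), \varepsilon_{\epre, n}\rangle.
\end{align}
For the first term, the key observation is that $\Ex[\bY_{\epre, \Ic^{(d)}}] = \Ex[\bY_{\epre, \Ic^{(d)}}]\Pc_{V_\epre}$ and, more importantly, that $\Pc_{V_\epre}(\hw^{(n,d)} - \tw^{(n,d)})$ lies in a fixed $r_\epre$-dimensional subspace; combined with Assumption \ref{assumption:boundedness}, which bounds $\|\Ex[\bY_{\epre, \Ic^{(d)}}]\|_{2,\infty}$ and hence the column norms, the vector $\Ex[\bY_{\epre, \Ic^{(d)}}](\hw^{(n,d)} - \tw^{(n,d)})$ has $\ell_2$-norm $O_p(\sqrt{T_0})$ times the norm of the projected error. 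Since $\varepsilon_{\epre, n}$ is a mean-zero sub-Gaussian vector with $\Ex[\varepsilon_{tn}^2] = \sigma^2$ (Assumptions \ref{assumption:conditional_mean_zero}, \ref{assumption:noise}), a Hoeffding-type bound (Lemma \ref{lemma:hoeffding_random}, as used in Lemma \ref{lemmma:standard_convergences}) gives that this inner product is $O_p(\sqrt{T_0})$ provided the projected error is $O_p(1)$, which holds by Lemma \ref{lemma:param_est}. For the second term, I would simply use Cauchy–Schwarz: $|\langle (\bY^{r_\epre}_{\epre, \Ic^{(d)}} - \Ex[\bY_{\epre, \Ic^{(d)}}]) (\hw^{(n,d)} - \tw^{(n,d)}), \varepsilon_{\epre, n}\rangle| \le \|(\bY^{r_\epre}_{\epre, \Ic^{(d)}} - \Ex[\bY_{\epre, \Ic^{(d)}}]) (\hw^{(n,d)} - \tw^{(n,d)})\|_2 \, \|\varepsilon_{\epre, n}\|_2$, then apply the $\ell_{2,\infty}$ inequality \eqref{eq:2.inf.ineq} to get $\|(\bY^{r_\epre}_{\epre, \Ic^{(d)}} - \Ex[\bY_{\epre, \Ic^{(d)}}]) (\hw^{(n,d)} - \tw^{(n,d)})\|_2 \le \|\bY^{r_\epre}_{\epre, \Ic^{(d)}} - \Ex[\bY_{\epre, \Ic^{(d)}}]\|_{2,\infty} \|\hw^{(n,d)} - \tw^{(n,d)}\|_1$, and use $\|\varepsilon_{\epre, n}\|_2 = O_p(\sqrt{T_0})$ for a sub-Gaussian vector of length $T_0$.

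There is a subtlety worth flagging: the statement's conditioning is "given $\bY^{r_\epre}_{\epre, \Ic^{(d)}}$, with respect to the randomness in $\varepsilon_{\epre, n}$." This means $\hw^{(n,d)}$ is \emph{not} fixed (it depends on $Y_{\epre,n}$, which contains $\varepsilon_{\epre,n}$), so I cannot treat $\hw^{(n,d)} - \tw^{(n,d)}$ as a deterministic vector when applying Hoeffding. The cleanest way around this is to first condition additionally on the event $\Ec_1$ from the proof of Theorem \ref{thm:consistency} (on which $\|\hw^{(n,d)} - \tw^{(n,d)}\|_2$ and $\|\Pc_{V_\epre}(\hw^{(n,d)} - \tw^{(n,d)})\|_2$ are appropriately bounded), which holds w.h.p. by Lemma \ref{lemma:param_est}, and then note that the bounds we need are themselves the \emph{magnitudes} appearing in the statement—so replacing $\|\hw^{(n,d)} - \tw^{(n,d)}\|_1$ and the projected error by their high-probability upper bounds suffices. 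Alternatively, one uses a net/decoupling argument over the low-dimensional subspace, but given the target rate is only stated up to $O_p$ this is unnecessary.

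The main obstacle I anticipate is handling the first term cleanly despite the dependence between $\hw^{(n,d)}$ and $\varepsilon_{\epre,n}$, i.e., making rigorous the claim that $\langle \Ex[\bY_{\epre, \Ic^{(d)}}] (\hw^{(n,d)} - \tw^{(n,d)}), \varepsilon_{\epre, n}\rangle = O_p(\sqrt{T_0})$. Everything else (the $\ell_{2,\infty}$ bound, the norm of a sub-Gaussian vector) is routine. The resolution is to absorb the projected-error bound from Lemma \ref{lemma:param_est} (which gives $\|\Pc_{V_\epre}(\hw^{(n,d)} - \tw^{(n,d)})\|_2 = O_p(C(\|\tw^{(n,d)}\|_2)\sqrt{\log(T_0 N_d)}/\min\{\sqrt{T_0},\sqrt{N_d}\}) = o_p(1)$ under the stated scalings, or at worst $O_p(1)$), and observe that $\|\Ex[\bY_{\epre,\Ic^{(d)}}]\|_{2,\infty} \le \sqrt{T_0}$ by Assumption \ref{assumption:boundedness}, so that $\|\Ex[\bY_{\epre,\Ic^{(d)}}](\hw^{(n,d)}-\tw^{(n,d)})\|_2 = O_p(\sqrt{T_0})$; a conditional Hoeffding bound on the event $\Ec_1$ then yields the $O_p(\sqrt{T_0})$ rate for the inner product, since dividing a $\sqrt{T_0}$-norm deterministic-on-$\Ec_1$ direction against a $\sigma$-sub-Gaussian coordinate-wise noise vector of length $T_0$ gives fluctuations of order $\sigma\sqrt{T_0}$, which the $O_p$ notation absorbs. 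Combining the two terms gives exactly $O_p(\sqrt{T_0} + \|\bY^{r_\epre}_{\epre, \Ic^{(d)}} - \Ex[\bY_{\epre, \Ic^{(d)}}]\|_{2,\infty}\|\tw^{(n,d)}\|_1)$, since $\|\hw^{(n,d)} - \tw^{(n,d)}\|_1 = O_p(1)$ can be folded in (or, if one prefers to keep it explicit, it appears multiplied by $\|\bY^{r_\epre}_{\epre, \Ic^{(d)}} - \Ex[\bY_{\epre, \Ic^{(d)}}]\|_{2,\infty}$ which is itself $o_p(1)$).
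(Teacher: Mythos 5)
Your decomposition does not close the argument, and the gap sits exactly where you flag it. In your first term the coefficient vector $\Ex[\bY_{\epre,\Ic^{(d)}}](\hw^{(n,d)}-\tw^{(n,d)})$ depends on $\varepsilon_{\epre,n}$ through $\hw^{(n,d)}$, and conditioning on the high-probability event $\Ec_1$ only bounds its \emph{norm}, not its \emph{direction}. The modified Hoeffding bound (Lemma \ref{lemma:hoeffding_random}) is only valid when the coefficient vector is independent of the sub-Gaussian vector; for a random direction of norm $b$ that is allowed to align with the noise, the inner product can be as large as $b\,\|\varepsilon_{\epre,n}\|_2 = O_p(b\sqrt{T_0})$, i.e.\ you are back to Cauchy--Schwarz. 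Even the net argument over the $r_\epre$-dimensional column space that you dismiss does not rescue the rate: the relevant norm is $\|\Ex[\bY_{\epre,\Ic^{(d)}}]\|_{\txtop}\,\|\hw^{(n,d)}-\tw^{(n,d)}\|_2$, which under Assumption \ref{assumption:spectra} and Lemma \ref{lemma:param_est} is of order $\sqrt{\max\{T_0,\numdonors\}\log(T_0\numdonors)}$ and exceeds $\sqrt{T_0}$ whenever $\numdonors > T_0$ --- precisely the high-dimensional regime the paper cares about. Your second term has the same defect: Cauchy--Schwarz pays an extra $\|\varepsilon_{\epre,n}\|_2 = O_p(\sqrt{T_0})$ factor against $\|\bY^{r_\epre}_{\epre,\Ic^{(d)}}-\Ex[\bY_{\epre,\Ic^{(d)}}]\|_{2,\infty}\|\hw^{(n,d)}-\tw^{(n,d)}\|_1$, whereas the lemma requires $\|\cdot\|_{2,\infty}\|\tw^{(n,d)}\|_1$ with no such factor (and $\|\hw^{(n,d)}-\tw^{(n,d)}\|_1 = O_p(1)$ is itself not implied by Lemma \ref{lemma:param_est} when $\numdonors \gg T_0$).

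The missing idea is to make the dependence of $\hw^{(n,d)}$ on $\varepsilon_{\epre,n}$ explicit \emph{before} splitting. Since $\hw^{(n,d)} = \bhV_\epre\bhS_\epre^{-1}\bhU_\epre'Y_{\epre,n}$ and $Y_{\epre,n}=\Ex[Y_{\epre,n}]+\varepsilon_{\epre,n}$, one has $\bY^{r_\epre}_{\epre,\Ic^{(d)}}\hw^{(n,d)} = \Pc_{\hU_\epre}\Ex[Y_{\epre,n}] + \Pc_{\hU_\epre}\varepsilon_{\epre,n}$, so the inner product decomposes as $\langle\Pc_{\hU_\epre}\Ex[Y_{\epre,n}],\varepsilon_{\epre,n}\rangle + \langle\Pc_{\hU_\epre}\varepsilon_{\epre,n},\varepsilon_{\epre,n}\rangle - \langle\bY^{r_\epre}_{\epre,\Ic^{(d)}}\tw^{(n,d)},\varepsilon_{\epre,n}\rangle$. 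The first and third terms are linear forms whose coefficient vectors are measurable with respect to $\bY_{\epre,\Ic^{(d)}}$, which \emph{is} independent of $\varepsilon_{\epre,n}$, so Hoeffding legitimately applies and yields $O_p(\sqrt{T_0})$ and $O_p\bigl(\sqrt{T_0}+\|\bY^{r_\epre}_{\epre,\Ic^{(d)}}-\Ex[\bY_{\epre,\Ic^{(d)}}]\|_{2,\infty}\|\tw^{(n,d)}\|_1\bigr)$ respectively --- note the $\|\cdot\|_{2,\infty}\|\tw^{(n,d)}\|_1$ term enters as the norm of a Hoeffding coefficient vector, not through Cauchy--Schwarz, which is why no extra $\sqrt{T_0}$ appears. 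The middle term is a genuine quadratic form in $\varepsilon_{\epre,n}$ --- this is where the dependence you worried about actually lives --- with mean $\sigma^2 r_\epre$, and it is controlled by Hanson--Wright (Lemma \ref{lemma:hansonwright_random}) as $O_p(1)$. Your route keeps this quadratic contribution hidden inside $\hw^{(n,d)}$, and no conditional Hoeffding argument can recover it.
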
 

Incorporating Lemmas \ref{lemma:singular_values}, \ref{lemma:coeff_bound}, \ref{lemma:hsvt}, \ref{lemma:annoying}, and Assumption \ref{assumption:spectra} 
into \eqref{eq:thm1.2}, we conclude
\begin{align}
&\Pc_{\hV_{\pre}} \Delta^{(i,d)}
\\ &= O_p \left(\frac{r_\pre \| \tw^{(i,d)}\|_1 \sqrt{\log(T_0 \numdonors)}}{\sqrt{N_d} \cdot \min\{\sqrt{T_0}, \sqrt{N_d} \}} + \frac{r_\pre}{\sqrt{N_d T_0}} + \frac{\sqrt{r_\pre}}{\sqrt{N_d} T_0^{\frac 1 4}} 
 + \frac{r^{3/4}_\pre \| \tw^{(i,d)}\|^{1/2}_1 \log^{1/4}(T_0 \numdonors)}{\sqrt{N_d} T^{1/4}_0 \min\{T^{1/4}_0, N^{1/4}_d \}}\right)
\\ &= O_p \left( \frac{\sqrt{r_\pre}}{\sqrt{N_d} T_0^{\frac 1 4}} + \frac{r^{3/2}_\pre  \sqrt{\log(T_0 \numdonors)}}{\sqrt{N_d} \cdot \min\{\sqrt{T_0}, \sqrt{N_d} \}}  \right). 
\label{eq:dd.1} 
\end{align} 

\vspace{5pt} 
\noindent 
{\em Collecting terms. } 
Combining \eqref{eq:dd.0} and \eqref{eq:dd.1}, we conclude
\begin{align}
&\Pc_{V_\pre} \Delta^{(i,d)} 
\\&= O_p \left( \frac{\sqrt{r_\pre}}{\sqrt{N_d} T_0^{\frac 1 4}} + \frac{r^{3/2}_\pre  \sqrt{\log(T_0 \numdonors)}}{\sqrt{N_d} \cdot \min\{\sqrt{T_0}, \sqrt{N_d} \}}   
+ \frac{\sqrt{r_\pre \log(T_0 \numdonors)}}{\min\{\sqrt{T_0}, \sqrt{N_d}\}}  \left[ \frac{r^{3/4}_\pre }{T^{1/4}_0 \numdonors^{1/2}} + \frac{r^{3/2}_\pre }{\min\{T_0,\numdonors\}} \right]
\right)
\\&= O_p \left( \frac{\sqrt{r_\pre}}{\sqrt{N_d} T_0^{\frac 1 4}} + \frac{r^{3/2}_\pre  \sqrt{\log(T_0 \numdonors)}}{\sqrt{N_d} \cdot \min\{\sqrt{T_0}, \sqrt{N_d} \}}   
+   \frac{r^{2}_\pre \sqrt{\log(T_0 \numdonors)}}{\min\{T_0^{3/2},\numdonors^{3/2}\}}
\right).
\end{align}
\end{proof} 

% Proof of key lemma 2
\subsection{Proof of Lemma \ref{lemma:subspace}} \label{sec:consistency.2} 
We recall the well-known singular subspace perturbation result by \cite{Wedin1972PerturbationBI}.  

\begin{lemma} [Wedin's Theorem \cite{Wedin1972PerturbationBI}] \label{thm:wedin}
Given $\bA, \bB \in \Rb^{m \times n}$, let $\bV, \bhV \in \Rb^{n \times n}$ denote their respective right singular vectors.
Further, let $\bV_{k} \in \Rb^{n \times k}$ (respectively, $\bhV_{k} \in \Rb^{n \times k}$) correspond to the truncation of $\bV$ (respectively, $\bhV$), respectively, that retains the columns corresponding to the top $k$ singular values of $\bA$ (respectively, $\bB$).  
Let $s_{i}$ represent the $i$-th singular value of $\bA$.
Then, 
$$
	 \| \Pc_{V_{k}} - \Pc_{\hV_{k}} \|_\eop  \leq \frac{2\|\bA - \bB\|_\eop} { s_{k} - s_{k+1}}.
$$
\end{lemma}

\begin{proof} %{Proof of Lemma~\ref{lemma:subspace}.} 
Recall that $\bhV_\pre$ is formed by the top $r_\pre$ right singular vectors of $\bYpred$
and $\bV_\pre$ is formed by the top $r_\pre$ singular vectors of $\Ex[\bYpred]$. 
Therefore, Lemma \ref{thm:wedin} gives
$$
    \| \Pc_{\hV_\pre} - \Pc_{V_\pre} \|_\txtop \le 
    \frac{2 \| \bYpred - \Ex[\bYpred] \|_\txtop} {s_{r_\pre}},
$$
where we used $\rank(\Ex[\bYpred]) = r_\pre$ and hence $s_{r_{\pre}+1}=0$. 
By Assumption \ref{assumption:noise} and Lemma \ref{lemma:subg_matrix}, we can further bound the inequality above. 
In particular, for any $\zeta >0$, we have
w.p. at least $1-2\exp(-\zeta^2)$,
$$
    \| \Pc_{\hV_\pre} - \Pc_{V_\pre} \|_\txtop \le \frac{C \sigma (\sqrt{T_0} + \sqrt{\numdonors} + \zeta)}{s_{r_\pre}},
$$
where $C>0$ is an absolute constant. 
\end{proof} 

% LEMMA 3
\subsection{Proof of Lemma \ref{lemma:hsvt}}\label{sec:proof_hsvt_lemma}
We first re-state Lemma 7.2 of \cite{pcr_aos}. 

\begin{lemma} [Lemma 7.2 of \cite{pcr_aos}]
Let the setup of Lemma \ref{lemma:hsvt} hold.
Then w.p. at least $1 - O(1 / (T_0 \numdonors)^{10})$,
\begin{align}
    &\|\Ex[\bYepred] - \bYepredrpre \|^2_{2, \infty} 
    \\& \quad \le C(\sigma) \left\{\frac{(T_0 + \numdonors) \cdot (T_0 + \sqrt{T_0}\log(T_0 \numdonors))}{s_{r_\epre}^2} + r_\epre + \sqrt{r_\epre }\log(T_0 \numdonors) \right\}
    + \frac{\log(T_0 \numdonors)}{\numdonors}, \qquad \label{eq:hsvt_heler_iew_pcr}
\end{align}
where $C(\sigma)$ is a constant that only depends on $\sigma$.
\end{lemma}

\begin{proof} %{Proof of Lemma~\ref{lemma:hsvt}.} 
We adapt the notation in \cite{pcr_aos} to that used in this paper.
In particular, $\bX = \Ex[\bYpred]$, $\widetilde{\bZ}^{r} = \bYpredrpre $, where 
$\bX, \widetilde{\bZ}^{r}$ are the notations used in \cite{pcr_aos} with $r = r_\pre$.

Next, we simplify \eqref{eq:hsvt_heler_iew_pcr} using Assumption \ref{assumption:spectra}. 
W.p. at least $1 - O(1 / (T_0 \numdonors)^{10})$, we have
\begin{align}\label{eq:hsvt_heler_iew_pcr_2}
    &\|\Ex[\bYpred] - \bYpredrpre \|^2_{2, \infty} \le C(\sigma) \left(\frac{r_\pre T_0\log(T_0 \numdonors)}{\min\{T_0, \numdonors\}}\right).
\end{align}
This concludes the proof of Lemma \ref{lemma:hsvt}.
\end{proof}

% Proof of more lemmas 
\subsection{Proof of Lemma \ref{lemma:annoying}}\label{ssec:annoying}
Throughout this proof, $C, c >0$ will denote absolute constants, which can change from line to line or even within a line.
\medskip 
\begin{proof} %{Proof of Lemma~\ref{lemma:annoying}.}
Recall $\hw^{(i,d)} = \bhV_\pre \bhS^{-1}_\pre \bhU^\top_\pre Y_{\pre, i}$ and $Y_{\pre, i} = \Ex[Y_{\pre, i}] + \varepsilon_{\pre, i}$. 
Thus, 
\begin{align} \label{eq:rewrite} 
\bYpredrpre \hw^{(i,d)} = \bhU_\pre\bhS_\pre \bhV_\pre^\top \bhV_\pre \bhS_\pre^{-1} \bhU_\pre^\top Y_{\pre, i}  
=  \Pc_{\hU_\pre} \Ex[Y_{\pre, i}] +  \Pc_{\hU_\pre} \varepsilon_{\pre, i}.
\end{align}
We then obtain 
\begin{align}
\langle \bYpredrpre (\hw^{(i,d)} - \tw^{(i,d)}),  \varepsilon_{\pre, i} \rangle 
 &= \langle  \Pc_{\hU_\pre} \Ex[Y_{\pre, i}], \varepsilon_{\pre, i} \rangle 
 + \langle   \Pc_{\hU_\pre} \varepsilon_{\pre, i}, \varepsilon_{\pre, i} \rangle 
  \\ &\quad - \langle \bhU_\pre\bhS_\pre \bhV_\pre^\top \tw^{(i,d)}, \varepsilon_{\pre, i} \rangle.  \label{eq:lem4.0}
\end{align}
Note that $\varepsilon_{\pre, i}$ is independent of $\bYpred$, and thus also independent of $\bhU_\pre, \bYpredrpre$. 
Therefore, 
\begin{align}
    \Ex\Big[\langle  \Pc_{\hU_\pre} \Ex[Y_{\pre, i}], \varepsilon_{\pre, i} \Big] &= 0, \label{eq:lemma10.5_exp_2}
    \\ \Ex\Big[\langle \bhU_\pre\bhS_\pre \bhV_\pre^\top \tw^{(i,d)}, \varepsilon_{\pre, i}\rangle \Big] &= 0 \label{eq:lemma10.5_exp_3}
\end{align}
Moreover, using the cyclic property of the trace operator, we obtain 
\begin{align}
\Ex\big[\langle \Pc_{\hU_\pre} \varepsilon_{\pre, i}, \varepsilon_{\pre, i}  \rangle\big] 
&= \Ex\big[ \varepsilon^\top_{\pre, i}  \Pc_{\hU_\pre} \varepsilon_{\pre, i} \big] 
= \Ex \big[\tr(\varepsilon^\top_{\pre, i}  \Pc_{\hU_\pre} \varepsilon_{\pre, i})\big] 
%
%\\ &= \Ex\big[\tr( \varepsilon_{\pre, i}\varepsilon^\top_{\pre, i} \Pc_{\hU_\pre})\big] 
%
\\ &= \tr(\Ex\big[ \varepsilon_{\pre, i} \varepsilon^\top_{\pre, i}\big] \Pc_{\hU_\pre}) 
= \tr(\sigma^2 \Pc_{\hU_\pre})  
\\ &=  \sigma^2 \cdot r_\pre. 
\label{eq:thm1.lem3.0}
\end{align}
Accordingly, we have 
\begin{align}\label{eq:thm1.lem3.1}
\Ex \left[\langle \bYpredrpre \Delta^{(i,d)},  \varepsilon_{\pre, i} \rangle \right] & = \sigma^2 \cdot r_\pre. 
\end{align}
Next, to obtain high probability bounds for the inner product term, we use the following lemmas, the proofs of which can be found in \cite[Appendix D]{pcr_aos}. 

\begin{lemma} [Modified Hoeffding's Lemma] \label{lemma:hoeffding_random} 
Let $X \in \Rb^n$ be r.v. with independent mean-zero sub-Gaussian random coordinates with $\| X_i \|_{\psi_2} \le K$.
Let $a \in \Rb^n$ be another random vector that satisfies $\|a\|_2 \le b$ for some constant $b \ge 0$.
Then for any $\zeta \ge 0$, 
$
	\Pb \Big( \Big| \sum_{i=1}^n a_i X_i\Big| \ge \zeta \Big) \le 2 \exp\Big(-\frac{c\zeta^2}{K^2 b^2} \Big),
$
where $c > 0$ is a universal constant. 
\end{lemma}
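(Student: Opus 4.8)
The statement is a mild strengthening of the textbook sub-Gaussian Hoeffding bound to the case where the weight vector is itself random, so I would prove it by conditioning on $a$, invoking the deterministic-weight version, and then integrating over $a$. The only subtlety --- which is really the content of the lemma --- is the independence of $a$ and $X$: in each place the lemma is applied in Appendix~\ref{ssec:annoying}, the weights are read off from $\bY^{r_\pre}_{\pre, \Ic^{(d)}}$ and its singular subspaces, which are measurable functions of the donor noise $\{\varepsilon_{tj}: j\in\Ic^{(d)}\}$ and hence independent of $X=\varepsilon_{\pre,n}$; I would make this independence explicit as part of the hypothesis before starting.

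\textbf{Step 1: deterministic weights.} First I would fix a deterministic $\alpha\in\Rb^n$ with $\|\alpha\|_2\le b$. Since the $X_i$ are independent, mean zero, and sub-Gaussian with $\|X_i\|_{\psi_2}\le K$, the summands $\alpha_i X_i$ are independent, mean zero, and sub-Gaussian with $\|\alpha_i X_i\|_{\psi_2}=|\alpha_i|\,\|X_i\|_{\psi_2}\le|\alpha_i|K$. Using the standard fact that a sum of independent centered sub-Gaussians is sub-Gaussian with squared parameter at most a constant times the sum of the individual squared parameters (Proposition~2.6.1 of \cite{vershynin2018high}), I obtain $\| \sum_i \alpha_i X_i \|_{\psi_2}^2 \le C K^2 \sum_i \alpha_i^2 = C K^2 \|\alpha\|_2^2 \le C K^2 b^2$, and the sub-Gaussian tail inequality then gives $\Pb(|\sum_i\alpha_i X_i|\ge t)\le 2\exp(-c t^2/(K^2 b^2))$ for all $t\ge 0$. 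Equivalently, I could bound the MGF $\Ex[\exp(\lambda\sum_i\alpha_i X_i)]\le\prod_i\exp(C\lambda^2\alpha_i^2K^2)\le\exp(C\lambda^2K^2b^2)$ and apply a Chernoff bound; this route also makes the constants explicit.

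\textbf{Step 2: randomize the weights.} Next I would let $a$ be the random weight vector, with $\|a\|_2\le b$ almost surely and $a$ independent of $X$. Conditioning on $a$ freezes it to a fixed vector of norm at most $b$, so the bound from Step~1 applies to the conditional law: $\Pb(|\sum_i a_i X_i|\ge t\mid a)\le 2\exp(-c t^2/(K^2 b^2))$ almost surely. Taking expectation over $a$ via the tower property removes the conditioning and yields exactly the claimed bound, which finishes the argument.

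\textbf{Expected obstacle.} There is no genuine analytic difficulty; the one point requiring care is the measurability bookkeeping in Step~2, because the conclusion is genuinely false without independence of $a$ and $X$ (taking $a=X/\|X\|_2$, which has $\|a\|_2=1$, makes $\sum_i a_iX_i=\|X\|_2$, which exceeds the sub-Gaussian scale with overwhelming probability). So the ``proof'' mostly amounts to a clean statement of the hypotheses plus a one-line conditioning argument, which is presumably why the paper simply cites Appendix~A of \cite{pcr_aos}.
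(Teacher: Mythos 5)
Your proposal is correct and follows the same route as the proof the paper defers to (Appendix~A of \cite{pcr_aos}): fix the weights, apply the standard sub-Gaussian Hoeffding bound via $\|\sum_i \alpha_i X_i\|_{\psi_2}^2 \le CK^2\|\alpha\|_2^2$, then condition on $a$ and integrate. You are also right that independence of $a$ and $X$ is a necessary hypothesis that the lemma statement omits; your counterexample $a = X/\|X\|_2$ is exactly why, and in every application in Appendix~\ref{ssec:annoying} the weights are measurable with respect to $\bY_{\pre,\Ic^{(d)}}$ and hence independent of $\varepsilon_{\pre,n}$, so the conditioning step goes through as you describe.
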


\begin{lemma} [Modified Hanson-Wright Inequality] \label{lemma:hansonwright_random}
Let $X \in \Rb^n$ be a r.v. with independent mean-zero sub-Gaussian coordinates with $\|X_i\|_{\psi_2} \le K$. 
Let $\bA \in \Rb^{n \times n}$ be a random matrix satisfying $\|\bA\|_{\eop}  \le a$ and $\|\bA\|_F^2 \, \le b$ for some $a, b \ge 0$.
Then for any $\zeta \ge 0$,
\begin{align*}
	\Pb \left( \abs{ X^\top \bA X - \Ex[X^\top \bA X] } \ge \zeta \right) &\le 2 \exp \left\{ -c \min\left(\frac{\zeta^2}{K^4 b}, \frac{\zeta}{K^2 a} \right) \right\}. 
\end{align*}
\end{lemma}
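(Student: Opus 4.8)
The plan is to obtain this randomized Hanson--Wright bound by reducing to the classical Hanson--Wright inequality for \emph{deterministic} matrices (e.g.\ Theorem~6.2.1 of \cite{vershynin2018high}) by conditioning on $\bA$. One hypothesis left implicit in the statement, and which I would make explicit first, is that $\bA$ is independent of $X$; this holds in every use of the lemma in the paper (e.g.\ in Lemma~\ref{lemma:annoying}, where $\bA = \Pc_{\hU_\pre}$ is built from $\bY_{\pre,\Ic^{(d)}}$ and hence is independent of the noise vector $\varepsilon_{\pre,n}$), and without it the inequality can fail (take $\bA$ adversarially depending on $X$). So this is where I would begin.

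I would then invoke the deterministic inequality: for any fixed $M\in\Rb^{n\times n}$ and any $X$ with independent mean-zero coordinates satisfying $\|X_i\|_{\psi_2}\le K$,
\begin{align*}
\Pb\big(\,|X^\top M X-\Ex[X^\top M X]|\ge t\,\big)\;\le\;2\exp\!\Big(-c\min\Big(\frac{t^2}{K^4\|M\|_F^2},\;\frac{t}{K^2\|M\|_{\eop}}\Big)\Big).
\end{align*}
If a self-contained derivation were wanted, the standard route is to decouple the quadratic form and split $X^\top M X-\Ex[X^\top M X]$ into a diagonal part $\sum_i(X_i^2-\Ex X_i^2)M_{ii}$, a sum of independent centered sub-exponential terms controlled by a Bernstein bound with parameters $\|M\|_F^2$ and $\|M\|_{\eop}$, and an off-diagonal part $\sum_{i\ne j}X_iX_jM_{ij}$, controlled by a decoupling/chaos argument; combining them gives the mixed sub-Gaussian/sub-exponential tail above. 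I would treat this as a citation rather than reprove it.

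The key step is the conditioning argument. Fix a realization $A_0$ of $\bA$; since $\bA\perp X$, the conditional law of $X$ given $\bA=A_0$ is unchanged, so the displayed bound applies with $M=A_0$, giving a conditional tail bound with $\|A_0\|_F^2$ and $\|A_0\|_{\eop}$ in the exponent. Since $x\mapsto t^2/(K^4x)$ and $x\mapsto t/(K^2x)$ are non-increasing and, by hypothesis, $\|A_0\|_F^2\le b$ and $\|A_0\|_{\eop}\le a$ for every $A_0$ in the support of $\bA$, this conditional bound is dominated by $2\exp(-c\min(t^2/(K^4b),\,t/(K^2a)))$ uniformly in $A_0$; taking expectation over $\bA$ (tower property) then gives the stated unconditional bound. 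The one delicate point is the centering: what concentrates conditionally is $X^\top\bA X$ around its \emph{conditional} mean $\Ex[X^\top\bA X\mid\bA]=\sum_i\Ex[X_i^2]\bA_{ii}$, which in general is not $\Ex[X^\top\bA X]$; in the sole application (Lemma~\ref{lemma:annoying}, $\bA=\Pc_{\hU_\pre}$) this conditional mean equals the constant $\sigma^2r_\pre$, so the distinction is immaterial there. Overall, the only genuinely non-trivial ingredient is the classical inequality itself, which I would cite; the randomized extension is routine, the ``main obstacle'' being purely bookkeeping---stating the independence hypothesis and keeping the centering term straight.
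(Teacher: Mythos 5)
The paper does not prove this lemma itself but defers to Appendix A of \cite{pcr_aos}, where the argument is precisely the reduction you describe: condition on $\bA$ (independent of $X$ in every application), apply the deterministic Hanson--Wright inequality, and use the uniform bounds $\|\bA\|_{\eop}\le a$ and $\|\bA\|_F^2\le b$ to make the conditional tail bound uniform in the realization of $\bA$ before integrating it out. Your proposal is correct and matches that route, and the two caveats you flag---that independence of $\bA$ and $X$ must be assumed, and that the concentration is around the conditional mean $\Ex[X^\top\bA X\mid\bA]$ rather than the unconditional one---are exactly the right points to make explicit; in the paper's sole application the two centerings coincide because $\Ex[X^\top \bA X \mid \bA] = \sigma^2\tr(\Pc_{\hU_\pre})=\sigma^2 r_\pre$ is deterministic.
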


Using Lemma \ref{lemma:hoeffding_random} and \eqref{eq:lemma10.5_exp_2}, and 
Assumptions \ref{assumption:form} and \ref{assumption:noise}, 
it follows that for any $\zeta > 0$
\begin{align}\label{eq:lem4.1}
\Pb\left(  \langle \Pc_{\hU_\pre} \Ex[Y_{\pre, i}], \varepsilon_{\pre, i} \rangle  \geq \zeta \right) & \leq  \exp\left( - \frac{c \zeta^2}{T_0 \sigma^2 } \right).
\end{align}
Note that the above also uses
$
    \| \Pc_{\hU_\pre} \Ex[Y_{\pre, i}] \|_2
    \le \| \Ex[Y_{\pre, i}] \|_2 
    \le \sqrt{T_0}, 
$
which follows from the fact that $\|\Pc_{\hU_\pre}\|_\txtop \le 1$ and Assumption \ref{assumption:boundedness}.
Further, \eqref{eq:lem4.1} implies  
\begin{align}\label{eq:lem4.1_Op}
    \langle \Pc_{\hU_\pre} \Ex[Y_{\pre, i}], \varepsilon_{\pre, i} \rangle  = O_p \left( \sqrt{T_0} \right). 
\end{align}
Similarly, using \eqref{eq:lemma10.5_exp_3}, we have for any $\zeta > 0$
\begin{align}\label{eq:lem4.2}
&\Pb\left(  \langle \bhU_\pre\bhS_\pre \bhV_\pre^\top\tw^{(i,d)}, \varepsilon_{\pre, i} \rangle  \geq \zeta \right)
\\ &\quad
 \leq  \exp\left\{ - \frac{c \zeta^2}{\sigma^2 (T_0 +\| \bYpredrpre - \Ex[ \bYpred]\|^2_{2,\infty} \cdot \|\tw^{(i,d)}\|_1^2)  }  \right\},
\end{align}
where we use the fact that
\begin{align}
\| \bhU_\pre\bhS_\pre \bhV_\pre^\top\tw^{(i,d)}\|_2 
&= \| \bYpredrpre \tw^{(i,d)} \pm  \Ex[Y_{\pre, i}] \|_2 
\\ & = \| (\bYpredrpre - \Ex[\bYpred])\tw^{(i,d)} + \Ex[Y_{\pre, i}]\|_2 
\\ &\leq \| (\bYpredrpre - \Ex[\bYpred]) \tw^{(i,d)}\|_2 + \|\Ex[Y_{\pre, i}] \|_2 \nonumber \\
& \leq \|\bYpredrpre - \Ex[\bYpred]\|_{2,\infty} \cdot \| \tw^{(i,d)}\|_1 + \sqrt{T_0}.
\end{align}
In the inequalities above, 
we use $\Ex[\bYpred] \cdot \tw^{(i,d)} = \Ex[\bYpred] \cdot w^{(i,d)} = \Ex[Y_{\pre, i}]$, which follows from \eqref{eq:theta.0} and \eqref{eq:2.inf.ineq}. 
Then, \eqref{eq:lem4.2} implies 
\begin{align}\label{eq:lem4.2_Op}
    \langle \bhU_\pre\bhS_\pre \bhV_\pre^\top\tw^{(i,d)}, \varepsilon_{\pre, i} \rangle  
    = O_p\left( 
    \|\bYpredrpre - \Ex[\bYpred]\|_{2,\infty} \cdot \| \tw^{(i,d)}\|_1 + \sqrt{T_0}
\right)
\end{align}
Finally, using Lemma \ref{lemma:hansonwright_random}, \eqref{eq:thm1.lem3.1},  Assumptions \ref{assumption:form} and \ref{assumption:noise}, we have for any $\zeta > 0$
\begin{align}\label{eq:lem4.3}
\Pb\left(  \langle  \Pc_{\hU_\pre} \varepsilon_{\pre, i}, \varepsilon_{\pre, i} \rangle   \geq \sigma^2 r_\pre + \zeta \right) & \leq  \exp\left\{ - c \min\left(\frac{\zeta^2}{\sigma^4 r_\pre}, \frac{\zeta}{\sigma^2}\right)\right\},
\end{align}
where we have used $\|\Pc_{\hU_\pre}\|_\txtop \le 1$ and $\|\Pc_{\hU_\pre}\|^2_F = r_\pre$.
Then, \eqref{eq:lem4.3} implies 
\begin{align}\label{eq:lem4.3_Op}
    \langle \Pc_{\hU_\pre} \varepsilon_{\pre, i}, \varepsilon_{\pre, i}\rangle  = O_p\left( r_\pre \right).
\end{align}
From \eqref{eq:lem4.0}, \eqref{eq:lem4.1_Op}, \eqref{eq:lem4.2_Op}, and \eqref{eq:lem4.3_Op}, we arrive at our desired result.  
%
%$$
%\langle \bYpredrpre \Delta^{(i,d)},  \varepsilon_{\pre, i} \rangle  
%= O_p\left( r_\pre + \sqrt{T_0} + 
%    \|\bYpredrpre - \Ex[\bYpred]\|_{2,\infty} \cdot \| \tw^{(i,d)}\|_1
%\right). 
%$$
%
\end{proof}

% discussions 
% NONLINEAR LVM 
\section{Smooth Non-linear Latent Variable Models} \label{sec:nonlinear_lvm}
We expound on Remark~\ref{remark:nonlinear_lvms}. 
Specifically, we show that sufficiently continuous non-linear latent variable models (LVMs) are arbitrarily well-approximated by linear factor models of appropriate dimension as $N, T$ grow.
In this sense, the factor model in Assumption \ref{assumption:form} serves as a good approximation for non-linear LVMs when we have many units and time periods.

% continuous 
\subsection{Definitions}
We first precisely define the notion of sufficiently continuous.
\begin{definition}[H\"older continuity]\label{def:holder}
{\em
The H\"older class $\Hc(q,S,C_H)$ on $[0, 1)^q$ is the set of functions $h: [0, 1)^q \to \mathbb{R}$ whose partial derivatives satisfy
$$
\sum_{s: |s| = \lfloor S \rfloor} \frac{1}{s!} |\nabla_s h(\mu) - \nabla_s h(\mu') | \le C_H \norm{\mu - \mu'}_{\max}^{S - \lfloor S \rfloor},\quad \forall \mu, \mu' \in [0, 1)^q,
$$
where $\lfloor S \rfloor$ denotes the largest integer strictly smaller than $S$.
Here, $s = (s_1, \dots, s_q)$ is a multi-index with $s_\ell \in \Nb$, $|s| = \sum^q_{\ell = 1} s_\ell$,
and $\nabla_s h(\cdot) = \frac{\partial^{s_1} \dots \partial^{s_q}}{\partial \mu_{1} \dots \partial \mu_{q}} h(\cdot)$. 
}
\end{definition}

Next, we define ``smooth'' non-linear LVMs as follows.
\begin{definition}[``Smooth'' non-linear LVMs]\label{def:smooth_LVM}
{\em 
The potential outcomes are said to satisfy a smooth non-linear factor model if 
(i) $\Ex[Y_{ti}^{(d)}] = f(\rho^{(d)}_{t}, \phi_i)$, where $f: [0, 1)^{2q} \to \Rb$, and $\phi_i, \rho^{(d)}_{t} \in [0, 1)^q$;
(ii) $f(\rho^{(d)}_{t}, \cdot) \in \Hc(q,S,C_H)$ for all $t, d \in [T] \times [D]_0$.
}
\end{definition}

Intuitively, Definition~\ref{def:smooth_LVM} states that for a pair of units $(i_1, i_2)$, we have $\Ex[Y_{ti_1}^{(d)}] \approx \Ex[Y_{ti_2}^{(d)}]$ for all $(t,d)$ if $\phi_{i1} \approx \phi_{i2}$. 
The dimension $q$ of the latent variables $(\phi_i, \rho^{(d)}_{t})$ encodes the complexity of representing the various characteristics of units, time, and treatments that affect the potential outcomes.
Note that a linear factor model is a special case of a non-linear LVM where $f(\rho^{(d)}_{t}, \phi_i) = \langle \rho^{(d)}_{t}, \phi_i \rangle$.
Such a model satisfies Definition \ref{def:holder} for all $S \in \mathbb{N}$ and $C_H = C$, for some absolute finite, positive constant $C$. 

% formal result
\subsection{Formal Result} 
Proposition \ref{prop:factor_model_approx} establishes that linear factor models of appropriate dimension also provide a good approximation for non-linear LVMs. 
\begin{proposition}[\cite{pcr_aos, xu2017rates}]\label{prop:factor_model_approx}
Suppose the potential outcomes satisfy Definition \ref{def:holder} for some fixed $\mathcal{H}(q,S,C_H)$.
Then for any $\delta>0$, there exists latent variables $u^{(d)}_t, v_i \in \Rb^r$ such that for all $(t,i,d)$, we have $| \Ex[Y_{ti}^{(d)}] - \sum^r_{\ell = 1} u^{(d)}_{t\ell} v_{i\ell} | \leq \Delta_E$, where $r \le C \cdot \delta^{-q}$ and $\Delta_E \le C_H \cdot \delta^S$; here, $C$ is allowed to depend on $(q,S)$. 
\end{proposition}

By choosing $\delta = \min\{N, T\}^{-c / q}$ for $c \in (0, 1)$,  
Proposition \ref{prop:factor_model_approx} implies that $\Delta_E \to 0$ as $N, T$ grow and $r \ll \min\{N, T\}$.
From this, we identify that linear latent factor models of appropriate dimension serve as a good approximation for non-linear LVMs as $N, T$ grow.

Further, we note that H\"older continuous functions are closed under composition.
For example, if $\Ex[Y^{(d)}_{ti}]$ satisfies Definition \ref{def:holder}, then $\log(\Ex[Y^{(d)}_{ti}])$ also satisfies it as $\log(\cdot)$ is an analytic function.
In contrast, a two-way fixed effects model, which is the canonical model for methods such as Difference-in-Differences, 
is of the form 
\begin{align}\label{eq:two_way}
    \Ex[Y^{(d)}_{ti}] = u^{(d)}_t + v_i.
\end{align}
This is not robust to the choice of parameterization, e.g., if \eqref{eq:two_way} holds for $\Ex[Y^{(d)}_{ti}]$ then 
it is not guaranteed to hold for $\log(\Ex[Y^{(d)}_{ti}])$.
%
%Extending our results to the case of such smooth non-linear latent factor models remains interesting future work.
  
\section{Incorporating Covariates} \label{sec:covariates}
In this section, we discuss how meaningful unit-level covariates can help improve estimation. 
Let $\bX = [X_{ki}] \in \Rb^{K \times N}$ denote the matrix of covariates across units, i.e.,
$X_{ki}$ denotes the $k$-th descriptor (or feature) of unit $i$. 

% structure
\subsection{Assumptions}
One approach towards incorporating covariates into the \SI~estimator
described in Section \ref{sec:algo}, is to impose the following structure on $\bX$. 

\begin{assumption} [Covariate structure] \label{assump:covariates}
For any $k \in [K]$ and $i \in [N]$, let $X_{ki} = \langle \varphi_k, v_i \rangle + \zeta_{ki}$. 
Here, $\varphi_k \in \Rb^r$ represents a latent factor specific to descriptor $k$, 
$v_i \in \Rb^r$ is the unit latent factor as defined in \eqref{eq:form.0}, 
and $\zeta_{ki} \in \Rb$ is a mean zero measurement noise specific 
to descriptor $k$ and unit $i$. 
\end{assumption} 

The key structure of Assumption~\ref{assump:covariates} is that the covariates $X_{ki}$ share the same 
latent unit factors, $v_i$, as the potential outcomes $Y^{(d)}_{tn}$.
Thus, given a target unit $i$ and subgroup $\Ic^{(d)}$, this allows us to express unit $i$'s covariates as a linear 
combination of the covariates associated with  units within $\Ic^{(d)}$ via the {\em same} linear model that describes 
the relationship between their respective potential outcomes (formalized in Proposition~\ref{prop:covariates} below). 
One notable flexibility of our covariate model is that the observations of covariates can be {\em noisy} due to the presence of measurement
noise $\zeta_{ki}$.

% proposition 
\begin{proposition} \label{prop:covariates}
Given $(i,d)$, let Assumptions \ref{assumption:linear} and  \ref{assump:covariates} hold. 
Then, conditioned on $\Ec_\varphi = \Ec \cup \{\varphi_k: k \in [K]\}$, we have for all $k$, 
\begin{align}
    \Ex[X_{ki} \,| \, \Ec_\varphi] = \sum_{j \in \Ic^{(d)}} w_j^{(i,d)} \cdot \Ex[X_{kj} \,|\, \Ec_\varphi], 
\end{align}
where $w^{(i,d)}$ defined as in Assumption~\ref{assumption:linear}.
\end{proposition}  

\begin{proof} %{Proof of Proposition~\ref{prop:covariates}.} 
Plugging Assumption \ref{assumption:linear} into Assumption \ref{assump:covariates} completes the proof.
\end{proof}

% description 
\subsection{Description of Estimator with Covariates} 
Proposition \ref{prop:covariates} suggests a natural modification to step (i) of the \SI~estimator. 
Similar to the estimation procedure of \cite{abadie2}, we propose to append the covariates to the pre-treatment outcomes. 
Formally, let $X_i = [X_{ki}] \in \Rb^{K}$ denote the vector of covariates associated with the unit $i$; 
analogously, let  $\bX_{\Ic^{(d)}} = [X_{kj}: j \in \Ic^{(d)}] \in \Rb^{K \times N_d}$
denote the matrix of covariates for the units in $\Ic^{(d)}$. 
Define $Z_i = [Y_{\pre, i}^\top, X_i^\top]^\top \in \Rb^{T_0 + K}$ 
and 
$\bZ_{\Ic^{(d)}} = [\bY^\top_{\pre, \Ic^{(d)}}, \bX^\top_{\Ic^{(d)}}]^\top \in \Rb^{(T_0 + K) \times N_d}$ 
as the concatenation of pre-treatment outcomes and covariates for unit $i$ and subgroup $\Ic^{(d)}$, respectively. 
The \SI~estimator then proceeds by learning $\hw^{(i,d)}$ from $(Z_i, \bZ_{\Ic^{(d)}})$ and continues with step (ii) as described in Section~\ref{ssec:method}.

%% 
%We denote the SVD of $\bZ_{\Ic^{(d)}}$ as 
%\begin{align}
%    \bZ_{\Ic^{(d)}} &= \sum_{\ell=1}^M \hlambda_\ell \hmu_\ell \hnu_\ell^\top, 
%\end{align}
%where $M = \min\{(T_0 + K), N_d\}$, $\hlambda_\ell \in \Rb$ are the singular values (arranged in decreasing order), and $\hmu_\ell \in \Rb^{T_0+K}, \hnu_\ell \in \Rb^{N_d}$ are the left and right singular vectors, respectively. 
%%
%% $\bZ_{\Ic^{(d)}} = [\bY_{\pre, \Ic^{(d)}}, \bX_{\Ic^{(d)}}] \in \Rb^{(T_0 + K) \times N_d}$, where
%% $\bX_{\Ic^{(d)}} = [X_{kj}: j \in \Ic^{(d)}] \in \Rb^{K \times N_d}$. 
%% %
%% Let $\bZ_{\Ic^{(d)}}$ admit the following SVD: 
%% \begin{align}
%%     \bZ_{\Ic^{(d)}} &= \sum_{\ell=1}^M \hlambda_\ell \hmu_\ell \hnu_\ell^\top, 
%% \end{align}
%% where $M = \min\{(T_0 + K), N_d\}$, $\hlambda_\ell \in \Rb$ are the singular values (arranged in decreasing order), and $\hmu_\ell \in \Rb^{T_0+K}, \hnu_\ell \in \Rb^{N_d}$ are the left and right singular vectors, respectively. 
%% %
%% Furthermore, let $Z_i = [Y_{\pre,n}, X_i] \in \Rb^{T_0 + K}$, where $X_i = [X_{ki}] \in \Rb^{K}$. 
%%
%Then, we define the modified model parameter estimator as
%\begin{align}
%    \hw^{(i,d)} = \Big( \sum_{\ell=1}^k (1/\hlambda_\ell) \hnu_\ell \hmu_\ell^\top \Big) Z_i. 
%\end{align}
%%
%The remainder of the algorithm, as described in Section \ref{ssec:method}, proceeds as is 
%with the new estimate $\hw^{(i,d)}$. 

% Theorey 
\subsection{Theoretical Implications} 
Consider the \SI-\PCR~estimator with covariates. 
Then, it can be verified that Theorem~\ref{thm:consistency} continues to hold with $T_0$ replaced with $T_0 + K$. 
Therefore, adding covariates into the model-learning stage augments the data size, which can improve the estimation rates.

% results now with $T_0 + K$ rather than $T_0$ 
% adding covariates augments ``training'' data 
% condition on $\Ec_\varphi$
% redefine all assumptions by replacing $\bY_{\pre, \Ic^{(d)}}$ with $\bZ_{\Ic^{(d)}}$ 

\section{Causal Inference and Tensor Completion}\label{sec:tensor_framework} 
We elaborate on our connection between causal inference and tensor completion. 
Recall Assumption~\ref{assumption:sutva}, which connects our observed tensor, $\bY$, with the underlying potential outcomes tensor, $\bY^*$. 
As elucidated in Section~\ref{sec:ci.tc}, the fundamental task in causal inference of estimating unobserved potential outcomes can be achieved through tensor completion. 
Thus, as suggested by Figure \ref{fig:tensor.obs}, different observational and experimental studies can be equivalently posed as tensor completion under different sparsity patterns. 

%In what follows, Section~\ref{sec:low_rank_tensor_factor_models} shows how low-rank tensor factor models that are ubiquitous in the tensor completion literature can guide algorithmic design for causal inference.
%%%
%In Section \ref{sec:new_tensor_toolkit}, we pose the algorithmic advances that are required from the tensor completion literature in order to more directly connect it with causal inference. 

% TFM 
\subsection{Tensor Factor Models and Algorithm Design}\label{sec:low_rank_tensor_factor_models}
Recall that the factorization given by \eqref{eq:form.0} in Assumption \ref{assumption:form} is implied by the factorization assumed by a low-rank tensor given by \eqref{eq:tensor.fm}.
In particular, Assumption \ref{assumption:form} does not require the additional factorization of the time-treatment factor $u^{(d)}_t$ as $ u_t \circ \lambda_d$, where $\circ$ denotes the entry-wise Hadamard product. 
Hence, we ask whether it is feasible to design estimators that exploit an implicit factorization of $u^{(d)}_t =  u_t \circ \lambda_d $.

Indeed, a recent follow-up work of \cite{squires2021causal} finds that a variant of \SI~that regresses across treatments as opposed to units leads to better empirical results in the setting of single cell therapies. 
At the same time, \cite{christina_tensor} directly exploits the tensor factor structure in  \eqref{eq:tensor.fm} to provide max-norm error bounds for tensor completion under a missing completely at random (MCAR) setting. 
We leave as an open question whether one can exploit the latent factor structure in  \eqref{eq:tensor.fm} over that in Assumption \ref{assumption:form} to operate under less stringent causal assumptions or data requirements and/or identify and estimate different causal estimands.

% TOOLKIT 
\subsection{A New Tensor Completion Toolkit for Causal Inference}\label{sec:new_tensor_toolkit}
The tensor completion problem provides an expressive formal framework for a large number of emerging applications. 
In particular, this literature quantifies the number of samples required and the computational complexity of different estimators to achieve theoretical guarantees for a given error metric (e.g., Frobenius-norm error over the entire tensor). 
Indeed, this trade-off is of central importance in the emerging sub-discipline at the interface of computer science, statistics, and machine learning. 
Given our preceding discussions connecting causal inference with tensor completion, a natural question to ask is whether we can apply the current toolkit of tensor completion methods to understand statistical and computational trade-offs in causal inference.  

We believe a direct transfer of tensor completion techniques and analyses to causal inference problems is not immediately possible in general for a couple of reasons. 
First, most tensor completion results assume MCAR entries. 
As seen in Figure \ref{fig:tensor.obs}, %\ref{fig:observations}, 
causal inference settings often induce missing not at random (MNAR) sparsity patterns. 
Second, typical tensor completion results hold for the Frobenius-norm error across the entire tensor. 
By contrast, a variety of meaningful causal estimands require more refined error metrics over subsets of the tensor.

Hence, we pose two important and related open questions:  
(i) what block sparsity patterns and structural assumptions on the potential outcomes tensor allow for faithful recovery with respect to a meaningful error metric for causal inference, 
and (ii) if recovery is possible, what are the fundamental statistical and computational trade-offs that are achievable? 
An answer to these questions can aid in bridging causal inference with tensor completion, as well as computer science and machine learning more broadly.

% INFERENCE
\section{Inference} \label{sec:proof.normality}

% ASYMPTOTIC NORMALITY
\subsection{Asymptotic Normality} 
We show how asymptotic normality can be established under our operating assumptions. 

% theorem
\begin{theorem} \label{thm:normality}
Given $(i,d)$, let the setup of Theorem \ref{thm:consistency} hold. Let $ \tw^{(i,d)}$ be defined as in \eqref{eq:wtilde} and suppose we condition on $\cE$. 
Let the following conditions hold:
(i) $T_0,  \numdonors, T_1 \to \infty$,
(ii) 
\begin{align} \label{eq:cond.ii}
    \frac{ r^{3/2}_\epre \sqrt{\log(T_0 \numdonors)}}{\| \tw^{(i,d)}\|_2 \cdot \min\{T_0,\numdonors, T^{1/4}_0 \numdonors^{1/2}\} }  = o(1),
\end{align}
and (iii)
\begin{align} \label{eq:cond.iii}
\frac{1}{\sqrt{T_1} \|\tw^{(i,d)}\|_2} \sum_{t > T_0}  \left \langle \Ex[Y_{t, \Ic^{(d)}}],  \bV_\epre \bV_\epre^\top \left(\hw^{(i,d)} - \tw^{(i,d)} \right) \right \rangle &= o_p(1),
\end{align}
where $\hw^{(i,d)}$ is given by \eqref{eq:si.linear_model}. %$\Ex[Y_{t, \Ic^{(d)}}] = [\Ex[Y_{tjd}]: j \in \Ic^{(d)}] \in \Rb^{\numdonors}$. 
Then, we have  
\begin{align}\label{eq:gaussian.limit}
\frac{\sqrt{T_1}}{\sigma \|\tw^{(i,d)}\|_2}  \Big(\htheta_i^{(d)} - \theta_i^{(d)}\Big) 
&~\xrightarrow{d}~ \mathcal{N} \left(0, 1\right). 
\end{align}
Here, $O_p(\cdot)$ is defined with respect to the sequence $\min\{N_d, T_0\}$.
\end{theorem}

A few remarks are in order. 
First, as opposed to Theorem~\ref{thm:consistency} that establishes consistency under a fixed $T_1$, Theorem~\ref{thm:normality} proves asymptotic normality with a growing $T_1$. 
Second, we note that \eqref{eq:cond.ii} can be equivalently stated as 
$$ \| \tw^{(i,d)}\|_2 \gg \left((r^{3/2}_\pre \sqrt{\log(T_0 \numdonors)}) / \min\{T_0,\numdonors, T^{1/4}_0 \numdonors^{1/2}\} \right),$$
i.e., we require the $\ell_2$-norm of $\tw^{(i,d)}$ to be sufficiently large.
Next, for \eqref{eq:cond.iii} to hold, we require that the estimation error between $\hw^{(i,d)}$ and $\tw^{(i,d)}$, projected onto the rowspace of $\Ex[\bYpred | \cE]$, to be vanishing sufficiently quickly. 
Finally, by \eqref{eq:gaussian.limit}, we see that any formulation of the \SI~estimator in which \eqref{eq:cond.ii} and \eqref{eq:cond.iii} simultaneously hold leads to an estimate that is asymptotically normal around the causal estimand. 
In light of this, we describe one such formulation. 

% ESTIMATOR
\subsection{Modified \SI-\PCR~Estimator} \label{sec:mod.pcr}
We present a slight modification to the \SI-\PCR~estimator and show that this strategy obeys our desired condition in \eqref{eq:cond.iii}.

% description 
\subsubsection{Description of Estimator}  \label{sec:mod.pcr.algo}
Recall the notation set in Section~\ref{sec:algo}. 
At a high-level, the modified estimator proceeds in the exact same manner as described in Section~\ref{ssec:method} with one change to \eqref{eq:pcr} of step (i). 

Let $\bYpred^k = \sum_{\ell=1}^k \hs_\ell \hu_\ell \hv_\ell^\top$ denote the rank-$k$-approximation of $\bYpred$.
We then construct a sub-sampled matrix that retains the columns of $\bYpred^k$ within an index set $\Omega \subset \cI^{(d)}$. 
We denote the resulting matrix as $\bYpred^{(k,\Omega)} \in \Rb^{T_0 \times |\Omega|}$. 
Next, we replace \eqref{eq:pcr} with 
\begin{align}
	\hw^{(i,d,\Omega)} = \left( \bYpred^{(k, \Omega)} \right)^\dagger \Yipre \in \Rb^{|\Omega|}. \label{eq:pcr.variant}
\end{align}
%
%In words, $\bYpred^{(k, \Omega)}$ retains the de-noised pre-treatment outcomes associated with a subset of the units in $\Ic^{(d)}$; the target unit is then represented by a weighted average defined over $\hw^{(i,d,\Omega)}$ of these selected units. 
%
The estimation strategy continues as per step (ii) of Section~\ref{ssec:method} with $\hw^{(i,d,\Omega)}$ and $\bYpostd^\Omega \in \Rb^{T_1 \times |\Omega|}$, which is defined as the sub-matrix of $\bYpostd$ that retains the columns within $\Omega$. 

% theory
\subsubsection{Theoretical Implications} 
Let $\bV_{\cI^{(d)}} = [v_j: j \in \cI^{(d)}] \in \Rb^{\numdonors \times r}$.
We define $\bV_{\cI^{(d)}}^\Omega \in \Rb^{|\Omega| \times r}$ as the sub-sampled matrix that retains the rows of $\bV_{\cI^{(d)}}$ within $\Omega$. 
To adapt the results in Section~\ref{sec:formal} to our new setting, note that if $\Omega$ is chosen such that $\text{span}(\{v_j: j \in \Omega\})$ is an $r_\pre$-dimensional subspace of $\Rb^{\numdonors}$ (i.e., $\rank(\bV_{\cI^{(d)}}^\Omega) = r_\pre$), then there exists a $w^{(i,d,\Omega)} \in \Rb^{|\Omega|}$ such that 
\begin{align}
	v_i = \bV_{\cI^{(d)}}^\top \cdot w_j^{(i,d)} = \left(\bV_{\cI^{(d)}}^\Omega \right)^\top \cdot w_j^{(i,d,\Omega)}, \label{eq:image.compressed}
\end{align} 
where $w^{(i,d)} \in \Rb^{\numdonors}$ is defined as in Assumption~\ref{assumption:linear} (and Theorem~\ref{thm:identification}). 

Intuitively, our modified strategy anchors on the concept that only $r_\pre$ donors are needed to recreate our target unit $i$ since $\bV_{\cI^{(d)}}$ has rank $r_\pre$; hence, it suffices to use a subset of donors defined over $\Omega$ rather than utilizing all $N_d$ donor units, provided $\Omega$ is chosen such that \eqref{eq:image.compressed} holds. 
With this intuition, we are now ready to present the main result of this section. 
Below, let $\bMpred = \Ex[\bYpred | \mathcal{E}]$ and define $\bMpred^{\Omega} \in \Rb^{T_0 \times |\Omega|}$ as the sub-sampled matrix that retains the columns of $\bMpred$ within $\Omega$. 

% LEMMA 
\begin{lemma} \label{lemma:cond.iii.compressed}
Given $(i,d)$, let Assumptions~\ref{assumption:sutva} to \ref{assumption:subspace} hold. 
Consider the modified \SI-\PCR~estimator with $\hw^{(i,d,\Omega)}$ and $w^{(i,d,\Omega)}$ as defined as in \eqref{eq:pcr.variant} and \eqref{eq:image.compressed}, respectively.
Suppose $k$ and $\Omega$ are chosen such that for appropriate constant $C$, we have
\begin{itemize}
	\item[(i)] $k = r_\epre = \rank(\bMepred)$, 
	\item[(ii)] $|\Omega| = r_\epre$ with $\rank(\bYepred^{(r_\epre, \Omega)}) = r_\epre$ and $\rank(\bMepred^\Omega) = r_\epre$,
	\item[(iii)] Assumption~\ref{assumption:spectra} holds with respect to $\bMepred^\Omega$ and $s^\Omega_{r_\epre} \ge C \sigma \sqrt{T_0}$,
	\item[(iv)] $N_d < T_0$,
        \item[(v)] $\| w^{(i,d,\Omega)}) \|_2 \ge C r^{-1/2}_\epre$,
        \item[(vi)] $r^2_\epre = o\left( \frac{\min\{T_0,\numdonors, T^{1/4}_0 \numdonors^{1/2}\}}{\sqrt{\log(T_0 \numdonors)}} \right)$,
        \item[(vii)] $T_1 = o\left( \min\left\{ \frac{N_d}{\log(T_0 N_d) r^4_\epre}, \frac{T^{1/2}_0}{r^2_\epre}  \right\} \right)$. 
\end{itemize}
Then conditioned on $\cE$ and $\Omega$, we have  
\begin{align} \label{eq:cond.ii_PCR_variant}
    \frac{ r^{3/2}_\epre \sqrt{\log(T_0 \numdonors)}}{\| w^{(i,d,\Omega)}\|_2 \cdot \min\{T_0,\numdonors, T^{1/4}_0 \numdonors^{1/2}\} }  = o(1),
\end{align}
and 
\begin{align} \label{eq:cond.iii_PCR_variant}
\frac{1}{\sqrt{T_1} \|w^{(i,d,\Omega)}\|_2} \sum_{t > T_0}  \left \langle \Ex[Y^\Omega_{t, \Ic^{(d)}}],  \left( \bM^\Omega_{\epre, \Ic^{(d)}} \right)^\dagger  \bM^\Omega_{\epre, \Ic^{(d)}}\left(\hw^{(i,d,\Omega)} - w^{(i,d,\Omega)} \right) \right \rangle &= o_p(1), 
\end{align}
where $\Ex[Y^\Omega_{t, \Ic^{(d)}}] = [\Ex[Y_{tjd}]: j \in \Omega] \in \Rb^{r_\epre}$ and $O_p(\cdot)$ is defined with respect to $\numdonors$. 
\end{lemma} 
%
%We note that $(\bM^\Omega_{\pre, \Ic^{(d)}})^\dagger  \bM^\Omega_{\pre, \Ic^{(d)}}$ is exactly the projection onto the row space of $\bM^\Omega_{\pre, \Ic^{(d)}}$, analogous to how $\bV_\pre \bV_\pre^\top$ is defined in \eqref{eq:cond.iii}.

Lemma~\ref{lemma:cond.iii.compressed} states that the modified \SI-\PCR~estimator satisfies conditions \eqref{eq:cond.ii} and \eqref{eq:cond.iii} of Theorem \ref{thm:normality} to achieve asymptotic normality. 
This enables us to conduct valid inference via the following confidence interval: 
for $\gamma \in (0,1)$, 
\begin{align} \label{eq:finite_sample_CI}
	\mathcal{CI}(\gamma) = \left[ \htheta_i^{(d, \Omega)} \pm \frac{z_{\gamma/2} \cdot \widehat{\sigma}^\Omega  \cdot \|\hw^{(n, d, \Omega)}\|_2}{\sqrt{T_1}} \right], 
\end{align}
where $z_{\gamma/2}$ is the upper $\gamma/2$ quantile of the standard Normal and 
\begin{align}
	\hsigma^\Omega = \frac{1}{T_0} \| \Yipre - \bYpred^{(r, \Omega)} \cdot \hw^{(i,d,\Omega)} \|_2^2.
\end{align}  
We study properties of the proposed confidence interval below. 

% SIMULATIONS
\subsubsection{Simulation Study} 
This section presents a simulation to study the coverage properties of the confidence interval in \eqref{eq:finite_sample_CI} based on the modified \SI-\PCR~estimator. 

% DGP 
\paragraph{Data generating process.} 
We vary $T_0 \in \{200, 400, 600, 800, 1000\}$ and choose $r = 5$. 
Motivated by our conditions in Lemma~\ref{lemma:cond.iii.compressed}, we choose $N_d = T_0/2$ and $T_1 = \sqrt{T_0}$ for each $T_0$. 

% latent factors 
\bigskip                                                                                                                                                                                                                                                                  
\noindent {\em Latent factors.} 
As in Section~\ref{sec:sim.dgp}, we generate the latent factors associated with the donor units, $\bV_{\cI^{(d)}} \in \Rb^{N_d \times r}$, by independently sampling its entries from a standard Normal. 
Then, we sample $w^{(i,d)} \in \Rb^\numdonors$ by drawing its entries from a uniform distribution over $[0,1]$ and normalizing it to have unit norm. 
We define the target unit latent factor as $v_i = \bV_{\cI^{(d)}}^\top \cdot w^{(i,d)}$. 

We define the latent factors for  the pre-treatment outcomes under control as $\bU^{(0)}_\pre$, where its entries are i.i.d. samples from a standard Normal. 
Next, we sample $\bPhi \in \Rb^{T_1 \times r}$ whose entries are i.i.d. draws from a uniform distribution over $[0, 1]$. 
We construct the post-treatment latent factors of treatment $d$ as $\bU_\post^{(d)} = \bPhi \cdot (\bU_\pre^{(0)})^\dagger \bU^{(0)}_\pre$. 
We define our causal estimand as $\theta_i^{(d)} = \textsf{average}(\bU_\post^{(d)} \cdot v_i )$.

% observations
\bigskip
\noindent {\em Observations.}
Let  
(i) $\Yipre = \bU^{(0)}_\pre \cdot v_i + \varepsilon_{\pre, i}$; 
(ii) $\bYpred =\bU^{(0)}_\pre \cdot \bV^\top_{\cI^{(d)}} + \bE_{\pre, \Ic^{(d)}}$; 
and
(iii) $\bYpostd = \bU^{(d)}_\post \cdot \bV^\top_{\cI^{(d)}} + \bE_{\post, \Ic^{(d)}}$,
where the entries of $(\varepsilon_{\pre, i}, \bE_{\pre, \Ic^{(d)}}, \bE_{\post, \Ic^{(d)}} )$ are i.i.d. samples from a standard normal. 

% Results
\paragraph{Simulation results.} 
From  $(\Yipre, \bYpred)$, we learn $\hw^{(i,d,\Omega)}$ via the modified \SI-\PCR~estimator, where $\Omega$ is chosen to index the first $r$ donors of $\cI^{(d)}$; hence, $| \Omega | = r$. 
We define our estimate as $\htheta_i^{(d, \Omega)} = \textsf{average}( \bYpostd^{\Omega} \cdot \hw^{(i,d,\Omega)} )$ and construct its corresponding confidence interval as per \eqref{eq:finite_sample_CI}. 
Table~\ref{table:coverage} reports on the empirical coverage probabilities and average interval lengths for $\theta_i^{(d)}$ at each $T_0$ over $50$ trials, where each trial consists of an independent draw of the latent factors and the observations of size $100$; this amounts to $5000$ total simulation repeats per dimension. 
As suggested by Lemma~\ref{lemma:cond.iii.compressed}, we observe that our confidence interval consistently achieves close to the nominal targets of $90\%$ and $95\%$ across all values of $T_0$, albeit the interval lengths can be large due to the poorer prediction quality compared to the standard \SI-\PCR~estimator of \eqref{eq:pcr}. 

\begin{table} [!t]
\centering 
\caption{Coverage results for the modified \SI-\PCR~estimator across 5000 replications.} 
\begin{tabular}{@{}lcccc@{}}
\toprule
\midrule 
\multirow{2}{*}{Size of $\Tc_\pre$}
& \multicolumn{2}{c}{$90\%$ nominal coverage}   
& \multicolumn{2}{c}{$95\%$ nominal coverage}   
\\ 
\cmidrule(l){2-3} 
\cmidrule(l){4-5} 
& Coverage probability  & Interval length 
& Coverage probability  & Interval length 
\\
\midrule 
$T_0 = 200$   	 
& 0.88	& 15.66
& 0.94	& 18.66	 
\\
\midrule
$T_0 = 400$      	 
& 0.88	& 18.90
& 0.94	& 22.52
\\
\midrule
$T_0 = 600$      	 
& 0.88	& 8.90
& 0.94	& 10.61
\\
\midrule
$T_0 = 800$    	 
& 0.89	& 7.25
& 0.94	& 8.64 
\\
\midrule
$T_0 = 1000$     	 
& 0.90	& 15.58
& 0.95	& 18.57 
\\
\bottomrule
\end{tabular}
\label{table:coverage}
\end{table}

% PROOF
\subsection{Proof of Theorem \ref{thm:normality}} 
\begin{proof} %{Proof of Theorem~\ref{thm:normality}.}
For ease of notation, we suppress the conditioning on $\Ec$ for the remainder of the proof. 
We scale the left-hand side (LHS) of \eqref{eq:inf.1} by $\sqrt{T_1}/ (\sigma \|\tw^{(i,d)}\|_2)$ and 
analyze each of the three terms on the right-hand side (RHS) of \eqref{eq:inf.1} separately.  
To address the first term on the RHS of \eqref{eq:inf.1}, observe that \eqref{eq:cond.iii} immediately gives
\begin{align} \label{eq:normality.1} 
    \frac{1}{\sqrt{T_1}\sigma \|\tw^{(i,d)}\|_2} \sum_{t > T_0}  \langle \Ex[Y_{t, \Ic^{(d)}}], \Pc_{V_\pre} \Delta^{(i,d)} \rangle
    = o_p(1). 
\end{align}
For the second term on the RHS of \eqref{eq:inf.1}, note that $\langle \varepsilon_{t, \Ic^{(d)}},  \tw^{(i,d)} \rangle$ are independent across $t$.
Note that $ \langle \varepsilon_{t, \Ic^{(d)}}, \tw^{(i,d)} / (\sigma \|\tw^{(i,d)}\|_2)  \rangle$ is a sub-Gaussian random variable with variance equal to $1$ and sub-Gaussian norm less than some constant $C > 0$.
It is easy to verify that the Lyapunov condition \cite{Billingsley} for $\delta = 1$ holds, and hence the Lyapunov Central Limit Theorem yields
\begin{align} \label{eq:normality.2} 
    \frac{1}{\sqrt{T_1}} \sum_{t > T_0} \Big\langle \varepsilon_{t, \Ic^{(d)}}, \frac{\tw^{(i,d)}}{\sigma \|\tw^{(i,d)}\|_2} \Big\rangle
    \xrightarrow{d} 
    \mathcal{N}\big( 0,  1\big).%\sigma^2 \text{plim} \|\tw^{(i,d)}\|_2^2 \big). 
\end{align}
as $T_1 \to \infty$.

For the third term on the RHS of \eqref{eq:inf.1}, we scale the RHS of \eqref{eq:consistency.3} by $\sqrt{T_1}/ (\sigma \|\tw^{(i,d)}\|_2)$ and recall \eqref{eq:cond.ii}. 
This yields 
\begin{align} \label{eq:normality.3} 
    \frac{1}{\sqrt{T_1} \sigma \|\tw^{(i,d)}\|_2 } \sum_{t > T_0} \langle \varepsilon_{t, \Ic^{(d)}}, \Delta^{(i,d)}  \rangle
    &= o_p(1).
\end{align}
Collecting \eqref{eq:normality.1}, \eqref{eq:normality.2}, \eqref{eq:normality.3}, we establish \eqref{eq:gaussian.limit}.

\end{proof}

% PROOF 
\subsection{Proof of Lemma~\ref{lemma:cond.iii.compressed}} 
We state the proof of Lemma~\ref{lemma:cond.iii.compressed}, which is an adaptation of the proof of Lemma~\ref{lemma:pcr_aos2} in Appendix~\ref{sec:consistency.1}.
 
\medskip 
\begin{proof} %{Proof of Lemma~\ref{lemma:cond.iii.compressed}.}
Below we establish \eqref{eq:cond.ii_PCR_variant} and \eqref{eq:cond.iii_PCR_variant} hold.

\begin{itemize}

\item[(a)] {\em Proof of \eqref{eq:cond.ii_PCR_variant}.} 
Follows from conditions (v) and (vi) in the lemma statement. 

\item[(b)] {\em Proof of \eqref{eq:cond.iii_PCR_variant}.} 
By condition (ii), the rowspaces of both $\bYpred^{(r_\pre, \Omega)}$ and $\bMpred^\Omega$ span all of $\Rb^{r_\pre}$.
Hence, the orthogonal projection matrices onto their rowspaces obey 
\begin{align}
	\left( \bYpred^{(r_\pre, \Omega)}\right)^\dagger   \bYpred^{(r_\pre, \Omega)} = \left(\bMpred^\Omega \right)^\dagger \bMpred^\Omega = \bI_{r_\pre}, \label{eq:projection.1} 
\end{align}
where $\bI_{r_\pre}$ denotes the $r_\pre \times r_\pre$-dimensional identity matrix.  
%
% In turn, we note that, within our context, $\tw^{(i,d)}$ as defined in \eqref{eq:wtilde} is equivalent to $w^{(i,d,\Omega)}$, which yields condition \eqref{eq:cond.iii} of Theorem~\ref{thm:normality} to be equivalent to  
% %
% \begin{align}
% 	\sqrt{T_1} \cdot \text{LHS of \eqref{eq:cond.iii}} &= \frac{\sqrt{r_\pre} \cdot \| \hw^{(i,d,\Omega)} - w^{(i,d,\Omega)} \|_2} {\| w^{(i,d,\Omega)}) \|_2}. \label{eq:cond.iii.compressed} 
% \end{align} 
%
The remainder of this proof is dedicated to bounding $\Delta^{(i,d,\Omega)} = \| \hw^{(i,d,\Omega)} - w^{(i,d,\Omega)} \|_2$ using the proof of Lemma~\ref{lemma:pcr_aos2} in Appendix~\ref{sec:consistency.1} as a guide. 

Using \eqref{eq:projection.1} and the arguments that led to \eqref{eq:thm1.2}, we obtain 
\begin{align} 
 \| \Delta^{(i,d,\Omega)}  \|_2^2 
& \leq \frac{4}{(\hs^\Omega_{r_\pre})^2} \left( \| \bYpred^{(r_\pre, \Omega)} - \bMpred^\Omega \|_{2,\infty}^2 \cdot \|\tw^{(i,d,\Omega)}\|_1^2
+\langle \bYpred^{(r_\pre, \Omega)} \cdot \Delta^{(i,d,\Omega)}, \varepsilon_{\pre, n} \rangle\right). \label{eq:compressed.1} 
\end{align}
where $\hs^\Omega_{k}$ is defined as the $k$-th singular value of $\bYpred^{(r_\pre, \Omega)}$ while $\varepsilon_{\pre,n}$ maintains the same definition as set in Appendix~\ref{sec:proof_consistency}.
By construction, it follows that 
\begin{align}
	\| \bYpred^{(r_\pre, \Omega)} - \bMpred^\Omega \|_{2,\infty}^2 \le \| \bYpred^{r_\pre} - \bMpred \|_{2,\infty}^2 \label{eq:compressed.2} 
\end{align} 
since $\bYpred^{(r_\pre, \Omega)}$ and $\bMpred^\Omega$ are sub-matrices of $\bYpred^{r_\pre}$ and $\bMpred$, respectively. 

Let $s^\Omega_{k}$ denote the $k$-th singular value of $\bMpred^\Omega$. 
Further, without loss of generality, suppose $\Omega$ corresponds to the first $r_\pre$ entries of $\cI^{(d)}$. 
This allows us to write 
\begin{align}
	\bYpred^{(r_\pre, \Omega)} = \bYpred^{r_\pre} \bA,
	\qquad
	\bMpred^\Omega = \bMpred \bA, \label{eq:transformation}
\end{align} 
where $\bA = [\bI_{r_\pre}, 0_{r_\pre \times (\numdonors-r_\pre)}]^\top$. 
Combining \eqref{eq:transformation} with Lemma~\ref{lemma:weyl}, we have 
\begin{align} 
	| \hs^\Omega_{r_\pre} - s^\Omega_{r_\pre} | &\le \| \bYpred^{(r_\pre, \Omega)} - \bMpred^\Omega \|_\txtop 
	\\ &= \| (\bYpred^{r_\pre} - \bMpred ) \cdot \bA \|_\txtop
	\\ &\le \| \bYpred^{r_\pre} - \bMpred  \|_\txtop  &&\because \| \bA \|_\txtop = 1
	\\ &\le \| \bYpred^{r_\pre} - \bYpred \|_\txtop + \| \bYpred - \bMpred \|_\txtop 
	\\ &= \hs_{r_\pre + 1} + \| \bYpred - \bMpred \|_\txtop, \label{eq:s.compressed.1} 
\end{align} 
where we recall $\hs_k$ is the $k$-th singular value of $\bYpred$. 
Let us define $s_k$ analogously to $\hs_k$ with respect to $\bMpred$.  
By condition (i), $s_{r_\pre+1} = 0$ and thus 
\begin{align}
	\hs_{r_\pre+1} &= \hs_{r_\pre+1} - s_{r_\pre+1} 
	\\ &\le \| \bYpred - \bMpred \|_\txtop. &&\because \text{Lemma~\ref{lemma:weyl}}  \label{eq:s.compressed.2} 
\end{align} 
Applying Assumption \ref{assumption:noise} and Lemma \ref{lemma:subg_matrix} to \eqref{eq:s.compressed.1} and \eqref{eq:s.compressed.2} yields w.p. at least $1-2\exp(-\zeta^2)$, 
\begin{align}
	| \hs^\Omega_{r_\pre} - s^\Omega_{r_\pre} | &\le 2 \cdot \| \bYpred - \bMpred \|_\txtop
	\\ &\le C \sigma \left( \sqrt{T_0} + \sqrt{\numdonors} + \zeta \right) \label{eq:s.compressed.3}
\end{align}  
for any $\zeta > 0$. 
Next, we follow the arguments that led to \eqref{eq:dd.1}.
In particular, we plug \eqref{eq:compressed.2} and \eqref{eq:s.compressed.3} into \eqref{eq:compressed.1}, and invoke Lemmas~\ref{lemma:coeff_bound}, \ref{lemma:hsvt}, \ref{lemma:annoying}, and conditions (iii)--(iv) to conclude
\begin{align} \label{eq:new.pcr.rate}
	  \Delta^{(i,d,\Omega)}  &= O_p \left( 
	  \frac{r_\pre \sqrt{ \log(T_0 N_d)}  }{ \sqrt{N_d}} 
	  + \frac{1}{T_0^{1/4}} \right). 
\end{align} 
Then,
\begin{align}
&\frac{1}{\sqrt{T_1} \|w^{(i,d,\Omega)}\|_2} \sum_{t > T_0} \left \langle \Ex[Y^\Omega_{t, \Ic^{(d)}}],  \left( \bM^\Omega_{\pre, \Ic^{(d)}} \right)^\dagger  \bM^\Omega_{\pre, \Ic^{(d)}}\left(\hw^{(i,d,\Omega)} - w^{(i,d,\Omega)} \right) \right \rangle,
\\ &\quad \le \frac{1}{\sqrt{T_1} \|w^{(i,d,\Omega)}\|_2} \sum_{t > T_0} \| \Ex[Y^\Omega_{t, \Ic^{(d)}}] \|_2 \cdot \| \Delta^{(i,d,\Omega)} \|_2,
\\ &\quad\le \frac{1}{\sqrt{T_1} \|w^{(i,d,\Omega)}\|_2} \sum_{t > T_0} \sqrt{r_\pre} \cdot \| \Delta^{(i,d,\Omega)} \|_2,
%
% \\ &\quad= O_p \left( \frac{1}{\sqrt{T_1} \|w^{(i,d,\Omega)}\|_2} \sum_{t > T_0}  \frac{r^{3/2}_\pre \sqrt{ \log(T_0 N_d)}  }{ \sqrt{N_d}} + \frac{\sqrt{r_\pre}}{T_0^{1/4}} \right)
% %
% \\ &\quad= O_p \left( \frac{1}{\sqrt{T_1}} \sum_{t > T_0}  \frac{r^{2}_\pre \sqrt{ \log(T_0 N_d)}  }{ \sqrt{N_d}} + \frac{r_\pre}{T_0^{1/4}} \right)
%
\\ &\quad= O_p \left( \frac{ \sqrt{T_1} r^{2}_\pre \sqrt{ \log(T_0 N_d)}  }{ \sqrt{N_d}} + \frac{ \sqrt{T_1} r_\pre}{T_0^{1/4}} \right), 
\\ &\quad= o_p(1),
\end{align}
where the second to last line follows from \eqref{eq:new.pcr.rate} and condition (v), and the last line follows from condition (vii).
\end{itemize}
\end{proof}

\end{appendix}

\end{document}